\newcommand{\w}{\mathit{val}}
\newcommand{\img}{\mathtt{img}}
\newcommand{\game}{\mathcal{G}}
\newcommand{\C}{\mathcal{C}}
\newcommand{\R}{\mathcal{R}}
\renewcommand{\d}{\mathbf{d}}
\newcommand{\val}{\mathit{eval}}
\newcommand{\A}{\mathcal{A}}
\newcommand{\p}{\mathbf{p}}
\renewcommand{\w}{\mathbf{w}}
\renewcommand{\v}{\mathbf{v}}
\renewcommand{\t}{\mathbf{t}}
\renewcommand{\img}{\mathtt{img}}
\renewcommand{\val}{\mathtt{val}}
\newcommand{\opt}{\mathtt{opt}}
\newcommand{\marg}{\mathtt{marg}}
\newcommand{\best}{\mathtt{best}}
\newcommand{\nodes}{\mathtt{agents}}
\newcommand{\dom}{\mathtt{dom}}
\newcommand{\D}{\mathbf{D}}
\newcommand{\norm}{\mathtt{one\mbox{-}good}}
\newcommand{\omeganorm}{\omega\mathtt{\mbox{-}good}}
\newcommand{\mydot}[2]{{#1},{#2}}
\newcommand{\onC}[1]{#1} %#1\mbox{\scriptsize \hspace{-0.5mm}${/\C}$}}
\newcommand{\onCi}[1]{#1} %\mbox{\scriptsize \hspace{-0.5mm}${/(\C\hspace{-0.5mm}\setminus\hspace{-0.5mm}\{i\}})$}}
\newcommand{\sv}{\phi}
\newcommand{\mymath}[1]{\mathds{#1}}
\newcommand{\hmymath}[1]{\hat{\mathds{#1}}}
\newcommand{\tmarg}{\mbox{{\tiny \it marg}}}
\newcommand{\tbest}{\mbox{\tiny \it best}}
\newcommand{\tuple}[1]{\langle#1\rangle}
\newcommand{\nop}[1]{}
\newcommand{\longv}[1]{}
\newcommand{\products}{\mathit{products}}
\newcommand{\score}{\mathit{score}}
\newcommand{\submitted}{\psi}
\theoremstyle{plain}
\newtheorem{theorem}{Theorem}[section]
\newtheorem{lemma}[theorem]{Lemma}
\newtheorem{proposition}[theorem]{Proposition}
\newtheorem{corollary}[theorem]{Corollary}
\newtheorem{fact}[theorem]{Fact}
\theoremstyle{definition}
\newtheorem{defn}[theorem]{Definition}
\newtheorem{example}[theorem]{Example}
\newenvironment{myitemize}{\begin{list}{$-$}{\setlength{\leftmargin}{13pt}
\setlength{\itemindent}{0.0\labelwidth}}} {\end{list}}
\begin{document}

\title{Mechanisms for Fair Allocation Problems:\\ No-Punishment Payment Rules in Fully Verifiable Settings}

%\title{Mechanisms with Verification for Fair Allocation:\\ {\small The Case of the Italian Research Assessment Programme 2004-2010}}

\author{\name Gianluigi Greco \email ggreco@mat.unical.it \\
        \addr Dipartimento di Matematica\\
              Universit\`a della Calabria\\
              I-87036 Rende, Italy
        \AND
        \name Francesco Scarcello \email scarcello@deis.unical.it \\
        \addr DEIS\\
              Universit\`a della Calabria\\
              I-87036 Rende, Italy
}

\maketitle \raggedbottom

\begin{abstract}
Mechanism design is addressed in the context of fair allocations of indivisible goods with monetary compensation.
Motivated by a real-world social choice problem, %arising in the latest Italian Research Assessment Program (2004--2010),
mechanisms with verification are considered in a setting where (\emph{i}) agents' declarations on allocated goods can be fully verified before
payments are performed, and where (\emph{ii}) verification is not used to punish agents whose declarations resulted in incorrect ones. Within this setting, a mechanism is designed
that is shown to be truthful, efficient, and budget-balanced, and where agents' utilities are fairly determined by the Shapley value of
suitable coalitional games. The proposed mechanism is however shown to be {\rm \#P}-complete. Thus, to deal with applications with many agents
involved, two polynomial-time randomized variants are also proposed: one that is still truthful and efficient, and which is approximately
budget-balanced with high probability, and another one that is truthful in expectation, while still budget-balanced and efficient.
\end{abstract}

%\keywords{Mechanism design, Shapley value, computational complexity}

\section{Introduction}

Whenever the outcome of some social choice process depends on the information collected from a number of self-interested agents, strategic
issues may come into play. Indeed, agents may find convenient to misreport their \emph{types}, i.e., the relevant information they own as their
private knowledge, so that the best possible solution cannot be achieved.
In these cases, \emph{mechanism design} techniques can be used as solution approaches, which augment combinatorial algorithms with appropriate
monetary payments, aimed at motivating all agents to truthfully report their private types~\cite<see, e.g., >{Vazirani2007,Shoham2009}.

On the class of social choice \emph{utilitarian} problems, agent types encode (monetary) valuations over the set of all possible solutions and
the goal is to compute a solution maximizing the \emph{social welfare}, i.e., the sum of agents' true evaluations. A prominent role in
mechanism design for problems of this class is played by the Vickrey-Clarke-Grove (VCG) paradigm~\cite{Vickery1961,Clarke1971,Groves1973},
which is a general method for designing \emph{truthful} mechanisms, i.e., mechanisms where truth-telling is a dominant strategy for each agent.
In particular, VCG mechanisms are \emph{efficient}. That is, they guarantee that a solution maximizing the social welfare is actually computed.
However, they are not \emph{budget-balanced}, i.e., the algebraic sum of the monetary transfers is not zero and mechanisms from this class can
run into deficit. In fact, this is a well-known drawback of VCG mechanisms~\cite<see, e.g., >{Archer2007}, but it is essentially the best one
can hope to do, given classical impossibility theorems~\cite{Green1977,Hurwicz1975}, stating that no truthful mechanism can be designed to be
always efficient and budget-balanced.

In many practical applications, as the one that inspired the present work, payments to agents can be performed \emph{after} the final outcome
is known, so that some kind of \emph{verification} on reported types might be possible. Mechanisms with verification have been introduced by
\citeA{Green1986} and subsequently studied in a number of papers. For instance, \citeA{Nisan2001} considered verification for a task scheduling
problem: We have some agents declaring the amount of time they need to solve each task, and the goal is to have all tasks being solved, by
minimizing the completion time of the last-solved one (hence, the make-span). In this context, payments are computed {after} the actual tasks
release times have been observed, so that we have, for instance, the ability to ``punish'' some agent whose declared ability has been
{verified} to be different than its actual performance in the process.

Note that the knowledge of the actual outcome represents a source of additional information to define payment rules, as it might partially
reveal agents' types. However, this additional power is not considered in the classical mechanism-design setting. In fact, whenever
verification is allowed, some (classical) impossibility results might no longer hold.

Compared to standard mechanisms~\cite<see, e.g.,>{Vazirani2007}, those with verification have received considerably less attention in the
literature \cite<see, e.g., >{Nisan2001,Auletta2009,Penna2009,Krysta2010,Ferrante2009}.
In particular, these works consider a verification ability that is \emph{partial}, in the sense that agents' (mis)reporting is restricted to
true types plus certain lies (e.g., values that are lower than the true ones), and that verification is focused to detect such lies only.
An extension of this basic model has been recently proposed by \citeA{Caragiannis2012}, who assume no a-priori restrictions on the agents'
reported types, within a setting where an agent cheating on her/his type will be caught with some probability that may depend on her/his true
type, the reported type, or both.
In fact, despite these different facets of the verification power, most of the mechanisms with verification proposed in the literature share
the idea of providing incentives to truthfully report private types by exploiting the intimidation of punishing those agents that will be
caught lying.
Moreover, while budget limits have been considered in some approaches \cite<see, e.g.,>{Nisan2001}, no mechanism with verification has been
designed to be budget-balanced, with the focus being on truthfulness and efficiency.

In the paper, we study mechanisms with verification from the same perspective as \citeA{Caragiannis2012}, in particular by considering a
setting where there is no restriction on the agents' reported types. However, differently from this work, we assume that verification is
\emph{deterministic}, i.e., all incorrect declarations on allocated goods will be eventually detected. This leads to a stronger verification
ability, which could be in principle used to easily enforce truthfulness by just punishing those agents whose declarations have been checked to
be incorrect ones.
However, such an approach would be hardly acceptable by agents in real-life applications, because a true punishment would require a clear proof
of a malevolent behavior. Indeed note that, in practice, possible discrepancies between agents' declarations and third-party verified values
may be due to many different reasons, in particular to the subjective judgment of the verifier. Based on this observation and motivated by a
real-world application domain, we will assume in the paper that only a limited use of such verification power can be made, and indeed that
mechanisms have to be designed which are \emph{not based on punishments}---while nonetheless resulting to be truthful, efficient, and even
{budget-balanced}.

\subsection{Mechanisms for Fair Division with Monetary Compensation}

We consider mechanisms with verification in the context of \emph{fair allocation} problems~\cite<see, e.g.,
>{Moulin2003,Young1994,William2011}.
We assume that a set of indivisible goods to be allocated to a set of agents is given. Each agent is equipped with a private preference
relation, which is just encoded as a real-valued function (basically, a monetary valuation) over all possible goods---formal definitions are in
Section~\ref{sec:framework}. An agent can have allocated more then one good, in which case her/his evaluation is additive over them. Moreover,
goods are non-sharable, i.e., each good can be allocated to one agent at most. However, monetary transfers are allowed,  in terms of both
payments charged to agents and monetary compensations provided to them.
The goal is to find an efficient allocation, that is, an allocation maximizing the total value of the allocated goods, by designing rules
guaranteeing that certain desirable properties are achieved, such as truthfulness and \emph{individual rationality}, i.e., no agent is ever
worse off than he would be without participating to the mechanism. Moreover, we want to obtain outcomes that are ``politically'' acceptable.
That is, agents should perceive the designed mechanism as a \emph{fair} one~\cite<see, e.g., >{Brandt2012}, independently of the rules leading
them to be honest. For instance, it is desirable that no agent \emph{envies} the allocation of any another agent, or that the selected outcome
is \emph{Pareto efficient}, i.e., there must be no different allocation such that every agent gets at least the same utility and one of them
even improves.

Note that the above model for fair allocation is general enough to deal with many practical scenarios, and it has been indeed intensively
studied in the literature. One example application is parking space and benefit allocation at a workplace, where each employee gets a parking
space and a share from a fixed benefits package. House allocation problems are another classical example, where agents collectively own a set
of houses, and we look for a systematic way of exclusively assigning a house to each agent, possibly with monetary compensations. A third
example is job allocation among a group of employees, e.g., assigning an unexpected task among the business units of a corporate, where a job
and possibly a monetary compensation is assigned to each employee. Finally, yet another classical application is room assignment-rent division,
where a group of agents shall rent a house, with each of them getting a room and paying a share of the rent of the house.

The model and, in particular, properties of fair allocations with indivisible objects and monetary transfers have been studied, e.g.,
by~\citeA{Svensson1983,Bevia1998,Maskin1987,Tadenuma1993,Meertens2002,Tadenuma1991,Alkan1991,Willson2003,Su1999,Yang2001,Quinzii1984,Sakai2007}.
Moreover, procedures to compute fair allocations have been proposed
by~\citeA{Aragones1995,Klijn2000,Haake2002,Brams2001,Potthoff2002,Abdulkadiroglu2004}.

None of the above listed approaches, however, can guarantee the elicitation of honest preferences from the agents. In fact, the question of
designing truthful and fair mechanisms has been recently considered as well
\cite{Andersson2008,Andersson2009,Svensson2009,Yengin2012,Ohseto2004,Porter2004,Shioura2006}. In these approaches, while budget limits are
sometimes enforced and mechanisms are defined that cannot run into deficit, budget-balance is never guaranteed. Indeed, this comes again with
no surprise, given that no truthful mechanism can be simultaneously fair (e.g., envy-free or Pareto-efficient) and budget-balanced~ \cite<see,
e.g., >{Tadenuma1995,Alcalde1994,Andersson2010}.

To circumvent this impossibility, approaches have been studied that focus on weaker notions of truthfulness. For instance,
\citeA{Andersson2010} and \citeA{Pathak2009} consider a notion of degree of manipulability which can be used to compare the ease of
manipulation in allocation mechanisms, whereas the notion of \emph{weak} strategy-proofness is considered by \citeA{Lindner2010}, i.e.,
cheating agents are always risking an actual loss, and are never guaranteed to cheat successfully.

In this paper, we depart from the settings studied in all such earlier approaches, because we are interested in applications where a form of
verification is available to the mechanism at the time of deciding monetary compensations among agents. It follows that the mechanism may
exploit the knowledge of private agents' types, and hence classical impossibility results may no longer hold.
Indeed, by exploiting this knowledge, and even assuming a setting where agents with verified incorrect declarations are not punished, we can
show that it is possible to design mechanisms for allocation problems that are truthful, {efficient}, budget-balanced, individually rational,
and fair.
Note that having this kind of {\em a-posteriori} knowledge at payment time is quite common to many applications. In fact, we were inspired by
the following real-world scenario, which will be later formalized in Section~\ref{sec:motivation}.

\subsection{The Italian Research Assessment Program (VQR) 2004-2010}

In 2012, the National Agency for the Evaluation of Universities and Research Institutes (ANVUR) has promoted the `VQR' assessment program
devoted to evaluate the quality of the whole Italian research production in the period 2004-2010. Every research structure $R$ has to select
some research products, and submit them to ANVUR. While doing so, the structure $R$ is in competition with all other Italian research
structures, as the outcome of the evaluation will be used to proportionally transfer the funds that have been allocated by the Ministry to
support research activities in the next years (until data from a new evaluation for the subsequent period will be available).

Every structure is then interested in selecting and submitting  its \emph{best} research products.
To this end, the program is articulated in three phases. First, authors are asked to self-evaluate their products, according to some evaluation
criteria defined by groups of experts chosen by ANVUR. Here, it is assumed that, having such criteria, every author is able to perform a
ranking of her/his own products, ideally to equip each of them with a quality score.\footnote{The set of the possible scores is defined in the
VQR guidelines. To our ends, this detail is immaterial and scores are just viewed as (arbitrary) real numbers.} In the second phase, based on
the self-evaluations being collected, every structure $R$ selects and submits to ANVUR (at most) three products for each author affiliated to
$R$,\footnote{Actually, the number of publications is not always three, in some specific exceptions. Again, this is not a relevant issue, as we
shall see in the formalization in Section~\ref{sec:framework}.} in such a way that the sum of the declared, i.e., self-assessed, scores of the
selected products is the maximum possible one for structure $R$, and that each product is formally associated to one author at most.
Finally, ANVUR formulates its independent quality judgment about submitted publications, and the sum of their ``true'' scores (i.e., those
resulting by ANVUR evaluation) is then the VQR score of $R$. Eventually, $R$ will receive funds in the next years proportionally to this score.

In fact, a VQR score is assigned not only to the structure $R$, but also to all its substructures (e.g., to Departments, if $R$ is a
University). Of course, this has an impact on the funds redistribution inside every research structure. Therefore, while it is clear that each
structure has to maximize the total value of the products submitted to ANVUR, whenever the same product has different co-authors, some
strategic issues come into play and co-authors' personal interest may induce them to cheat on the quality of their products, and may lead to
choices going far from the optimum (total) value.
We thus believe that mechanism design is of high practical interest in contexts resembling this one.
In the mechanism, since products are indivisible, i.e., each of them can be formally allocated to one author at most,
payments should reflect ``adjustments'' over the VQR scores, which suitably take into account the various co-authors no matter to whom any
publication is actually allocated. Moreover, as payments can be computed at the end of the process on the basis of the VQR scores made
available by ANVUR, a mechanism with verification can be conceived, with ANVUR playing the role of the verifier.

\subsection{Contributions}

Motivated by the above application scenario, in this paper, we study allocation problems in a strategic setting where agents can misreport
their private types, and we study mechanisms with verification from both the algorithmic and the computational complexity viewpoint.

\medskip

\noindent \textbf{Algorithmic Issues.}
We show that in the given setting none of the classical impossibility theorems discussed above holds. In particular, we exhibit a payment rule
$\p^\xi$ that turns any optimal allocation algorithm, i.e., any algorithm computing an optimal allocation given the reported types, into a
mechanism with verification such that:

\begin{myitemize}
  \item[$\blacktriangleright$] The mechanism is truthful. This is shown by pointing out a number of properties of allocation problems which are of interest on their own.

  \item[$\blacktriangleright$] The mechanism is {efficient}, budget-balanced, individually rational, envy-free, and Pareto efficient.

  \item[$\blacktriangleright$] The mechanism satisfies the following properties, which are in fact crucial---and hence specifically
    discussed---for the case study of the Italian Research Assessment Program:
  \begin{itemize}
    \item The payment rule is indifferent w.r.t.~the scores (possibly cheats) declared for goods that do not occur in the allocation being
        selected (and hence that are not verified). In theory, the feature is irrelevant given that truth-telling is a dominant strategy.
        In practice, this might be rather useful in the VQR program, because unsubmitted research products with their unverified declared
        scores should have no influence on the payments, as they have no influence on the actual score of the research structure after the
        evaluation is carried out. Moreover, it is unlikely that a consensus can be achieved on a mechanism where payments are based on
        unverified self-declarations.

    \item Verification is not used to force truthfulness by just punishing those agents whose reported values are found different from the
        verified ones. Again, the rationale is that a punishment approach would be hardly ``politically'' acceptable in this context
        (charging university professors because of their self-evaluation about some paper would require some convincing proof of their
        malicious behavior, and would start never-ending disputes).

    \item The payment rule is indifferent w.r.t.~the specific optimal allocation being selected by the mechanism. Thus, the score of each
        researcher/substructure is univocally determined by the overall score of the structure, and any researcher/substructure does not
        have any reasonable argument against her/his structure because of possible alternative allocations. In fact, this is a very strong
        kind of fairness property, which immediately entails envy-freeness and Pareto-efficiency.

% commentato questo
%    \item A simple modification of the mechanism is shown to retain all its nice properties not only at the equilibrium where all agents
%        truthfully report their types, but also in the case where some agents report different types.
%%
%        Indeed, while it is rational for them to report true types (by truthfulness), discrepancies between self-evaluations and ANVUR
%        evaluations might emerge not necessarily because of strategic behaviors, but just because of true evaluation discrepancies (between
%        authors and ANVUR experts). In fact, this is another reason to avoid the brute-force punishment approach here.
  \end{itemize}

 \item[$\blacktriangleright$] Agents' utilities are distributed according to the  \emph{Shapley value} of two
  suitably-associated \emph{coalitional games}---see, e.g., \cite{Vazirani2007}, for a comprehensive
  introduction to sharing problems and coalitional games. In fact, the Shapley value is a prototypical solution concept for
  fair division with monetary compensations, and its desirable properties in (games associated with) allocation problems have been largely
  studied in the literature~\cite<e.g., >{Moulin1992,Francois2003,Mishra2007}.

% commentato questo
% \item[$\blacktriangleright$] All the above properties are retained in a more general setting where the mechanism is \emph{partially} specified,
% in the sense that agents know that an optimal allocation algorithm will be used together with the payment rule $\p^\xi$, but they do not know
% the details of this algorithm, and in particular how it resolves the ties (in case of multiple optimal allocations).
% Note that, in the light of the above indifference property w.r.t. alternative optimal allocations, this setting is very natural and appropriate here.
% In fact, it even emerged that, in this setting, whenever all agents play a dominant strategy (not just necessarily truth-telling), then an
% allocation will be selected that maximizes the social welfare.
\end{myitemize}

Note that the Shapely value has been studied in mechanism-design contexts too, where emphasis has been given to the pricing problem for a
service provider~\cite{Moulin1997,Moulin1999,Jain2001}: The cost of providing a service is a function of the sets of customers, and the goal is
that of determining which customers (and at a what price) have to receive it.
The model gives rise to a \emph{cross-monotonic} cost-sharing game, where Shapley-value based sharing mechanisms can be defined that are
truthful and budget-balanced, and which achieve the lowest worst-case loss of efficiency over all utility profiles~\cite{Moulin1997}.
With this respect, the pricing rule $\p^\xi$ can abstractly be viewed as a witness that, whenever (partial) verification is possible,
Shapley-valued based mechanisms may also be implemented with no loss of efficiency at all.

\medskip

\noindent \textbf{Complexity Issues.} Computing an optimal allocation on the basis of the reported types is an easy task, which can be carried
out via adaptations of classical matching algorithms. However, one might suspect that our mechanism is not computationally-efficient as it is
based on the computation of a Shapley value. This is indeed a challenging task that involves iterating over all possible subsets of agents.
We analyze these issues, and we provide the following contributions:

\begin{myitemize}
  \item[$\blacktriangleright$] We show that computing the Shapley value for allocation problems is inherently intractable, in
      fact, {\rm \#P}-complete. Note that, while membership results for {\rm \#P} are rather common for  problems  involving Shapley value
      computations, {\rm \#P}-hardness results have only been proven so far  for a few kinds of
      coalitional games, in particular, for \emph{weighted voting games}~\cite{Deng1994} and \emph{minimum spanning-tree games}~\cite{Nagamochi1997}.

  \item[$\blacktriangleright$] In order to deal with scenarios with many agents demanding for computational efficiency, two modified
      rules, $\hat \p^\xi$ and $\bar{\p}^\xi$, are presented, which are founded on a fully polynomial-time randomized approximation scheme for the Shapley value
      computation. The resulting polynomial-time mechanisms retain most of the properties of $\p^\xi$. In particular, the mechanism based on $\hat \p^\xi$ is truthful,
      efficient, and with high-probability it is approximately budget-balanced. Instead, the mechanism based on $\bar \p^\xi$ is truthful in expectation, but
      always efficient and budget-balanced.
\end{myitemize}

\noindent \textbf{Organization.} The rest of the paper is organized as follows. The motivation scenario for the research reported in the paper
is analyzed in Section~\ref{sec:motivation}. Then, Section~\ref{sec:framework} illustrates the formal framework and the basic concepts to
design mechanisms with verification. The payment rule $\p^\xi$ is defined and analyzed in Section~\ref{sec:mechanism}. Connections with
coalitional games are pointed out in Section~\ref{sec:coalitional}, while rules $\hat \p^\xi$ and $\bar \p^\xi$ are defined in
Section~\ref{sec:complexity}, where computational issues  are dealt with. A comparison with some related works is reported in
Section~\ref{sec:related}, while a few final remarks are eventually discussed in Section~\ref{sec:conclusion}.

\section{Motivating Example: The Italian Research Assessment Program (VQR) 2004--2010}\label{sec:motivation}

In this section we describe the motivating example and case study of the present work: the program for evaluating all Italian research
structures for their activities in years 2004--2010. The evaluation is performed by ANVUR, the National Agency for the Evaluation of
Universities and Research Institutes (\url{www.anvur.org}). Substructures (e.g., University departments) will also be evaluated, by considering
the researchers affiliated to them.

\subsection{Formalization}

Let us hereinafter focus on a given structure $R$, let $\R$ be the set of researchers affiliated to $R$ and, for each $r\in \R$, let
$\products(r)$ be the set of the research products of $r$ in the given period 2004-2010. In the first phase of the program, each researcher
$r\in \R$ associates a quality score $\score_r(p)\in \mathbb{R}$ with each product $p\in \products(r)$.

An \emph{allocation} for $R$ is then a function $\submitted$ mapping each researcher $r\in R$ to a set of publications $\submitted(r)\subseteq
\products(r)$ with $|\submitted(r)|\leq 3$ and with $\submitted(r)\cap \submitted(r')=\emptyset$, for each $r'\in \R\setminus\{r\}$. In the
second phase, the structure $R$ computes an \emph{optimal} allocation, i.e., an allocation $\submitted^*$ such that $\sum_{r\in \R}\sum_{p\in
\submitted*(r)}\score_{r}(p)\geq \sum_{r\in \R}\sum_{p\in \submitted(r)}\score_{r}(p)$, for each possible allocation $\submitted$. The goal of
structure $R$ is indeed to maximize its total score (social welfare), and thus the funds that $R$ eventually will receive. Note that this
optimization phase is based on the scores declared by researchers. Therefore, the goal of $R$ will be achieved if authors correctly/truthfully
self-evaluate their products.

\begin{figure}[t]
  \centering
  \includegraphics[width=0.99\textwidth]{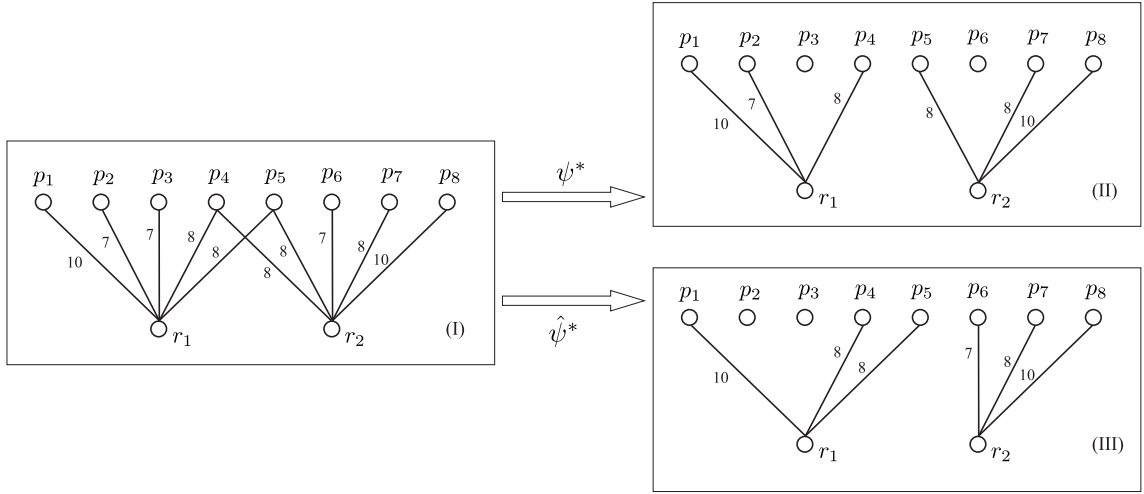}
  \caption{Running example in Section~\ref{sec:motivation}.}
  \label{fig:esempio}
\end{figure}

\begin{example}\label{ex:uno}
Let us consider the simple scenario that is illustrated in Figure~\ref{fig:esempio}(I) by exploiting an intuitive graphical notation. Assume
that there are just two researchers, $r_1$ and $r_2$, affiliated to $R$. Moreover, assume that $\products(r_1)=\{p_1,...,p_5\}$ and
$\products(r_2)=\{p_4,...,p_8\}$, and notice that products $p_4$ and $p_5$ have been co-authored by $r_1$ and $r_2$.
For each $p_i\in \products(r_1)$ (resp., $p_i\in \products(r_2)$), let the self-assessed score $\score_{r_1}(p_i)$ (resp.,
$\score_{r_2}(p_i)$) be the one associated with the edge connecting $r_1$ (resp., $r_2$) to $p_i$.
Given this setting, it is then easily seen that an optimal allocation for $R$ is $\submitted^*$ such that $\submitted^*(r_1)=\{p_1,p_2,p_4\}$
and $\submitted^*(r_2)=\{p_5,p_7,p_8\}$---see Figure~\ref{fig:esempio}(II). \hfill $\lhd$
\end{example}

Let $\submitted^*$ be the optimal allocation being selected by $R$. All products in $\bigcup_{r\in \R}\submitted^*(r)$ are then submitted to
ANVUR for being evaluated (according to the criteria that are publicly available). As a result of the third phase of the VQR program, each
submitted product $p\in \bigcup_{r\in \R}\submitted^*(r)$ is associated with a VQR score $\score_{\mbox{\tiny{VQR}}}(p)\in \mathbb{R}$, which
can be in principle different from the score $\score_r(p)$ declared by some author $r\in \R$. The overall VQR score of the structure $R$,
$\score_{\mbox{\tiny{VQR}}}(\submitted^*)=\sum_{r\in \R}\sum_{p\in \submitted*(r)}\score_{\mbox{\tiny{VQR}}}(p)$, will be eventually translated
to money, proportionally to the performance of $R$ w.r.t.~the performances of the other research structures. Thus, it is actually immaterial to
talk about scores or money in this application scenario.

\begin{example}\label{ex:due}
Assume that there is a precise agreement between self-evaluations (leading to the optimal allocation $\submitted^*$ for structure $R$) and
ANVUR evaluations, i.e., for each researcher $r\in \R$ and each product $p\in \products(r)\cap \submitted^*(r)$,
$\score_{\mbox{\tiny{VQR}}}(p)=\score_r(p)$ holds. Then, we have that $\score_{\mbox{\tiny{VQR}}}(\submitted^*)=51$---see again
Figure~\ref{fig:esempio}(II). Moreover, the funds of $R$ (actually, the part of funds connected to the research performance) will be directly
proportional to $51$.
Therefore, if another structure $R'$ gets 102 as VQR score, its (research-related) funds will be double the funds of $R$. \hfill $\lhd$
\end{example}

\subsection{Division Rules and Desirable Properties}

While the first aim of the VQR program is to evaluate the various Italian research structures, it is known that the obtained information will
be used to evaluate substructures, too  (e.g., University departments). Thus, following the same principle of binding funds to VQR scores used
for the main structures, it is natural to exploit such scores for money distribution inside every research structure (in principle, up to
research groups and individuals).

Of course, this poses the question of how the total score (money) of a structure can be \emph{fairly} distributed over its substructures (and
possibly over individuals), in such a way to reflect their actual contribution to the result achieved by the structure. Formally, we need the
definition of a \emph{division} rule $\gamma$, that is, of a real-valued function that given a researcher $r\in \R$ and an allocation
$\submitted^*$, returns the score $\gamma_r(\submitted^*)$ of $r$ under $\submitted^*$. Then, slightly abusing notation, for any substructure
$\mathcal{S}\subseteq \R$ (here just viewed as the set of its members),  denote by $\gamma_{\mathcal{S}}(\submitted^*)$ the value $\sum_{r\in
\mathcal{S}} \gamma_r(\submitted^*)$.

Surprisingly, no division rule has been formalized in the assessment program, up to now, and this is a source of confusion in many researchers.
As a matter of fact, most researchers believe that the score of substructures will be based on the naive $\tt proj$ rule, where for any
researcher $r$, ${\tt proj}_r(\submitted^*)$ is the sum of the VQR scores of the products allocated to $r$ in $\submitted^*$, i.e., ${\tt
proj}_r(\submitted^*)=\sum_{p\in \submitted*(r)}\score_{\mbox{\tiny{VQR}}}(p)$. For instance, in the setting of Example~\ref{ex:due}, we have
that ${\tt proj}_{r_1}(\submitted^*)=25$ and ${\tt proj}_{r_2}(\submitted^*)=26$.
In fact, on the one hand, this approach trivially satisfies a desirable property of division rules.

\begin{description}
  \item[(P1) ``budget-balance'':] \emph{A division rule $\gamma$ must completely distribute the VQR score of $R$ over all its members,
      i.e., $\sum_{r\in \R}\gamma_r(\submitted^*)=\score_{\mbox{\tiny{VQR}}}(R)$.}
\end{description}

\noindent However, on the other hand, $\tt proj$ might lead to scenarios where some researcher (and in turn some substructure) may have some
reasonable argument against her/his structure because of possible alternative allocations where that researcher may get a higher score. Indeed,
$\tt proj$ does not satisfy the following:

\begin{description}
  \item[(P2) ``fairness'':] \emph{A division rule $\gamma$ should assign to each researcher the best score among all possible allocations,
      i.e., for each $r\in \R$, $\gamma_r(\submitted^*)\geq \gamma_r(\hat{\submitted})$, for any allocation $\hat{\submitted}$.}
\end{description}

\begin{example}\label{ex:tre}
Consider again the optimal allocation ${\submitted}^*$ depicted in Figure~\ref{fig:esempio}(II), and compare it with the allocation
$\hat{\submitted}^*$ of Figure~\ref{fig:esempio}(III). Note that $\hat{\submitted}^*$ is another optimal allocation. Moreover, it is easily
seen that ${\tt proj}_{r_1}(\submitted^*)=25$, whereas ${\tt proj}_{r_1}(\hat{\submitted}^*)=26$. Thus, $r_1$ would complain with her/his
structure, if $\submitted^*$ is selected in place of $\hat{\submitted}^*$.\hfill $\lhd$
\end{example}

In fact, to design a division rule satisfying the above (rather strong) fairness criterium is not straightforward. Indeed, property
\textbf{(P2)} basically tells us that there must be exactly one possible way to distribute the score $\score_{\mbox{\tiny{VQR}}}(R)$ over the
members of $R$, independently of the optimal allocation being actually selected.
Of course, difficulties emerge in those cases where products have multiple co-authors. Thus, to address this issue, one may find natural to
consider and analyze the following two division rules:

\begin{description}
  \item[$\tt owner$:] assign to each author the sum of the ``normalized'' scores of the submitted products (s)he has co-authored, where by
      normalization we just mean here dividing the score of any product by the number of its authors. For instance, in the setting of
      Example~\ref{ex:due}, we have that, in the optimal allocation $\submitted^*$ of Figure~\ref{fig:esempio}(II), half of the score
      associated with $p_4$ (equivalently, $p_5$) is given to $r_1$, and the remaining half to $r_2$. However, even this attempt of having
      a fair division rule is unsuccessful: The setting of Example~\ref{ex:tre} already evidences that this approach does not satisfy
      property \textbf{(P2)}: just check that ${\tt owner}_{r_2}(\submitted^*)=26$ while ${\tt owner}_{r_2}(\hat{\submitted^*})=33$.
      Indeed, according to this division rule, the score of each researcher depends on the number of publications (s)he has coauthored and
      $R$ has submitted to ANVUR, which may be very different in the various allocations.

  \item[$\tt all$:] distribute the score $\score_{\mbox{\tiny{VQR}}}(R)$ to the members of $R$ proportionally to their overall production,
      and not just on the basis of the submitted publications. Accordingly, for any researcher $r\in \R$, we define
      $${\tt all}_{r}(\submitted^*)=\frac{\sum_{p\in \products(r)} \score_r(p)}{\sum_{r\in \R}\sum_{p\in \products(r)}
      \score_r(p)}\times\sum_{p\in \submitted*(r)}\score_{\mbox{\tiny{VQR}}}(p).$$

      In fact, note that $\tt all$ satisfies property \textbf{(P2)}, precisely because the division does not depend on the specific
      products being selected and submitted to ANVUR.
\end{description}

\noindent Actually, we point out here that, while being conceptually simple and still satisfying the fairness requirement, $\tt all$ would be
hardly implementable in practice. First, under $\tt all$, it is clear that researchers might want to act strategically and overestimate the
quality of their own products. Second, even if $\tt all$ is adjusted in a way that it would be always convenient for each researcher to be
truthful, we believe that such a theoretical guarantee would still not suffice to create a consensus on the division rule, given that it
strongly depends (also) on scores that are not certified by ANVUR. Thus, the specific application scenario we are considering suggests the
following additional requirement, which is not satisfied by $\tt all$:

\begin{description}
  \item[(P3) ``implementability'':] \emph{A division rule $\gamma$ must be indifferent w.r.t.~the scores (possibly cheats) declared for
      unverified products, that is, for products not occurring in the selected allocation.}
\end{description}

\noindent So far, none of the proposed examples of division rules emerged to be acceptable. Moreover, in our discussion, we have anticipated
another crucial requirement for a division rule, which is truthfulness. Indeed, depending on the specific division rule being selected,
co-authors may be competitors and might want to act strategically to improve their own score. In particular, while it is clear in principle
that the total value of the products should be maximized (social welfare), authors' personal interest may lead to choices going in a different
(non-optimal) direction. Thus, a division rule must prevent manipulation:

\begin{description}
  \item[(P4) ``truthfulness'':] \emph{A division rule $\gamma$ must provide no incentive in misreporting the score of the research
      products.}
\end{description}

\noindent Very recently, (perhaps) having recognized that the research programme might provide incentive to strategic behaviors, ANVUR pointed
out that only aggregated information about substructures will be made available, rather than the individual scores of the researchers. However,
note that this is not satisfying for two reasons.

First, it is clearly a waste of money to conduct such a thorough evaluation of the quality of the Italian research, without then providing the
output of the results for research products. Indeed, this kind of information would be useful to define the part of the salary of each
researcher that is function of her/his productivity, according to the current law.\footnote{Actually, this applies only to tenured positions at
Universities.} This is so evident that many researchers still believe that such an information will be used for their personal evaluation (soon
or later), and thus are adopting strategic behaviors to have allocated the best products (usually, under the assumption that the rule $\tt
proj$ will be used). However, this might not lead to the global optimum for their research structure, because of possible poor performances of
co-authors, as we shall see in the example below.

Second, disclosing only aggregated information does not prevent at all the emergence of strategic behaviors. Indeed, such strategic issues
still emerge as soon as two researchers from different substructures co-authored some research product, with each of them being interested in
providing as much as contribution as possible to her/his own substructure. Again,  this might not lead to the global optimum, as we next
exemplify for the rule $\tt proj$.

\begin{figure}[t]
  \centering
  \includegraphics[width=0.9\textwidth]{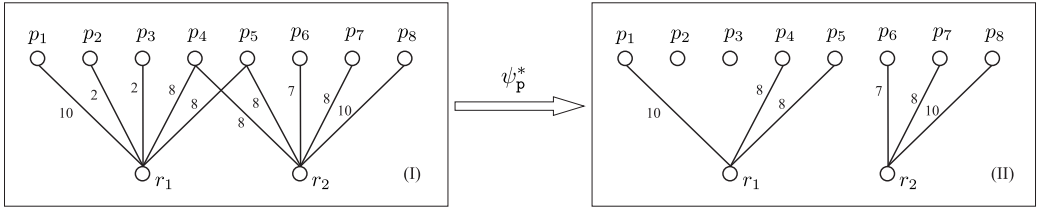}
  \caption{Strategic manipulations with the rule $\tt proj$.}
  \label{fig:esempio3}
\end{figure}

\begin{example}\label{ex:proj}
Assume that $r_1$ and $r_2$ belong to different substructures. Consider the rule $\tt proj$, and assume that researcher $r_1$ declares that
her/his products $p_2$ and $p_3$ are of poor quality (e.g., $\score_{r_1}(p_2)=\score_{r_1}(p_3)=2$), as it is illustrated in
Figure~\ref{fig:esempio3}(I). Then, an optimal allocation $\submitted^*_{\tt p}$ is the one shown in Figure~\ref{fig:esempio3}(II), where the
set $\{p_1,p_4,p_5,p_6,p_7,p_8\}$ of products is submitted to ANVUR. Assume that, for all these products, there is an agreement between
declared scores and ANVUR ones. It follows that ${\tt proj}_{r_1}(\submitted^*_{\tt p})=26$.
On the other hand, recall that in the allocation $\submitted^*$ of Figure~\ref{fig:esempio}(II), which has been computed based on the
declaration that $\score_{r_1}(p_2)=score_{r_1}(p_2)=7$, it holds that ${\tt proj}_{r_1}(\submitted^*)=25$. Thus, $r_1$ finds convenient to
misreport the true scores of $p_2$ and $p_3$, and underestimate them. Note however that the overall score of the structure $R$ is still $51$
and, in fact, $\submitted^*_{\tt p}$ coincides with the optimal allocation $\hat{\submitted}^*$ depicted in Figure~\ref{fig:esempio}(III) and
discussed in Example~\ref{ex:tre}.

Then, consider a slight variation of the problem instance depicted in Figure~\ref{fig:esempio3} where the actual value of product $p_7$ is $6$
(instead of $8$). Then, the above egoistic behavior of agent $r_1$ also damages its research structure because it leads to a sub-optimal
allocation. Indeed, due to the low declared values for $p_2$ and $p_3$, product $p_7$ is selected and allocated to $r_2$ in the unique (wrong)
optimal allocation, whose total score is now $49$ (instead of $51$).\hfill $\lhd$
\end{example}

\begin{figure}[t]
  \centering
  \includegraphics[width=0.9\textwidth]{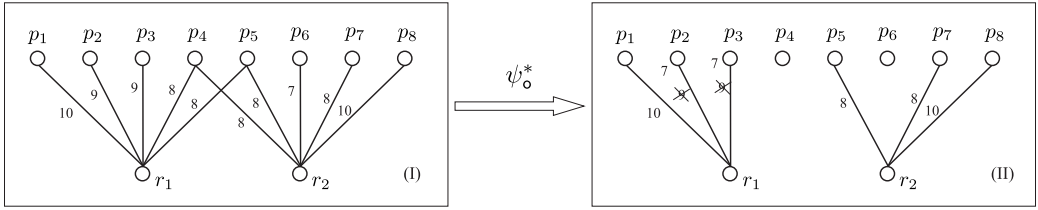}
  \caption{Strategic manipulations with the rule $\tt owner$.}
  \label{fig:esempio2}
\end{figure}

Analogously, the emergence of strategic issues with the rule $\tt owner$ can be easily seen.

\begin{example}\label{ex:quattro}
Assume again that $r_1$ and $r_2$ belong to different substructures. Consider now the rule $\tt owner$, and assume that researcher $r_1$
declares that $\score_{r_1}(p_2)=\score_{r_1}(p_3)=9$. Then, consider the optimal allocation $\submitted^*_{\tt o}$ shown in
Figure~\ref{fig:esempio2}(II), and note that, in this case, the set $\{p_1,p_2,p_3,p_5,p_7,p_8\}$ of products is submitted to ANVUR.

Now, assume that the VQR scores of these products are those illustrated in Figure~\ref{fig:esempio}(I), i.e., the same ones as those discussed
in Example~\ref{ex:uno}. Thus, $r_1$ has cheated with the aim of overestimating the products of which (s)he is the sole author. In fact, this
is convenient to her/him, since according to the rule $\tt owner$, $r_1$ now gets  ${\tt owner}_{r_1}(\submitted^*_{\tt o})=24+4$, because of
the products in $\{p_1,p_2,p_3\}$ allocated to her/him (with overall score 24) and of the product $p_5$ co-authored with $r_2$ (whose overall
score is then shared with $r_2$). On the other hand, just recall that in the allocation $\submitted^*$ of Figure~\ref{fig:esempio}(II), which
has been computed based on the ``truthful'' declaration that $\score_{r_1}(p_2)=\score_{r_1}(p_3)=7$, it holds that ${\tt
owner}_{r_1}(\submitted^*)=25$. It follows that $r_1$ finds convenient to cheat under $\tt owner$, in order to increment the number of products
submitted to the VQR that (s)he has coauthored. However, the egoistic behavior of agent $r_1$ again damages its research structure, as we now
have that the total VQR score is 50 (instead of 51)---see again Figure~\ref{fig:esempio2}(II). \hfill $\lhd$
\end{example}

As a matter of fact, the emergence of strategic issues across substructures risks to penalize, in the long term period, collaborations and
cross-fertilizations.\footnote{It is not by chance that the authors of this paper belong to different substructures of the same University.} To
prevent all these problems, a truthful mechanism is of course definitively needed. In fact, as the above examples might have already suggested
to the careful reader, truthfulness can be achieved by exploiting the fact that ANVUR evaluation corresponds to a ``verification ability''
available in the setting. For instance, one might punish (e.g., by assigning 0 as overall score) all those researchers whose reported values
are found different from the verified ones (usually interpreted as ``lying agents''), as it is in the spirit of most of the literature on
mechanisms with verification~\cite<see, e.g.,
>{Auletta2009,Penna2009,Krysta2010,Ferrante2009,Caragiannis2012}. Indeed, under the intimidation of a punishment, any (reasonable)
division rule can be turned into a truthful one. However, in the application scenario we are considering,
a punishment approach would be hardly ``politically'' acceptable---just think that a number of researchers have already announced that they
will not participate to the VQR program because they disagree with some of the evaluation criteria made available by ANVUR, which are in fact
perceived as imposed by ``law'' rather than as being the outcome of a public discussion on the subject. Moreover, charging researchers because
of some discrepancy between their self-evaluation about some paper and the one by ANVUR experts would require some convincing proof of their
malicious behavior. Therefore, any punishing approach would be quite hard to implement in practice, for this real-world case study.
For this reason, we avoid this brute-force approach, and ask that the following property holds.

\begin{description}
  \item[(P5) ``no punishment'':] \emph{A division rule $\gamma$ must be such that, for each $r\in \R$ and each allocation $\submitted^*$,
      the value $\gamma_r(\submitted^*)$ is indifferent w.r.t.~self-assessed scores, in particular, w.r.t.~discrepancies possibly emerging
      between such scores and VQR ones.}
\end{description}

\noindent Note that, in the light of the above requirement, we look for a method to enforce truthfulness where verification is used in a rather
limited sense. Moreover, it is relevant to observe that if the division rule is not well-designed, then cases might emerge where there is no
way at all to exploit verification (even in its strongest form where punishment is allowed).

\begin{example}
Consider again the use of the rule $\tt proj$ in Example~\ref{ex:proj}, and recall that $r_1$ finds convenient to underestimate the true scores
of $p_2$ and $p_3$. However, since $p_2$ and $p_3$ does not occur in $\submitted^*_{\tt p}$, as we can see in Figure~\ref{fig:esempio3}(II),
then there is no way to discover that $r_1$ has actually cheated. Therefore, in this case, verification on the selected products provides
no-extra power, and truthfulness is not achieved. \hfill $\lhd$
\end{example}

\section{Fair Allocation Problems}\label{sec:framework}

By looking at the examples discussed in the above section, it emerges that defining a ``good'' division rule is not an easy task in allocation
problems with indivisible goods.

In this section, we provide the formal framework for studying such problems, based on mechanism design tools. In particular, we focus on
mechanisms equipped with a verification ability, which meets the ``no-punishment'' perspective. To help the intuition, after notions are
defined in the formal framework, we discuss how they fit our running (motivating) example about the Italian research assessment programme.

\subsection{Formal Framework: Allocations and Strategic Setting}

Assume that a universe $U$ of \emph{indivisible} goods is given.
An \emph{allocation scenario} over $U$ is a tuple $S=\tuple{\A,G,\omega}$ where $\A=\{1,...,n\}$ is a set of agents, $G\subseteq U$ is a set of
goods, and $\omega:\A\mapsto \mathbb{N}$ is a function associating each agent $i\in \A$ with a natural number $\omega(i)>0$.

An \emph{allocation} for $S$ is a function $\pi$ mapping each agent $i\in \A$ into a set $\pi(i)\subseteq G$ of goods with $|\pi(i)|\leq
\omega(i)$ and such that $\pi(i)\cap\pi(j)=\emptyset$, for each agent $j\neq i$. Note that $\omega$ provides the upper bounds on the number of
goods that can be allocated to agents. Of course, this is more general than assuming that no bound is given, which is just the special case
$\omega(i)=|G|$, for each agent $i\in\A$. Eventually, note that goods cannot be shared.
In the following, we denote by $\dom(\pi)$ the domain of $\pi$, and by $\img(\pi)$ the set of goods occurring in the image of $\pi$, that is,
$\bigcup_{i\in \dom(\pi)} \pi(i)$.

The \emph{type} of an agent is a real-valued function over $G$, which is meant to express her/his evaluation of each single good. A vector
$\w=(w_1,...,w_n)$ where $w_i:G\mapsto \mathbb{R}$ is the type of agent $i$, for each $i\in\A$, is called a type vector. The \emph{value} of an
allocation $\pi$ for $S$ w.r.t.~$\w$, denoted by $\val{(\pi,\w)}$, is given by $\sum_{i\in \dom(\pi)} w_i(\pi)$. Then, $\pi$ is \emph{optimal}
(for $S$ w.r.t.~$\w$) if there is no allocation $\pi'$ for $S$ such that $\val(\pi',\w)>\val(\pi,\w)$. The value of an optimal allocation for
$S$ w.r.t.~$\w$ is denoted by $\opt(S,\w)$, while the set of all optimal allocations for $S$ w.r.t.~$\w$ is denoted by $\Pi^*_{S,\w}$, or
simply by $\Pi^*_\w$ if $S$ is understood from the context.

For each agent $i$, let $D_i$ be the set of all her/his possible types, and let $\D$ be the Cartesian product $D_1\times\cdots\times D_n$.
Hereinafter, we assume that an allocation scenario $S=\tuple{\A,G,\omega}$ is given, together with two type vectors $\t=(t_1,...,t_n)\in\D$ and
$\d=(d_1,...,d_n)\in \D$. In particular, we assume that each agent $i$ {reports} a \emph{declared} type $d_i$, which might be different from
her/his actual evaluation $t_i$ of the available goods, called the {\em true} type of $i$. Vector $\d$ is public knowledge, while vector $\t$
is not, because type $t_i$ is private knowledge of agent $i$.

\medskip

\noindent \textbf{Allocation problems $\mapsto$ VQR.} The motivating scenario discussed in Section~\ref{sec:motivation} is a special case of
the above general framework. Indeed, we can associate each structure $R$ with an allocation scenario $S_R=\tuple{\A,G,\mathbf{3}}$, where $\A$
is the set of researchers affiliated to $R$, $G$ is the set of their products, and $\mathbf{3}:\A\mapsto \{3\}$ is the constant function
stating that each researcher can submit 3 products at most. In this setting, $\d$ encodes the scores declared by the researchers (with the
negative score $-1$ being conventionally assigned by $d_i$ to the products that $i$ declares not to have authored). This correspondence is
exemplified below.

\begin{example}\label{ex:ANVURallocation}
Consider again the setting of Example~\ref{ex:quattro}. The given research structure $R$ can be modeled as the allocation scenario
$\tuple{\{r_1,r_2\},\{p_1,...,p_8\},\mathbf{3}}$. Moreover, the vector $\d$ is such that, for each $i\in \{1,2\}$ and $j\in\{1,...,8\}$:
$d_{r_i}(p_j)=\score_{r_i}(p_j)$, if $p_j\in \products(r_i)$, and $d_{r_i}(p_j)=-1$, otherwise.
Note that Figure~\ref{fig:esempio2}(I) can be viewed as a graphical representation of $\tuple{\{r_1,r_2\},\{p_1,...,p_8\},\mathbf{3}}$ where
negative edges are omitted. For instance, for researcher $r_1$, we have $d_{r_1}(p_1)=10$, $d_{r_1}(p_2)=9$, $d_{r_1}(p_7)=-1$, and so on.
Clearly, optimal allocations for the research assessment problem one-to-one correspond to optimal ones for the corresponding allocation
scenario w.r.t.~$\d$. For instance, the allocation $\submitted_o^*$ in Figure~\ref{fig:esempio2}(II) is an optimal allocation for
$\tuple{\{r_1,r_2\},\{p_1,...,p_8\},\mathbf{3}}$ w.r.t.~$\d$.  \hfill $\lhd$
\end{example}

Concerning the vector $\t$ of true types in the VQR setting, recall that research products are evaluated by ANVUR according to some publicly
available criteria, and that, having such criteria, every author is expected to be able of  self-evaluating her/his own products in a way
consistent with ANVUR evaluation. Thus, the vector $\t$ encodes the ``true score'' of each product, as it can be determined according to ANVUR
evaluation criteria.

Actually, note that in the case of the VQR setting, different researchers cannot have truly different true types w.r.t.~a product they have
co-authored, because we are assuming that true types coincide with ANVUR evaluations and ANVUR provides just one value for each product. More
formally, in any VQR allocation problem $\tuple{\A,G,\mathbf{3}}$, the underlying vector $\t$ of true types is such that $t_i(g)>0$ and
$t_j(g)>0$ implies that $t_i(g)=t_j(g)$, for each good $g\in G$ and pair of agents $i,j\in \A$.
Of course, since the proposed framework is more general, it may well support possible extensions of the current VQR setting tailored to a
finer-grained analysis of interdisciplinary products. Indeed, for such products, co-authors can actually have different valuations for the same
paper (e.g., for they belong to scientific communities with a different research focus) and, therefore, it makes sense to have the product
reviewed by a different panel of experts for each author/area of interest.

\subsection{Mechanisms with Verification}

% qui v_i è su un dominio che è un sottoinsieme di G, mentre la definizione di tipo è su tutto G. Sotto faccio un remark
Throughout the paper, we consider mechanism design in a setting where a third-party, called the \emph{verifier}, is formalized as a function
$\v$ associating any allocation $\pi$ with a vector $\v(\pi)=(v_1,...,v_n)$ of \emph{verified} types, with $v_i:\img(\pi)\mapsto \mathbb{R}$,
for each agent $i\in \A$.
In particular we assume that, for each agent $i\in \A$ and good $g\in \img(\pi)$, $v_i(g)=t_i(g)$. Note that a verifier is always able to
\emph{precisely} determine the real value (agents' type) $t_i(g)$ of any good $g$ in $\img(\pi)$, i.e., of any good allocated to some agent by
$\pi$.
Thus, in our setting, values for allocated products can be fully verified.
Instead, $v_i$ is undefined over values that are not allocated, and hence not verified.

% cambio sul pagamento
In order to encourage agents to truthfully report their private types, we shall design mechanisms where monetary transfers can be performed,
after verified types are available.
Formally, a \emph{payment rule} $\p=(p_1,...,p_n)$ is defined as a vector of functions,
with $p_i(\pi,\d)$ being some amount of money that is given to agent $i$, on the basis of an allocation $\pi$ and a vector $\d$. Observe that,
with this notation, any negative value $p_i(\pi,\d)$ means that some amount of money is charged to agent $i$.
Then, $i$'s (quasi-linear) \emph{utility} under $\p$, sometimes called \emph{individual welfare}, is defined as
$u_{i,{\p}}(\pi,\d)=v_i(\pi)+p_i(\pi,\d)$.
Whenever the payment rule is easily understood from the context, $i$'s utility is simply denoted as $u_i(\pi,\d)$.
Note that utilities depend on verified types, as in the setting by \citeA{Nisan2001}. In fact, we could have equivalently stated them in terms
of the true types. However, the proposed formulation best reflects the intended meaning of the framework, where payments---and hence
utilities---depend on the verification process rather than on personal beliefs of agents.

An \emph{allocation algorithm} is a function $A:\D\mapsto \Pi$ mapping each vector $\w\in \D$ into an allocation $A(\w)$. We say that the
algorithm $A$ is {\em optimal} if $A(\w)\in \Pi^*_\w$, for each $\w\in \D$.

% rimesso la coppia
A \emph{mechanism with verification} is a pair $(A,\p)$, where $A$ is an allocation algorithm and $\p$ is a payment rule that can exploit the
power of a verifier $\v$. The mechanism $(A,\p)$ can be viewed as consisting of the following two-phases: First, agents report a declaration
vector $\d$, and an allocation $\pi=A(\d)$ is computed, by using some allocation algorithm $A$. Second, the true types ``restricted'' over
$\img(A(\d))$ are revealed, i.e., $\v(\pi)$ is made available, and payments under a given rule $\p$ are calculated for $A(\d)$ and $\d$, by
exploiting the knowledge of $\v(\pi)$.
Intuitively, our goal is to design a payment rule $\p$ guaranteeing that declared types in $\d$ induce an allocation $A(\d)$ maximizing the
social welfare, i.e., such that $A(\d)\in \Pi^*_\t$ holds.
A comparison of our approach to verification with existing ones is reported in Section~\ref{sec:related}.

\medskip

\noindent \textbf{Allocation problems $\mapsto$ VQR.} In the VQR setting, we have already pointed out that ANVUR precisely plays the role of a
verifier. For instance, by considering again Example~\ref{ex:quattro}, we have that the set $\{p_1,p_2,p_3,p_5,p_7,p_8\}$ is submitted to ANVUR
and hence, for instance, $v_{r_1}(p_1)=t_{r_1}(p_1)$, because product $p_1$ of $r_1$ has been verified and hence $r_1$'s evaluation about it
has been disclosed---see also Figure~\ref{fig:esempio2}(II).
Moreover, negative values model the types of researchers for products that cannot be assigned to them. E.g., we have
$v_{r_1}(p_8)=t_{r_1}(p_8)=-1$, because $r_1$ is not an author of $p_8$, which is always trivially checked by the mechanism (note that the
specific negative value is immaterial).

Our goal is then to single out a mechanism with verification that uses appropriate ``monetary'' transfers to compensate the (globally) optimal
allocation of the given indivisible goods. In particular, note that, in order to be consistent with the quasi-linear setting, payments in the
VQR setting have to be intended as redistributions of the VQR overall score $\score_{\mbox{\tiny{VQR}}}(R)$. As a result, the role of a
division rule is now to determine the individual welfare $u_{i}(\pi,\d)$ associated with each researcher/agent $i$. %
In fact, payment rules should guarantee that welfare values satisfy a number of desirable properties, in particular, the relevant ones pointed
out in Section~\ref{sec:motivation}.

As a further remark, note that our choice of making the dependence on verified types explicit in the definition of the utility function is
conceptually clearer than assuming a dependence on true types (in fact, eventually coinciding with verified ones). Indeed, our choice is
adherent to the VQR setting, where funds (the actual utility of each researcher) will be determined according to ANVUR evaluations.

\subsection{Properties of Mechanisms}\label{sec:properties}

For any vector $\w=(w_1,...,w_n)\in \D$ of types (declarations) and for any type $\bar w_i\in D_i$, we denote by $(\mydot{\bar w_i}{\w_{-i}})$
the type vector $(w_1,...,w_{i-1},\bar w_i,w_{i+1},...,w_n)\in \D$.

Let $(A,\p)$ be a mechanism with verification, and let $i$ by any agent in $\A$ and $d_i$ her/his declared type. We say that $d_i$ is a {\em
dominant strategy} of agent $i$ w.r.t. $(A,\p)$ if, for each vector $\w\in \D$, $u_{i}(A(\mydot{d_i}{\w_{-i}}),(\mydot{d_i}{\w_{-i}}))\geq
u_{i}(A(\w),\w)$ holds.

The mechanism $(A,\p)$ is \emph{truthful} if, for each $i\in \A$, $t_i$ is a dominant strategy. This is in fact property \textbf{(P4)} in
Section~\ref{sec:motivation}.
Let $(A,\p)$ be any truthful mechanism. In the paper, we will also focus on the following standard (ex-post) properties of such a mechanism, to
be checked at the equilibrium $\t$ where agents truthfully report their private types:

\vspace{-2mm}
\begin{myitemize}
\item[$\rhd$] \emph{efficiency}: $A(\t)\in \Pi^*_\t$. That is, the social welfare is maximized.
%Note that, in this case, $A$ is just required
%    to be optimal allocation algorithm. Therefore, given our interest in efficient (and truthful) mechanisms, we shall hereinafter consider mechanism
%    defined via optimal allocation algorithms only.

\item[$\rhd$] \emph{individual rationality}: $u_{i}(A(\t),\t)\geq 0$, for each agent $i\in \A$. Hence, voluntary participation is encouraged.

\item[$\rhd$] \emph{(strong) budget-balance}: $\sum_{i\in \A}p_i(A(\t),\t)=0$. In other words, there is no transfer of money out     of or into
    the mechanism. This is in fact property \textbf{(P1)}.

\item[$\rhd$] \emph{envy-freeness}: for each pair of agents $i,j\in \A$, and for each allocation
    $\pi$ such that $\pi(i)=A(\t)(j)$, $u_{i}(A(\t),\t)\geq u_{i}(\pi,\t)$.

\item[$\rhd$] \emph{Pareto-efficiency}: there is no allocation $\pi$ such that: (1) $u_{i}(A(\t),\t)\leq u_{i}(\pi,\t)$, for each agent
    $i\in \A$, and (2) there is an agent $j\in \A$ with $u_{j}(A(\t),\t)< u_{j}(\pi,\t)$. That is, $A(\t)$ is not Pareto-dominated by any other allocation.
\end{myitemize}

Note that envy-freeness and Pareto efficiency are two direct consequences of the much stronger property \textbf{(P2)}, which can now be formalized as follows:

\vspace{-1mm}
\begin{myitemize}
\item[$\rhd$] \emph{fairness}: $u_{i}(A(\t),\t)\geq u_{i}(\pi,\t)$, for each agent $i\in \A$ and for each allocation $\pi$.
\end{myitemize}

The two remaining properties \textbf{(P3)} and \textbf{(P5)} from the list in Section~\ref{sec:motivation} (``implementability'' and ``no
punishment'', respectively) will be considered as well. Note that these properties must hold in general, i.e, not only at the equilibrium. In
fact, in the more general setting of allocation problems, they can be respectively formalized as follows:

\vspace{-1mm}
\begin{myitemize}
\item[$\rhd$] \emph{implementability}: $p_i(\pi,\w)=p_i(\pi,\w')$, for each agent $i\in \A$, for each allocation $\pi$, and for each pair
    $\w,\w'\in \D$  of vectors that that may differ only outside $\img(\pi)$, i.e., $\w_j(g)=\w'_j(g)$, for each $j\in \A$ and $g\in
    \img(\pi)$. That is, goods that are not allocated and hence not verified do not play any role in payments.

\item[$\rhd$] \emph{no punishment:}
 $p_i(\pi,\w)=p_i(\pi,(t_i,\w_{-i}))$, for each agent $i\in \A$, for each allocation $\pi$, and
    for each type vector $\w \in \D$. That is, the payment function for agent $i$ does not depend on her/his declared values, so that
    possible discrepancies with the verified type $t_i$ do not affect $i$'s payment according to the given allocation.
    In other words, we may think of payments being always computed under the presumption of innocence, where incorrect declared values do not
necessarily mean manipulation attempts by the agents.
\end{myitemize}

\section{Mechanisms with Verification for Allocation Problems}\label{sec:mechanism}

In this section, we introduce a mechanism with verification for allocation problems and start its analysis, by preliminary evidencing some
properties that hold over optimal allocations.

\subsection{General Properties of Allocation Problems}

We first observe that the optimization problem used to allocate goods to agents can be equivalently reformulated in such a way that at most one
good can be allocated to each agent. Intuitively, we may replace each agent $i$ by $\omega(i)$ fresh agents with the same properties as $i$.
We remark that such an equivalence is just used for the combinatorial optimization phase, i.e., without taking into account any (game
theoretic) incentive consideration.

Let $S=\tuple{\A,G,\omega}$ be an allocation problem. We denote by $S^{\bf 1}$ its {\em one-good} version $\tuple{\A^{\bf 1},G,\mathbf{1}}$,
where $\A^{\bf 1}$ is the set of agents $\bigcup_{i\in \A} \mathit{clones}(i)$ such that for each agent
$i\in \A$,  $\mathit{clones}(i)$ is a set of $\omega(i)$ fresh agents, and where $\mathbf{1}$ is the constant function mapping each %:\A^{\bf 1}\mapsto \{1\}$ maps each
agent to $1$. For any vector $\w$ mapping each agent in $\A$ to her/his type, let $\w^{\bf 1}$ be the type vector for agents in $\A^{\bf 1}$
such that $w_c^{\bf 1}=w_i$, for each $i\in \A$ and $c\in \mathit{clones}(i)$. Thus, in the allocation problem $S^{\bf 1}$ and considering the
type vector $\w^{\bf 1}$ each ``clone'' $c\in \mathit{clones}(i)$ can get at most one good, and it has the same valuations as agent $i$ in
$\w$.

\begin{figure}[t]
  \centering
  \includegraphics[width=0.9\textwidth]{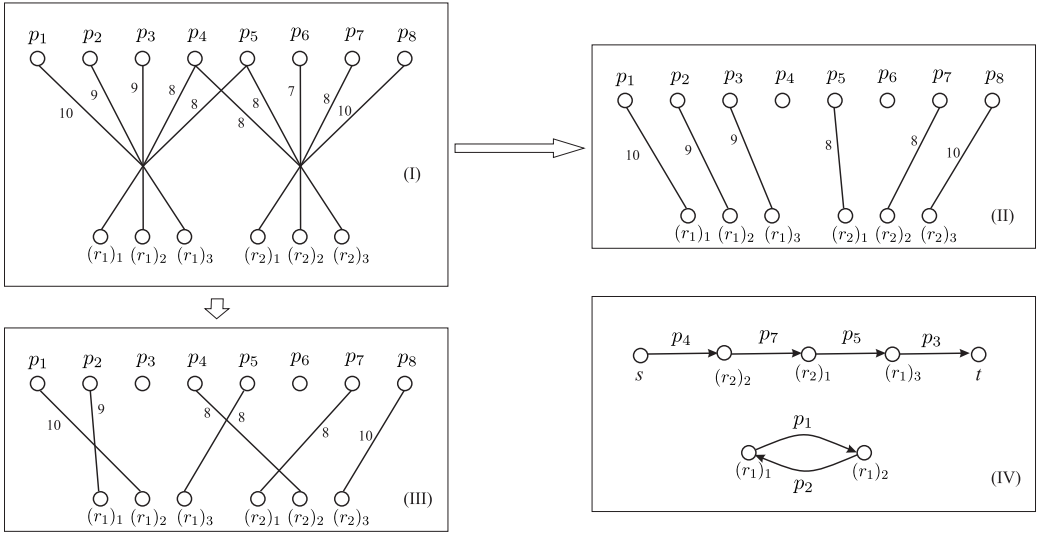}
  \caption{One-good version of the allocation problem in Example~\ref{ex:NORM}, with two allocations and their associated update graph, as defined in the proof of Theorem~\ref{thm:monotonicity}.}
  \label{fig:NORM}
\end{figure}

\begin{example}\label{ex:NORM}
Consider again the ANVUR scenario discussed in Example~\ref{ex:quattro} and illustrated in Figure~\ref{fig:esempio2}. Recall from
Example~\ref{ex:ANVURallocation} the associated allocation problem $S_R=\tuple{\R,\{p_1,...,p_8\},\mathbf{3}}$, with $\R=\{r_1,r_2\}$.
Moreover, recall that the vector $\d$ of declared types is such that, for each $i\in \{1,2\}$ and $j\in\{1,...,8\}$:
$d_{r_i}(p_j)=\score_{r_i}(p_j)$, if $p_j\in \products(r_i)$, and $d_{r_i}(p_j)=-1$, otherwise.
Its one-good version $S_R^{\bf 1}$ is shown Figure~\ref{fig:NORM}(I), where the set of agents is
$\{(r_1)_1,(r_1)_2,(r_1)_3,(r_2)_1,(r_2)_2,(r_2)_3\}$ and where we have that $d_{(r_1)_h}^{\bf 1}=d_{r_1}$ (resp., $d_{(r_2)_h}^{\bf
1}=d_{r_2}$), for each $h\in \{1,2,3\}$. In this graphical representation, crossing lines represent the edges of the bipartite cliques
connecting the two groups of virtual researchers with their products, while, as usual, edges with negative scores (valuations for products
owned only by other authors) are omitted.  \hfill $\lhd$
\end{example}

For any set  $\C\subseteq \A$ of agents and set $G'\subseteq G$ of goods, the tuple $\tuple{\C,G',\omega}$ is the restriction of
$\tuple{\A,G,\omega}$ where only agents in $\C$ and goods in $G'$ are considered.\footnote{Note the little abuse of notation: the function
$\omega$ in $\tuple{\C,G',\omega}$ should be in fact its restriction over $\C$. However, to keep the notation simple, we just write $\omega$,
as no confusion may arise. Similarly, any type vector $\w$ for $\A$ will be transparently considered as a type vector for any subset of agents
$\C\subseteq \A$---we just get rid of the unused components associated with agents in $\A\setminus \C$.}
Consider such a scenario $\tuple{\C,G',\onC{\omega}}$ and let $\pi_\C^{\bf 1}$ be an allocation for its one-good version
$\tuple{\C,G',\onC{\omega}}^{\bf 1}$. Note that $\pi_\C^{\bf 1}$ induces the allocation $\pi_\C(i)=\bigcup_{c\in\mathit{clones}(i)}\pi_\C^{\bf
1}(c)$ for $\tuple{\C,G',\onC{\omega}}$, denoted by $\omeganorm(\pi_\C^{\bf 1})$. By construction, $\val(\pi_\C,\onC{\w})=\val(\pi_\C^{\bf
1},\onC{\w}^{\bf 1})$.  Conversely, any allocation $\bar\pi_\C$ for $\tuple{\C,G',\onC{\omega}}$ is associated with the non-empty set
$\norm(\bar\pi_\C)$  of all those  allocations $\bar\pi_\C^{\bf 1}$ such that $\bar\pi_\C=\omeganorm(\bar\pi_\C^{\bf 1})$, also called the
one-good forms of  $\bar\pi_\C$.

\begin{fact}\label{lemma:norm}
Let $\mathcal{S}$ be an allocation problem, let $\mathcal{S}^{\bf 1}$ be its one-good version, and let $\pi_\C^{\bf 1}$ be an allocation for
$\mathcal{S}^{\bf 1}$. Then, $\pi_\C^{\bf 1}$ is an optimal allocation for $\mathcal{S}^{\bf 1}$ w.r.t.~$\onC{\w}^{\bf 1}$ if, and only if,
$\omeganorm(\pi_\C^{\bf 1})$ is an optimal allocation for $\mathcal{S}$ w.r.t.~$\onC{\w}$.
\end{fact}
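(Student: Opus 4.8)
The plan is to reduce the biconditional to a single numerical identity, namely that the optimal value of the one-good version coincides with the optimal value of the original problem, i.e.\ $\opt(\mathcal{S}^{\bf 1}, \onC{\w}^{\bf 1}) = \opt(\mathcal{S}, \onC{\w})$. Once this is in hand, both directions of the stated equivalence fall out immediately from the value-preservation identity $\val(\omeganorm(\pi_\C^{\bf 1}), \onC{\w}) = \val(\pi_\C^{\bf 1}, \onC{\w}^{\bf 1})$ recorded just before the statement.

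First I would prove $\opt(\mathcal{S}^{\bf 1}, \onC{\w}^{\bf 1}) \le \opt(\mathcal{S}, \onC{\w})$. For any allocation $\pi_\C^{\bf 1}$ of the one-good version, $\omeganorm(\pi_\C^{\bf 1})$ is a legitimate allocation of $\mathcal{S}$, and by value preservation its value equals $\val(\pi_\C^{\bf 1}, \onC{\w}^{\bf 1})$; hence this value is at most $\opt(\mathcal{S}, \onC{\w})$, and taking the maximum over all one-good allocations gives the bound. For the reverse inequality $\opt(\mathcal{S}, \onC{\w}) \le \opt(\mathcal{S}^{\bf 1}, \onC{\w}^{\bf 1})$, I would invoke the observation that every allocation $\bar\pi_\C$ of $\mathcal{S}$ admits a non-empty set $\norm(\bar\pi_\C)$ of one-good forms: picking any $\bar\pi_\C^{\bf 1} \in \norm(\bar\pi_\C)$, value preservation again yields $\val(\bar\pi_\C, \onC{\w}) = \val(\bar\pi_\C^{\bf 1}, \onC{\w}^{\bf 1}) \le \opt(\mathcal{S}^{\bf 1}, \onC{\w}^{\bf 1})$, and maximizing over $\bar\pi_\C$ delivers the claim. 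Together the two inequalities establish equality of the optima.

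With the common optimal value $M := \opt(\mathcal{S}, \onC{\w}) = \opt(\mathcal{S}^{\bf 1}, \onC{\w}^{\bf 1})$ fixed, the equivalence becomes a one-line chain in each direction. If $\pi_\C^{\bf 1}$ is optimal for $\mathcal{S}^{\bf 1}$, then $\val(\omeganorm(\pi_\C^{\bf 1}), \onC{\w}) = \val(\pi_\C^{\bf 1}, \onC{\w}^{\bf 1}) = M$, so $\omeganorm(\pi_\C^{\bf 1})$ attains the optimum of $\mathcal{S}$; conversely, if $\omeganorm(\pi_\C^{\bf 1})$ attains $M$ in $\mathcal{S}$, the same identity forces $\pi_\C^{\bf 1}$ to attain $M$ in $\mathcal{S}^{\bf 1}$.

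I do not expect a genuine obstacle here: the whole content is already packaged in the two facts asserted before the statement, namely value preservation of $\omeganorm$ and the non-emptiness of $\norm(\cdot)$. The only point warranting care is making the surjectivity step explicit, since it is precisely what rules out the degenerate possibility that the one-good relaxation could have a strictly larger optimum than the original problem; everything else is a routine manipulation of the value identity.
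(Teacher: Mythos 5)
Your proof is correct and follows essentially the route the paper intends: the statement is presented as a \emph{Fact} with no explicit proof precisely because it is an immediate consequence of the two observations recorded just before it---value preservation, $\val(\omeganorm(\pi_\C^{\bf 1}),\onC{\w})=\val(\pi_\C^{\bf 1},\onC{\w}^{\bf 1})$, and non-emptiness of $\norm(\cdot)$---which are exactly the ingredients you assemble. Your write-up simply makes the implicit argument explicit by first establishing $\opt(\mathcal{S}^{\bf 1},\onC{\w}^{\bf 1})=\opt(\mathcal{S},\onC{\w})$ and then reading off both directions of the equivalence.
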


\begin{example}
Consider again the setting of Example~\ref{ex:NORM}, and the optimal allocation $\submitted^*_{\tt o}$ shown in Figure~\ref{fig:esempio2}(II). Then, it
is immediate to check that the allocation depicted in Figure~\ref{fig:NORM}(II) is indeed an associated one-good form allocation, which is
actually an optimal allocation for $\tuple{\R,\{p_1,...,p_8\},\mathbf{3}}^{\bf 1}$ w.r.t.~$\d^{\bf 1}$, by Fact~\ref{lemma:norm}.  \hfill
$\lhd$
\end{example}

We are now in the position of stating a property that holds on any optimal allocation~$\pi$. The property is in fact of interest of its own,
i.e., independently of its application to the design of a mechanism with verification. In words, it tells us that, whenever we are interested
in allocating goods to any subset of agents, we may safely consider only goods in $\img(\pi)$, rather than the whole set $G$.  In our case, it
is a basic technical ingredient for showing a number of key properties because, intuitively, it allows us to get rid of alternative (optimal)
allocations, possibly based on non-evaluated goods in $G\setminus \img(\pi)$.

\begin{theorem}\label{thm:monotonicity}
Let $\pi$ be an optimal allocation for $\tuple{\A,G,\omega}$ w.r.t.~$\w$, and let $G_\pi= \img(\pi)\subseteq G$ be the set of goods allocated
according to $\pi$. Then, for each set of agents $\C\subseteq \A$, every optimal allocation for $\tuple{\C,G_\pi,\onC{\omega}}$
w.r.t.~$\onC{\w}$  is an optimal allocation for $\tuple{\C,G,\onC{\omega}}$ w.r.t.~$\onC{\w}$.
\end{theorem}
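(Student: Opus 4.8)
The plan is to first pass to the one-good version of the problem, so that allocations become matchings in a weighted bipartite graph whose left vertices are the agent clones and whose right vertices are the goods, with the weight of an edge given by the corresponding type value. By Fact~\ref{lemma:norm} optimality is preserved in both directions under this translation, so it suffices to prove the statement for matchings, where the global optimal allocation $\pi$ becomes a maximum-weight matching. It is convenient to reformulate the thesis as the equality $\opt(\tuple{\C,G_\pi,\omega},\w)=\opt(\tuple{\C,G,\omega},\w)$: indeed, $G_\pi\subseteq G$ gives ``$\leq$'' for free, and once equality is established, any optimal allocation for $\tuple{\C,G_\pi,\omega}$ is a fortiori a valid allocation for $\tuple{\C,G,\omega}$ achieving the common optimum, hence optimal there too.

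For the nontrivial inequality I would argue by an exchange (alternating-path) argument, which is presumably the ``update graph'' suggested by Figure~\ref{fig:NORM}. Among all optimal allocations for $\tuple{\C,G,\omega}$ (equivalently, all maximum-weight matchings on the clones of $\C$ against $G$), pick one, say $\pi_\C$, that uses as few goods outside $G_\pi$ as possible. If $\img(\pi_\C)\subseteq G_\pi$ we are done, so suppose $\pi_\C$ matches some good $g_0\notin G_\pi$ to a clone $c_0$. Consider the symmetric difference $\pi\oplus\pi_\C$; since $g_0$ is unmatched by $\pi$ (as $g_0\notin\img(\pi)$), it is the endpoint of a simple alternating path $P$ whose edges alternate between $\pi_\C$ and $\pi$. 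Every good on $P$ other than $g_0$ is reached along a $\pi$-edge and hence lies in $G_\pi$, so re-routing $\pi_\C$ along $P$ (i.e.\ replacing its $\pi_\C$-edges on $P$ by the $\pi$-edges on $P$) yields a valid matching $\pi_\C'$ that still uses only clones of $\C$ and has strictly fewer goods outside $G_\pi$.

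The crux, and the step I expect to be the main obstacle, is to show that this re-routing does not lose value, for only then does $\pi_\C'$ contradict the minimality in the choice of $\pi_\C$. The idea is to apply the very same path $P$ to the global matching $\pi$: replacing the $\pi$-edges on $P$ by the $\pi_\C$-edges on $P$ produces a valid matching $\pi'$ (one checks that the endpoints of $P$ behave correctly in each of the two possible termination cases — at an unmatched clone, or at a good matched by $\pi$ but not by $\pi_\C$). Writing $a$ and $b$ for the total weight of the $\pi$-edges and of the $\pi_\C$-edges lying on $P$, respectively, one has $\val(\pi_\C',\w)-\val(\pi_\C,\w)=a-b$ while $\val(\pi',\w)-\val(\pi,\w)=b-a$. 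Optimality of $\pi$ forces $\val(\pi',\w)\leq \val(\pi,\w)$, i.e.\ $b-a\leq 0$, whence $\val(\pi_\C',\w)-\val(\pi_\C,\w)=a-b\geq 0$; thus $\pi_\C'$ is again optimal for $\tuple{\C,G,\omega}$ yet uses one fewer good outside $G_\pi$, the desired contradiction.

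This establishes $\opt(\tuple{\C,G_\pi,\omega},\w)=\opt(\tuple{\C,G,\omega},\w)$ and hence the theorem, as noted in the first paragraph. The only routine points left to verify are that $\pi_\C'$ and $\pi'$ are genuine matchings in the two possible ways $P$ can terminate, and that no good outside $G_\pi$ other than $g_0$ can appear on $P$; both follow directly from the alternating structure of $P$ and from $G_\pi=\img(\pi)$. The essential and slightly delicate point is the coupling of the two re-routings: the same path acts on $\pi$ and on $\pi_\C$ with exactly opposite weight effects, which is what lets the global optimality of $\pi$ certify that the local modification of $\pi_\C$ is harmless.
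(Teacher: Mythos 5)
Your proof is correct, but it follows a genuinely different route from the paper's. The paper argues by contradiction on a given restricted-optimal allocation $\eta_\C$: it compares $\eta_\C$ against an unrestricted-optimal $\lambda_\C$ through an \emph{update graph} (whose components are $s$--$t$ paths and cycles), extracts from it an improving path that must begin with a good outside $G_\pi$, and then performs a \emph{second} exchange against the global optimum $\pi^{\bf 1}$ via a carefully constructed agent set $A$ (a greedy closure chosen so that the two partial allocations can be recombined without conflicts). You instead prove the equivalent value identity $\opt(\tuple{\C,G_\pi,\omega},\w)=\opt(\tuple{\C,G,\omega},\w)$ (this is the paper's Corollary~\ref{cor:uno}, which the paper derives \emph{from} Theorem~\ref{thm:monotonicity}; the two are indeed equivalent, as your first paragraph correctly argues), make an extremal choice of an unrestricted optimum $\pi_\C$ with fewest goods outside $G_\pi$, and use a \emph{single} coupled exchange: the alternating path $P$ in the symmetric difference $\pi\oplus\pi_\C$ starting at $g_0\notin G_\pi$ acts on $\pi$ and on $\pi_\C$ with opposite weight effects, so optimality of $\pi$ certifies that re-routing $\pi_\C$ is harmless, contradicting extremality. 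Both proofs share the same skeleton (pass to the one-good version via Fact~\ref{lemma:norm}, then an exchange argument leveraging the optimality of $\pi$), but your version buys simplicity: the standard symmetric-difference/alternating-path machinery and the minimal-counterexample choice replace the paper's two-stage exchange and entirely avoid the delicate construction of the set $A$, which is the most technical part of the paper's proof. The details you flag as routine (the two termination cases of $P$, all clones on $P$ being clones of $\C$ because they are entered via $\pi_\C$-edges, all goods on $P$ other than $g_0$ lying in $G_\pi$ because they are entered via $\pi$-edges) do check out, including in the degenerate case where $P$ is a single edge.
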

\begin{proof}
Let $\C\subseteq \A$ be any set of agents,  and let $\eta_\C$ be any optimal allocation for $\tuple{\C,G_\pi,\onC{\omega}}$ w.r.t.~$\onC{\w}$,
where $G_\pi=\img(\pi)$. We show that $\eta_\C$ is an optimal allocation for the unrestricted problem $\tuple{\C,G,\onC{\omega}}$ w.r.t.
$\onC{\w}$, too.

To this end, consider any optimal allocation $\lambda_\C$ for the problem $\tuple{\C,G,\onC{\omega}}$ where all goods in $G$ are available to
the agents in $\C$.  We next prove that $\val(\eta_\C,\onC{\w})=\val(\lambda_\C,\onC{\w})$. This clearly follows from the optimality of
$\eta_\C$ if $\img(\lambda_\C)\subseteq G_\pi$ holds.  Therefore, to be strictly better than $\eta_\C$, a function must allocate some good in
$G\setminus G_\pi$.
Assume thus by contradiction that $\val(\eta_\C,\onC{\w})<\val(\lambda_\C,\onC{\w})$, and hence  $\img(\lambda_\C)\not\subseteq G_\pi$, which
entails that $G_\pi \subset G$. Consider two allocations $\eta_\C^{\bf 1}\in\norm(\eta_\C)$ and $\lambda_\C^{\bf 1}\in\norm(\lambda_\C)$, and
observe first that: $ \val(\eta_\C^{\bf 1},\onC{\w}^{\bf 1})=\val(\eta_\C,\onC{\w})<\val(\lambda_\C,\onC{\w})=\val(\lambda_\C^{\bf
1},\onC{\w}^{\bf 1}).$

Let $S$ be the set of agents whose good-assignment are the same according to these allocations, i.e., $S = \{ c \in \C^{\bf 1} \mid
\lambda_\C^{\bf 1}(c)=\eta_\C^{\bf 1}(c)\}$.
Then, define $\Delta(\eta_\C^{\bf 1},\lambda_\C^{\bf 1})=(\C^{\bf 1}\setminus S\cup\{s,t\},E)$ to be the directed graph, called \emph{update}
graph for $\eta_\C^{\bf 1}$ w.r.t.~$\lambda_\C^{\bf 1}$, whose nodes are the agents in $\C^{\bf 1}$ that change their goods in the two
allocations plus two distinguished nodes $s$ and $t$, and whose edges in $E$ are defined as follows:

\vspace{-2mm}
\begin{myitemize}
  \item There is an edge from agent $c$ to agent $c'$ if $\lambda_\C^{\bf 1}(c')=\eta_\C^{\bf 1}(c)\neq \emptyset$;
  \vspace{-2mm}\item There is an edge from $s$ to agent $c'$ if %$\lambda_\C^{\bf 1}(c')\neq \emptyset$ and
  there is no agent $c$ such that $\lambda_\C^{\bf 1}(c')=\eta_\C^{\bf
      1}(c)\neq\emptyset$;
  \vspace{-2mm}\item There is an edge from agent $c$ to $t$ if %either $\lambda_\C^{\bf 1}(c)=\emptyset$ (and, hence, $\eta_\C^{\bf 1}(c)\neq\emptyset$), or
  there is no agent $c'$  such that $\lambda_\C^{\bf 1}(c')=\eta_\C^{\bf 1}(c)\neq \emptyset$;
  \vspace{-2mm}\item No further edges are in $E$.
\end{myitemize}\vspace{-2mm}

\noindent  For an example construction, consider Figure~\ref{fig:NORM}(IV) showing the update graph for the allocation shown in
Figure~\ref{fig:NORM}(II) w.r.t.~the allocation shown in Figure~\ref{fig:NORM}(III).

\medskip

As each agent gets at most one good in $\eta_\C^{\bf 1}$ and $\lambda_\C^{\bf 1}$, each node in $\Delta(\eta_\C^{\bf 1},\lambda_\C^{\bf 1})$
but $s$ and $t$ has exactly one incoming edge and one outgoing edge. Moreover, by construction, $s$ has no incoming edge, and $t$ has no
outgoing edge. Thus, the update graph consists of a number of paths from $s$ to $t$ and a number of cycles, all of them being disjoint from
each other.

Let $\{\tau_1,...,\tau_h\}$ be the set of all possible paths from $s$ to $t$ or cycles in $\Delta(\eta_\C^{\bf 1},\lambda_\C^{\bf 1})$, and for
a path or a cycle $\tau_i=\alpha_1,...,\alpha_m$, let $\nodes(\tau_i)$ be the set $\{\alpha_1,...,\alpha_m\}\setminus\{s,t\}$.
In addition, let us fix the following notation: For any function $\pi':A'\mapsto G'$, let $\pi'[A']$ denote the restriction of $\pi'$ over
$A'\subseteq A$. Moreover, for the functions $\pi_1:A_1\mapsto G_1$ and $\pi_2:A_2\mapsto G_2$ with $A_1\cap A_2=\emptyset$, let
$\pi_1\biguplus\pi_2:A_1\cup A_2\mapsto G_1\cup G_2$ be such that $(\pi_1\biguplus\pi_2)[A_1]=\pi_1$ and $(\pi_1\biguplus\pi_2)[A_2]=\pi_2$.
%where $(\pi_1\biguplus\pi_2)[X]$ is the restriction of $\pi_1 \biguplus \pi_2$ over $X\in \{A_1,A_2\}$.

\smallskip

By the construction of the update graph, note that $\lambda_\C^{\bf 1}$ can be expressed in terms of the disjoint paths/cycles
$\tau_1,...,\tau_h$ by the following expression:
 $$\eta_\C^{\bf 1}[\C^{\bf 1}\setminus\bigcup_{i=1}^h \nodes(\tau_i)]\ \biguplus_{i=1}^h \lambda_\C^{\bf
1}[\nodes(\tau_i)].$$ Because $\val(\eta_\C^{\bf 1},\onC{\w}^{\bf 1})<\val(\lambda_\C^{\bf 1},\onC{\w}^{\bf 1})$, there must exists a set of
agents $\nodes(\tau_k)$, associated with some disjoint path/cycle  $\tau_k$, with $1\leq k\leq h$, such that the value of the goods allocated
to these agents according to $\lambda_\C^{\bf 1}$ is greater than the corresponding value for the same agents obtained after $\eta_\C^{\bf 1}$.
That is, the function $\pi_{\tau_k}=\eta_\C^{\bf 1}[\C^{\bf 1}\setminus \nodes(\tau_{k})]\biguplus \lambda_\C^{\bf 1}[\nodes(\tau_k)]$, which
is an allocation for $\tuple{\C,G,\onC{\omega}}^{\bf 1}$, is such that $\val(\pi_{\tau_k},\onC{\w}^{\bf 1})>\val(\eta_\C^{\bf 1},\onC{\w}^{\bf
1})$.
Note that if $\tau_k$ were a cycle or a path of the form $s,\alpha_2,\dots,\alpha_{m-1},t$ such that $\lambda_\C^{\bf 1}(\alpha_{2})\subseteq
G_\pi$,
then $\img(\pi_{\tau_k})\subseteq G_\pi$ would hold. Indeed, only the first node in a path, such as $\alpha_2$, may be such that
$\lambda_\C^{\bf 1}(\alpha_{2})\setminus G_\pi\neq\emptyset$. However, as observed above, this is impossible because
$\val(\pi_{\tau_k},\onC{\w}^{\bf 1})>\val(\eta_\C^{\bf 1},\onC{\w}^{\bf 1})$ would contradict the optimality of $\eta_\C^{\bf 1}$, and hence
the optimality of $\eta_\C$, by Fact~\ref{lemma:norm}.

Therefore, we can conclude that $\tau_k$ is a path of the form $s,\alpha_2,\dots,\alpha_{m-1},t$ with
$\pi_{\tau_k}(\alpha_2)=\lambda(\alpha_2)=\{g'\}\in G\setminus G_\pi$. That is, the allocation $\pi_{\tau_k}$ (over the agents in $\C^{\bf 1}$)
is such that
 $\img(\pi_{\tau_k})=\{g'\}\cup \img(\eta_\C^{\bf 1})\setminus \eta_\C^{\bf 1}(\alpha_{m-1})$.
 In particular, observe that
  $\eta_\C^{\bf 1}(\alpha_{m-1})\subseteq G_\pi\setminus\img(\pi_{\tau_k})$.

\smallskip

Let us now come back to the optimal allocation $\pi$ for $\tuple{\A,G,\omega}$ w.r.t.~$\w$, and let $\pi^{\bf 1}$ be an (optimal) allocation in
$\norm(\pi)$. Let $A\subseteq \C^{\bf 1}$ be a set of agents with $\alpha_2\in A$ such that the set of goods $\bigcup_{c\in A}\pi^{\bf 1}(c)$
allocated to these agents according to $\pi^{\bf 1}$ is equal to $\bigcup_{c\in A}\pi_{\tau_k}(c)\setminus\{g'\}\cup G'',$
where $G''\subseteq G_\pi\setminus\img(\pi_{\tau_k})$ and $|G''|\leq 1$. Note that a set $A$ having this property in fact exists: just start
with $\{\alpha_2\}$ and then add agents from $\nodes(\tau_k)$ until some $c$ is found with $\pi^{\bf 1}(c)\subseteq
G_\pi\setminus\img(\pi_{\tau_k})$.

Consider then $\bar\pi=\pi_{\tau_k}[A]\biguplus \pi^{\bf 1}[\A^{\bf 1}\setminus A]$ and note that $\bar\pi$ is indeed an allocation for
$\tuple{\A,G,{\omega}}^{\bf 1}$, because the construction of the set $A$ guarantees that no good allocated according to $\pi_{\tau_k}[A]$ can
be allocated by $\pi^{\bf 1}$ to agents in $[\A^{\bf 1}\setminus A]$, and vice-versa.
%In particular, note that the good $g\in G\setminus\img(\pi)$ is not allocated to any agent in $\A^{\bf 1}\setminus A$ by $\pi^{\bf 1}$. A
Since $\pi^{\bf 1}$ is an optimal allocation for $\tuple{\A,G,{\omega}}^{\bf 1}$ w.r.t.~$\w^{\bf 1}$,
 $\val(\pi^{\bf 1},\onC{\w}^{\bf 1})\geq \val(\bar\pi,\onC{\w}^{\bf 1})$ holds.
Thus, by construction of $\bar\pi$, we get $\val(\pi^{\bf 1}[{A}],\onC{\w}^{\bf 1})\geq \val(\bar\pi[{A}],\onC{\w}^{\bf
1})=\val(\pi_{\tau_k}[A],\onC{\w}^{\bf 1})$.

Finally, let $\bar\pi'_\C=\pi_{\tau_k}[\C^{\bf 1}\setminus A]\biguplus \pi^{\bf 1}[A]$ and note that $\bar\pi_\C'$ is an allocation for
$\tuple{\C,G,\onC{\omega},\onC{\w}}^{\bf 1}$.
Moreover, observe that $\val(\bar\pi_\C',\onC{\w}^{\bf 1})\geq \val(\pi_{\tau_k},\onC{\w}^{\bf 1})>\val(\eta_\C^{\bf 1})$ and
$\img(\bar\pi_\C')\subseteq G_\pi$. For this latter, just recall that $\alpha_2\in A$ is the only agent in $\C^{\bf 1}$ such that
$\pi_{\tau_k}(\alpha_2)\setminus G_\pi\neq \emptyset$. Again, this entails that $\eta_\C^{\bf 1}$ is not optimal w.r.t.~$\onC{\w}^{\bf 1}$ and
hence by Fact~\ref{lemma:norm} $\eta_\C$ is also not optimal for $\tuple{\C,G_\pi,\onC{\omega}}$ w.r.t.~$\onC{\w}$. Contradiction.~\hfill~
\end{proof}

The result immediately entails the following two corollaries.

\begin{corollary}\label{cor:uno}
For each optimal allocation $\pi$ for $\tuple{\A,G,\omega}$ w.r.t.~$\w$ and for each set $\C\subseteq \A$ of agents,
$\opt(\tuple{\C,\img(\pi),\onC{\omega}},\onC{\w})=\opt(\tuple{\C,G,\onC{\omega}},\onC{\w})$.
\end{corollary}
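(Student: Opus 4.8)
The plan is to derive the equality directly from Theorem~\ref{thm:monotonicity}, combined with the trivial monotonicity of the optimal value under enlarging the set of available goods. First I would establish the easy inequality $\opt(\tuple{\C,\img(\pi),\onC{\omega}},\onC{\w})\leq\opt(\tuple{\C,G,\onC{\omega}},\onC{\w})$. This is immediate: since $\img(\pi)\subseteq G$, every allocation for $\tuple{\C,\img(\pi),\onC{\omega}}$ is in particular an allocation for $\tuple{\C,G,\onC{\omega}}$, so the best value achievable when only the goods in $\img(\pi)$ are available cannot exceed the best value achievable when all of $G$ is available.

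For the reverse inequality, I would invoke Theorem~\ref{thm:monotonicity}. Let $\eta_\C$ be any optimal allocation for the restricted scenario $\tuple{\C,\img(\pi),\onC{\omega}}$ w.r.t.~$\onC{\w}$, so that $\val(\eta_\C,\onC{\w})=\opt(\tuple{\C,\img(\pi),\onC{\omega}},\onC{\w})$. Because $\pi$ is optimal for $\tuple{\A,G,\omega}$ w.r.t.~$\w$ and $G_\pi=\img(\pi)$, the theorem guarantees that $\eta_\C$ is also an optimal allocation for the unrestricted scenario $\tuple{\C,G,\onC{\omega}}$ w.r.t.~$\onC{\w}$. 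Hence $\val(\eta_\C,\onC{\w})=\opt(\tuple{\C,G,\onC{\omega}},\onC{\w})$ as well, and combining the two identities yields the claimed equality $\opt(\tuple{\C,\img(\pi),\onC{\omega}},\onC{\w})=\opt(\tuple{\C,G,\onC{\omega}},\onC{\w})$.

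There is essentially no obstacle to overcome here: all the combinatorial content—the update-graph argument establishing that goods outside $\img(\pi)$ can never strictly help a coalition $\C$—has already been discharged in the proof of Theorem~\ref{thm:monotonicity}. The only minor point to note is that an optimal allocation $\eta_\C$ for the restricted problem does exist, which holds because the set of allocations for a fixed scenario is finite and non-empty (the empty assignment is always an allocation), so the maximum defining $\opt$ is attained.
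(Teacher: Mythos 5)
Your proof is correct and matches the paper's intent exactly: the paper derives Corollary~\ref{cor:uno} as an immediate consequence of Theorem~\ref{thm:monotonicity}, and your argument is precisely that immediate derivation spelled out (with the added, harmless observation that the easy inequality and the existence of an optimal restricted allocation hold). Nothing is missing; indeed, the second half of your argument alone already yields the equality, since the common allocation $\eta_\C$ witnesses both optimal values.
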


\begin{corollary}\label{cor:due}
Let $\pi$ be an optimal allocation for $\tuple{\A,G,\omega}$ w.r.t.~$\w$, and let $\pi'$ be any allocation for $\tuple{\A,G,\omega}$, hence
with $\val(\pi,\w)\geq \val(\pi',\w)$.
%of goods in $G$ to the agents in $\A$.
Then, for each set $\C\subseteq \A$ of agents, $\opt(\tuple{\C,\img(\pi),\onC{\omega}},\onC{\w})\geq \opt(\tuple{\C,\img(
\pi'),\onC{\omega}},\onC{\w})$.
\end{corollary}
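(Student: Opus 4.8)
The plan is to derive this directly from Corollary~\ref{cor:uno} combined with an elementary monotonicity property of the optimum under shrinking the pool of available goods. Fix an arbitrary set $\C\subseteq\A$. First I would invoke Corollary~\ref{cor:uno} for the \emph{optimal} allocation $\pi$: since $\pi$ is optimal for $\tuple{\A,G,\omega}$ w.r.t.~$\w$, that corollary yields $\opt(\tuple{\C,\img(\pi),\onC{\omega}},\onC{\w})=\opt(\tuple{\C,G,\onC{\omega}},\onC{\w})$. This is the only place where optimality of $\pi$ enters, and it turns the left-hand side of the claimed inequality into the optimum over the \emph{whole} good set $G$.

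Next I would handle the right-hand side by a purely set-theoretic observation, without appealing to optimality of $\pi'$ (which indeed need not be optimal). Because $\img(\pi')\subseteq G$, every allocation for the restricted scenario $\tuple{\C,\img(\pi'),\onC{\omega}}$ is also a legal allocation for $\tuple{\C,G,\onC{\omega}}$ with the same value w.r.t.~$\onC{\w}$. Hence enlarging the admissible goods from $\img(\pi')$ to $G$ can only increase (or leave unchanged) the attainable optimum, giving $\opt(\tuple{\C,\img(\pi'),\onC{\omega}},\onC{\w})\leq\opt(\tuple{\C,G,\onC{\omega}},\onC{\w})$.

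Chaining the two relations yields $\opt(\tuple{\C,\img(\pi'),\onC{\omega}},\onC{\w})\leq\opt(\tuple{\C,G,\onC{\omega}},\onC{\w})=\opt(\tuple{\C,\img(\pi),\onC{\omega}},\onC{\w})$, which is exactly the desired inequality for this $\C$; since $\C$ was arbitrary, the corollary follows. The conceptual point to get right---and what one might call the ``obstacle,'' though it is a mild one---is the asymmetry between $\pi$ and $\pi'$: Corollary~\ref{cor:uno} can be applied to $\pi$ but not to $\pi'$, so for $\pi'$ one must settle for the weaker monotonicity bound rather than an equality. This asymmetry is precisely what makes the statement an \emph{inequality} rather than an equality, and it is exactly why Corollary~\ref{cor:uno} (and behind it Theorem~\ref{thm:monotonicity}) is needed at all.
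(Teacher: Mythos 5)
Your proof is correct and coincides with the paper's intended argument: the paper states Corollary~\ref{cor:due} as an immediate consequence of Theorem~\ref{thm:monotonicity}, and the derivation it has in mind is exactly yours---apply Corollary~\ref{cor:uno} to the optimal allocation $\pi$ to replace $\img(\pi)$ by $G$, then use the trivial fact that $\opt(\tuple{\C,\img(\pi'),\onC{\omega}},\onC{\w})\leq\opt(\tuple{\C,G,\onC{\omega}},\onC{\w})$ since $\img(\pi')\subseteq G$. Your remark about the asymmetry between $\pi$ and $\pi'$ (equality on one side, mere monotonicity on the other) is also the right way to read the statement.
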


\begin{figure}[t]
\centering  \fbox{
\parbox{0.98\textwidth}
{
\begin{tabular}{ll}
\textbf{Input}:     &\ An allocation $\pi$ for $\tuple{\A,G,\omega}$, and a vector $\w\in \D$;  \\
\textbf{Assumption}: &\ A verifier $\v$ is available. Let $\v(\pi)=(v_1,...,v_n)$;
\end{tabular}\\
\vspace{-0mm}\hrule\vspace{1mm}
\begin{tabular}{ll}
   1.\ \ \ & Let $\mymath{C}$ denote the set of all possible subsets of $\A$;    \\
   2.\ \ \ & For each set $\C\in \mymath{C}$,\\
   3.\ \ \ & $\lfloor$\ \ \ \ Compute an optimal allocation $\pi_\C$ for $\tuple{\C,\img(\pi),\onC{\omega}}$ w.r.t.~$\onC{\w}$; \\
   4.\ \ \ & For each agent $i\in \A$,\\
   5.\ \ \ & $|$\ \ \ \ For each set $\C\in \mymath{C}$,\\
   6.\ \ \ & $|$\ \ \ \ $|$\ \ \ \ Let $\Delta^1_{\C,i}(\pi,\w):=\val(\pi_\C,\onC{(v_i,\w_{-i})})$; \ \ \ \ \  (=$v_i(\pi_{\C})+\sum_{j\in \C\setminus\{i\}}w_j(\pi_{\C})$); \\
   7.\ \ \ & $|$\ \ \ \ \hspace{-0.2mm}$\lfloor$\ \ \ \ Let $\Delta^2_{\C,i}(\pi,\w):=\val(\pi_{\C\setminus\{i\}},\onC{\w})$; \ \ \ \ \ (=$\sum_{j\in \C\setminus\{i\}}w_j(\pi_{\C\setminus\{i\}})$); \\
   8.\ \ \ & $|$\ \ \ \ Let $\xi_i(\pi,\w):=\sum_{\C \in \mymath{C}}\frac{(|\A|-|\C|)!(|\C|-1)!}{|\A|!}(\Delta^1_{\C,i}(\pi,\w)-\Delta^2_{\C,i}(\pi,\w))$; \\
   9.\ \ \ & \hspace{-0.2mm}$\lfloor$\ \ \ \ Define $p_i^\xi(\pi,\w):=\xi_i(\pi,\w)-v_i(\pi)$;    \\
\end{tabular}
} }\vspace{-1mm}
  \caption{Payment rule $\p^\xi$.} \label{fig:M1}
\end{figure}

\subsection{The Design of a Truthful Mechanism}

With the above notation and results in place, we can now discuss the payment rule $\p^\xi$ that is illustrated in Figure~\ref{fig:M1}:
We are given an allocation $\pi$ that selects some goods $\img(\pi)\subseteq G$ for the agents in $\A$, plus a vector $\w\in \D$. Moreover, we
assume the existence of a verifier computing the vector $\v(\pi)=(v_1,...,v_n)$.

In the first three steps, the payment rule associates an optimal allocation $\pi_\C$ for $\tuple{\C,\img(\pi),\onC{\omega}}$ w.r.t.~$\onC{\w}$
with each set $\C\in \mymath{C}$ of agents, where $\mymath{C}$ is the powerset of $\A$, i.e., the set of all possible subsets of agents. Then,
for each agent $i\in \A$ and for each set $\C\in \mymath{C}$, we define two terms, namely $\Delta^1_{\C,i}(\pi,\w)$ and
$\Delta^2_{\C,i}(\pi,\w)$, which evaluate the allocations $\pi_\C$ and $\pi_{\C\setminus\{i\}}$ under the assumption that agent types are
$\onC{(v_i,\w_{-i})}$ and $\onCi{\w}$, respectively.
These terms will play a role in the definition of the value $\xi_i(\pi,\w)$ at step~8.
In particular, we observe that the definition of this value is reminiscent of the definition of the \emph{Shapley value}, as it considers the
marginal contribution of each possible set $\C$ summed up in a weighted manner w.r.t.~its size.\footnote{The reader that is not
familiar with this solution concept is referred to the introductory part of Section~\ref{sec:coalitional}, where the Shapley value is formally defined.}
Finally, the payment $p_i^\xi(\pi,\w)$ is defined at step~9 as the difference between $\xi_i(\pi,\w)$ and $v_i(\pi)$.

Note that the idea underlying the definition of $\p^\xi$ is that, after verification is performed, the utility function will precisely coincide
with the ``bonus'' $\xi_i(\pi,\w)$, hence sharing the spirit\footnote{In fact, the peculiar form of $\xi_i(\pi,\w)$ does not fit the general
schema by~\citeA{Nisan2001}.} of the approach by~\citeA{Nisan2001}.
Indeed, the following is immediate.

\begin{lemma}\label{lemma:utility}
For each allocation $\pi$ for $\tuple{\A,G,\omega}$, for each vector $\w\in \D$, and for each agent $i\in \A$, it holds that
$u_{i}(\pi,\w)=\xi_i(\pi,\w)$.
\end{lemma}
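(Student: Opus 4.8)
The plan is to prove the identity by directly unfolding the two definitions involved and observing that a single term cancels. By the definition of the (quasi-linear) individual welfare, the utility of agent $i$ under the payment rule $\p^\xi$ is $u_i(\pi,\w) = v_i(\pi) + p_i^\xi(\pi,\w)$, where $v_i(\pi)$ is the verified value of the goods allocated to $i$ by $\pi$. Step~9 of the payment rule in Figure~\ref{fig:M1} sets $p_i^\xi(\pi,\w) := \xi_i(\pi,\w) - v_i(\pi)$. Substituting this expression into the definition of $u_i$, the two occurrences of $v_i(\pi)$ cancel, which yields $u_i(\pi,\w) = v_i(\pi) + (\xi_i(\pi,\w) - v_i(\pi)) = \xi_i(\pi,\w)$, as claimed. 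The argument holds for every allocation $\pi$, every vector $\w \in \D$, and every agent $i \in \A$, since it never uses any special property of these objects.

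There is essentially no obstacle here: the statement is true by construction, because $p_i^\xi$ was deliberately defined so that adding back the verified value $v_i(\pi)$ recovers exactly the ``bonus'' $\xi_i(\pi,\w)$. The only point deserving care is to confirm that the $v_i(\pi)$ term appearing in the utility definition is literally the same quantity subtracted in step~9---both refer to the verified value $\sum_{g \in \pi(i)} v_i(g)$ of the goods assigned to $i$ under $\pi$---so that the cancellation is exact rather than merely approximate. Once this bookkeeping is checked, the lemma follows in one line, and it records the crucial design feature that the subsequent analysis of truthfulness, individual rationality, fairness, and budget-balance can all be reduced to studying the single function $\xi_i$ in place of the full utility $u_i$.
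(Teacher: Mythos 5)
Your proof is correct and matches the paper's treatment: the paper states the lemma as ``immediate'' precisely because, by step~9 of Figure~\ref{fig:M1}, $p_i^\xi(\pi,\w)=\xi_i(\pi,\w)-v_i(\pi)$, so the quasi-linear utility $u_i(\pi,\w)=v_i(\pi)+p_i^\xi(\pi,\w)$ collapses to $\xi_i(\pi,\w)$ by exact cancellation of the verified value. Your added remark that both occurrences of $v_i(\pi)$ denote the same quantity is the only bookkeeping needed, and it is handled correctly.
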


By exploiting this characterization, we can now show the first crucial result on the payment rule $\p^\xi$, i.e., that the mechanism
$(A,\p^\xi)$---where $A$ is any arbitrary optimal allocation algorithm---is truthful.

\begin{theorem}[\textbf{truthfulness}]\label{thm:truth}
Let $A$ be any optimal allocation algorithm. Then, the mechanism with verification $(A,\p^\xi)$ is truthful.
\end{theorem}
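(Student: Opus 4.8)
The plan is to use Lemma~\ref{lemma:utility} to replace utilities by the quantities $\xi_i$, and then to recognise these as Shapley values of two assignment games that can be compared termwise. Writing $\w'=(\mydot{t_i}{\w_{-i}})$ for the profile in which $i$ reports truthfully, and setting $\pi=A(\w)$, $\pi'=A(\w')$, the identity $u_i=\xi_i$ turns the dominant-strategy condition into $\xi_i(\pi',\w')\ge \xi_i(\pi,\w)$ for every $\w\in\D$. I would first observe that only the coalitions $\C\ni i$ contribute to the sum at step~8: for $\C\not\ni i$ we have $\C\setminus\{i\}=\C$ and $\pi_\C$ assigns nothing to $i$, whence $\Delta^1_{\C,i}=\Delta^2_{\C,i}$. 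I would also note that $\Delta^2_{\C,i}$ never involves $i$'s reported type, so it may be read off $\w'$ as $\opt(\tuple{\C\setminus\{i\},\img(\pi),\omega},\w')$.

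For the truthful profile, since the verifier returns $v_i=t_i=w'_i$ on $\img(\pi')$ and $\pi'_\C$ allocates only goods of $\img(\pi')$, evaluating $\pi'_\C$ under $(\mydot{v_i}{\w'_{-i}})$ coincides with evaluating it under $\w'$; hence $\Delta^1_{\C,i}(\pi',\w')=\opt(\tuple{\C,\img(\pi'),\omega},\w')$ \emph{exactly}. As $\pi'$ is optimal w.r.t.~$\w'$, Corollary~\ref{cor:uno} lets me replace $\img(\pi')$ by the whole $G$, so that $\xi_i(\pi',\w')=\sv_i(v')$ is precisely the Shapley value of the full assignment game $v'(\C)=\opt(\tuple{\C,G,\omega},\w')$. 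For the deviation, the key point is that $\pi_\C$ is optimal for $\w$, \emph{not} for $\w'$; evaluating it under $\w'$ therefore yields only a bound, $\Delta^1_{\C,i}(\pi,\w)=\val(\pi_\C,\w')\le \opt(\tuple{\C,\img(\pi),\omega},\w')$. Combining this with the exact value of $\Delta^2_{\C,i}$ and the positivity of the Shapley weights gives $\xi_i(\pi,\w)\le \sv_i(v_\pi)$, where $v_\pi(\C)=\opt(\tuple{\C,\img(\pi),\omega},\w')$ is the assignment game restricted to the goods $\img(\pi)\subseteq G$ actually selected after the deviation.

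It thus remains to prove $\sv_i(v')\ge \sv_i(v_\pi)$. Using the random-order form $\sv_i(v)=\tfrac{1}{|\A|!}\sum_\sigma\bigl(v(P^\sigma_i\cup\{i\})-v(P^\sigma_i)\bigr)$, where $P^\sigma_i$ is the set of agents preceding $i$ in $\sigma$, this reduces to the \emph{termwise} domination of marginal contributions: for every $\C\ni i$, $\opt(\tuple{\C,G,\omega},\w')-\opt(\tuple{\C\setminus\{i\},G,\omega},\w')\ \ge\ \opt(\tuple{\C,\img(\pi),\omega},\w')-\opt(\tuple{\C\setminus\{i\},\img(\pi),\omega},\w')$.

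I expect this last inequality to be the main obstacle. Note that $v'\ge v_\pi$ holds pointwise (by Corollary~\ref{cor:due}, or simply because $\img(\pi)\subseteq G$), but pointwise domination does \emph{not} transfer to Shapley values in general; indeed Corollary~\ref{cor:due} dominates both the minuend and the subtrahend in the same direction, so it cannot by itself control the marginal difference. What is really needed is a \emph{complementarity} statement — an agent and the extra goods help one another — equivalently $\opt(\tuple{\C,G,\omega},\w')+\opt(\tuple{\C\setminus\{i\},\img(\pi),\omega},\w')\ge \opt(\tuple{\C,\img(\pi),\omega},\w')+\opt(\tuple{\C\setminus\{i\},G,\omega},\w')$. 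I would establish it by an exchange argument on the superimposition of an optimal one-good assignment for $(\C,\img(\pi))$ and one for $(\C\setminus\{i\},G)$, decomposing it into alternating paths and cycles exactly as the update graph in the proof of Theorem~\ref{thm:monotonicity}, and rerouting the edges incident to the clones of $i$ and to the goods of $G\setminus\img(\pi)$ so as to produce assignments for $(\C,G)$ and $(\C\setminus\{i\},\img(\pi))$ of at least the same total value. Chaining the three steps, $\xi_i(\pi,\w)\le \sv_i(v_\pi)\le \sv_i(v')=\xi_i(\pi',\w')$, yields truthfulness.
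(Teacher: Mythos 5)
Your argument tracks the paper's own proof for most of its length: the paper, too, compares the two sums termwise, uses Corollary~\ref{cor:uno} (applied to the truthful allocation $\pi'$) to evaluate the truthful side exactly as $\opt(\tuple{\C,G,\omega},\w')-\opt(\tuple{\C\setminus\{i\},G,\omega},\w')$, and bounds the deviation's $\Delta^1$ term by feasibility. But your final step contains a genuine gap. You reduce everything to the complementarity inequality
\begin{equation*}
\opt(\tuple{\C,G,\omega},\w')+\opt(\tuple{\C\setminus\{i\},\img(\pi),\omega},\w')\;\ge\;\opt(\tuple{\C,\img(\pi),\omega},\w')+\opt(\tuple{\C\setminus\{i\},G,\omega},\w'),
\end{equation*}
and then only announce a proof plan (``I would establish it by an exchange argument\dots''). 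That plan is not routine: in the superimposition of an optimal assignment for $(\C,\img(\pi))$ with one for $(\C\setminus\{i\},G)$, the natural component-swap breaks down exactly when an alternating path ends at a good of $G\setminus\img(\pi)$ matched only in the second assignment---the swapped assignment for $(\C\setminus\{i\},\img(\pi))$ would then use a forbidden good. Repairing this requires a further argument (what you are reaching for is essentially Shapley's complements-and-substitutes theorem for assignment problems, a statement with real content), so as written the proof is incomplete at its decisive point.

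The gap closes immediately once you use the hypothesis you never exploited: $\pi=A(\w)$ is itself an \emph{optimal} allocation w.r.t.~$\w$, so Theorem~\ref{thm:monotonicity} and Corollary~\ref{cor:uno} apply to it as well. For any coalition of the form $\C\setminus\{i\}$, the only coordinate in which $\w$ and $\w'$ differ (agent $i$'s) is irrelevant, hence
\begin{equation*}
\opt(\tuple{\C\setminus\{i\},\img(\pi),\omega},\w')=\opt(\tuple{\C\setminus\{i\},\img(\pi),\omega},\w)=\opt(\tuple{\C\setminus\{i\},G,\omega},\w)=\opt(\tuple{\C\setminus\{i\},G,\omega},\w').
\end{equation*}
In your notation: $v_\pi$ and $v'$ \emph{coincide} on every coalition not containing $i$. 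The subtrahends of the two marginal contributions are therefore equal, and the termwise domination you need degenerates to the trivial $\opt(\tuple{\C,\img(\pi),\omega},\w')\le\opt(\tuple{\C,G,\omega},\w')$; no complementarity statement, and no new exchange argument, is required. This equality of subtrahends is exactly property (B) in the paper's proof, which is why the paper never has to introduce the intermediate game $v_\pi$ at all: it bounds the deviation's $\Delta^1$ term directly by $\opt(\tuple{\C,G,\omega},\w')$ and is done.
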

\begin{proof}
We have to show that, for each agent $i\in \A$, and reported type vector $\d$, the following holds:
$u_{i}(A(\mydot{t_i}{\d_{-i}}),(\mydot{t_i}{\d_{-i}}))\geq u_{i}(A(\d),\d)$; hence, by Lemma~\ref{lemma:utility}, that
$\xi_i(A(\mydot{t_i}{\d_{-i}}),(t_i,\d_{-i}))\geq \xi_i(A(\d),\d)$.

Consider the construction reported in Figure~\ref{fig:M1} for the two cases of $\w=\d$ and $\w=(\mydot{t_i}{\d_{-i}})$, and let $\pi=A(\d)$ and
$\pi'=A(\mydot{t_i}{\d_{-i}})$ be the corresponding allocations (optimal w.r.t.~$\d$ and $(t_i,\d_{-i})$, respectively) received as input by
the payment rule in Figure~\ref{fig:M1}. For any set $\C\in \mymath{C}$ of agents, let $\pi_C$ (resp., $\pi'_\C$) be the allocation computed at
step~3. We show that the following two properties hold, for each set of agents $\C\in \mymath{C}$:
\begin{myitemize}
  \item[(A)] $\Delta^1_{\C,i}(\pi',(\mydot{t_i}{\d_{-i}}))\geq  \Delta^1_{\C,i}(\pi,(\mydot{b_i}{\d_{-i}}))$, and
  \item[(B)] $\Delta^2_{\C,i}(\pi',(\mydot{t_i}{\d_{-i}})) = \Delta^2_{\C,i}(\pi,(\mydot{b_i}{\d_{-i}}))$.
\end{myitemize}

In order to prove (A), observe that by step~6, $\Delta^1_{\C,i}(\pi',(\mydot{t_i}{\d_{-i}}))=v_i(\pi_{\C}')+\sum_{j\in
\C\setminus\{i\}}d_j(\pi_{\C}')=t_i(\pi_{\C}')+\sum_{j\in \C\setminus\{i\}}d_j(\pi_{\C}')$, because in this case $i$ reports the actual private
type $t_i$, which is equal to the verified one.
Now, recall that $\pi'_\C$ is an optimal allocation for $\tuple{\C,\img(\pi'),\onC{\omega}}$ w.r.t.~$\onC{(\mydot{t_i}{\d_{-i}})}$, and
$\pi'=A(\mydot{t_i}{\d_{-i}})$ is an optimal allocation for
$\tuple{\A,G,\omega}$ w.r.t.~$(\mydot{t_i}{\d_{-i}})$. Thus, by Corollary~\ref{cor:uno}, %for each allocation $\pi''_\C:\C\mapsto G$, we have:
\begin{equation}\label{eq:d2}
t_i(\pi_{\C}')+\sum_{j\in \C\setminus\{i\}}d_j(\pi_{\C}')=
 \val(\pi_{\C}',\onC{(\mydot{t_i}{\d_{-i}})}) = \opt(\tuple{\C,G,\onC{\omega}},\onC{(\mydot{t_i}{\d_{-i}})}).
\end{equation}
Similarly, $\Delta^1_{\C,i}(\pi,(\mydot{b_i}{\d_{-i}}))=v_i(\pi_{\C})+\sum_{j\in \C\setminus\{i\}}d_j(\pi_{\C})$. In this case, observe that the reported type of
$i$ is $b_i$, which is different from $t_i$. Also, $\pi_\C$ is an allocation for $\tuple{\C,\img(\pi),\onC{\omega}}$, hence it allocates
goods in $\img(\pi)$, for which the true type is revealed in $\v(\pi)$. Thus, we may compute
$\Delta^1_{\C,i}(\pi,(\mydot{b_i}{\d_{-i}}))=v_i(\pi_{\C})+\sum_{j\in \C\setminus\{i\}}d_j(\pi_{\C})= t_i(\pi_{\C})+\sum_{j\in \C\setminus\{i\}}d_j(\pi_{\C})$.

In order to conclude, let us note that $\pi_\C$ is an allocation for $\tuple{\C,G,\onC{\omega}}$, though it is not necessarily optimal
w.r.t.~$\onC{(\mydot{t_i}{\d_{-i}})}$. Thus, by using Equation~\ref{eq:d2},  $t_i(\pi_{\C}')+\sum_{j\in
\C\setminus\{i\}}d_j(\pi_{\C}')=\opt(\tuple{\C,G,\onC{\omega}},\onC{(\mydot{t_i}{\d_{-i}})})\geq t_i(\pi_{\C})+\sum_{j\in
\C\setminus\{i\}}d_j(\pi_{\C})$. This shows that (A) holds.

Let us now focus on (B). By step~7, we preliminary observe that we have $\Delta^2_{\C,i}(\pi',(\mydot{t_i}{\d_{-i}}))=\sum_{j\in
\C\setminus\{i\}}d_j(\pi'_{\C\setminus\{i\}})$ and $\Delta^2_{\C,i}(\pi,(\mydot{b_i}{\d_{-i}}))=\sum_{j\in
\C\setminus\{i\}}d_j(\pi_{\C\setminus\{i\}})$. Then, recall that $\pi_{\C\setminus\{i\}}$ is an optimal allocation for
$\tuple{\C\setminus\{i\},\img(\pi),\onCi{\omega}}$ w.r.t.~$\onCi{(\mydot{t_i}{\d_{-i}})}$ and $\pi=A(\mydot{b_i}{\d_{-i}})$ is an optimal
allocation for $\tuple{\A,G,\omega}$ w.r.t.~$(\mydot{b_i}{\d_{-i}})$. Thus, by Corollary~\ref{cor:uno}, and because $i$'s evaluation is
immaterial here, we get (B) as follows: $\sum_{j\in
\C\setminus\{i\}}d_j(\pi'_{\C\setminus\{i\}})=\opt(\tuple{\C\setminus\{i\},G,\onCi{\omega}},\onCi{(\mydot{t_i}{\d_{-i}})}) =
\opt(\tuple{\C\setminus\{i\},G,\onCi{\omega}},\onCi{(\mydot{b_i}{\d_{-i}})}) = \sum_{j\in \C\setminus\{i\}}d_j(\pi_{\C\setminus\{i\}}).
$~\hfill~
\end{proof}

\medskip

In addition to truthfulness, it suddenly emerges that the payment rule is indifferent w.r.t.~deviations from the actual values (possibly,
cheats) on goods that do not occur in the allocation being selected. We have already observed that this property is relevant in our motivating
scenario, where payments should be made only with respect to certified goods (as evaluated by a third-party agency). Now, we state it in formal
terms.

\begin{theorem}[\textbf{implementability}]\label{fact:robust}
For each allocation $\pi$ for $\tuple{\A,G,\omega}$ and for each agent $i\in \A$,  it holds that $p_i^\xi(\pi,\w)=p_i^\xi(\pi,\w')$, for each
pair $\w,\w'\in \D$ of vectors that differ only outside $\img(\pi)$.
\end{theorem}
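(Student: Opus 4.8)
The plan is to read off the value $p_i^\xi(\pi,\w)$ from the definition in Figure~\ref{fig:M1} and observe that each quantity entering it depends on $\w$ only through the restriction of $\w$ to the goods in $\img(\pi)$, while the verified types $\v(\pi)$ do not depend on $\w$ at all. Since $\w$ and $\w'$ are assumed to agree on $\img(\pi)$, all these quantities are literally the same for the two vectors, and hence so are the payments. No earlier result is really needed: the statement is a structural consequence of the fact that \emph{every} optimization performed by the rule lives on the restricted good set $\img(\pi)$.

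First I would dispose of the subtracted term at step~9. By definition of the verifier, $v_i(g)=t_i(g)$ for every $g\in\img(\pi)$, and since $\pi(i)\subseteq\img(\pi)$ we have $v_i(\pi)=\sum_{g\in\pi(i)}t_i(g)$; this involves neither $\w$ nor $\w'$, so it is a constant that cancels. It therefore suffices to prove $\xi_i(\pi,\w)=\xi_i(\pi,\w')$, and by the weighted sum at step~8 it is enough to show, for each $\C\in\mymath{C}$, that $\Delta^1_{\C,i}(\pi,\w)=\Delta^1_{\C,i}(\pi,\w')$ and $\Delta^2_{\C,i}(\pi,\w)=\Delta^2_{\C,i}(\pi,\w')$. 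The crucial observation is that the allocations $\pi_\C$ and $\pi_{\C\setminus\{i\}}$ built at step~3 are allocations for the restricted scenarios $\tuple{\C,\img(\pi),\onC{\omega}}$ and $\tuple{\C\setminus\{i\},\img(\pi),\onC{\omega}}$, whose good set is exactly $\img(\pi)$. Consequently step~3 references $\w$ only at goods of $\img(\pi)$, where $\w$ and $\w'$ coincide; the input instance is thus identical in the two cases, so the deterministic subroutine returns the \emph{same} allocation. This is the one point deserving care: even if several optimal allocations existed, the identity of the restricted instances forces the same deterministic choice, so no tie-breaking discrepancy can arise. Making this explicit is the main (and only) obstacle, and it is immediate once one notes that the restricted instance does not change.

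With $\pi_\C$ and $\pi_{\C\setminus\{i\}}$ now fixed and allocating only goods of $\img(\pi)$, I would expand the two terms using the parenthetical forms of steps~6 and~7, namely $\Delta^1_{\C,i}=v_i(\pi_\C)+\sum_{j\in\C\setminus\{i\}}w_j(\pi_\C)$ and $\Delta^2_{\C,i}=\sum_{j\in\C\setminus\{i\}}w_j(\pi_{\C\setminus\{i\}})$. Each summand $w_j(\pi_\C)$ and $w_j(\pi_{\C\setminus\{i\}})$ evaluates $w_j$ only at goods belonging to $\img(\pi)$, where $\w=\w'$, and $v_i(\pi_\C)$ is again fixed by the verifier; hence both $\Delta$-terms are unchanged when $\w$ is replaced by $\w'$. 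Summing over $\C$ gives $\xi_i(\pi,\w)=\xi_i(\pi,\w')$, and subtracting the common constant $v_i(\pi)$ yields $p_i^\xi(\pi,\w)=p_i^\xi(\pi,\w')$, as required. The argument is entirely bookkeeping once the invariance of the restricted optimization instance is recorded; there is no estimate or case analysis to grind through.
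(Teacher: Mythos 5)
Your proof is correct and follows essentially the same route as the paper's (one-sentence) argument: every quantity computed by the rule in Figure~\ref{fig:M1} depends on $\w$ only through its values on $\img(\pi)$, where $\w$ and $\w'$ agree, while $v_i(\pi)$ is independent of $\w$ altogether. Your version is simply a more careful expansion, with the additional (reasonable) observation that the deterministic subroutine at step~3 returns the same allocations on the identical restricted instances, which removes any tie-breaking concern that the paper leaves implicit.
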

\begin{proof}
It suffices to observe that, in the algorithm in Figure~\ref{fig:M1}, optimal allocations are restricted over the set $\img(\pi)$ at step~3.
Thus, the payment rule is completely indifferent w.r.t.~agents' evaluations of the goods outside $\img(\pi)$.
\end{proof}

\medskip

Moreover, it can be noticed that payments do not depend on possible discrepancies between declared and verified values, as shown below.

\begin{theorem}[\textbf{no punishment}]
For each agent $i\in \A$, for each allocation $\pi$ for $\tuple{\A,G,\omega}$, and for each type vector $\w \in \D$, it holds that
$p_i^\xi(\pi,\w)=p_i^\xi(\pi,(t_i,\w_{-i}))$.
\end{theorem}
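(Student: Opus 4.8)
The plan is to reduce the claim to an invariance statement about $\xi_i$ and then track exactly where the declared component $w_i$ enters its computation. First I would write $p_i^\xi(\pi,\w)=\xi_i(\pi,\w)-v_i(\pi)$ and observe that $v_i(\pi)=\sum_{g\in\pi(i)}v_i(g)$ is fixed by the verifier $\v(\pi)$ and does not involve $\w$ at all. Hence it suffices to prove $\xi_i(\pi,\w)=\xi_i(\pi,(t_i,\w_{-i}))$. Since (by step~8) $\xi_i$ is the weighted sum over coalitions $\C\ni i$ of the marginals $\Delta^1_{\C,i}-\Delta^2_{\C,i}$, I would treat the two terms separately.

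The term $\Delta^2_{\C,i}(\pi,\w)=\sum_{j\in\C\setminus\{i\}}w_j(\pi_{\C\setminus\{i\}})$ is disposed of immediately: the allocation $\pi_{\C\setminus\{i\}}$ is the one computed at step~3 for the coalition $\C\setminus\{i\}$, which does not contain $i$, so both this allocation and the summed valuations depend only on the components $\w_{-i}$. Thus $\Delta^2_{\C,i}$ is manifestly unchanged when $w_i$ is replaced by $t_i$.

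The substantive term is $\Delta^1_{\C,i}(\pi,\w)=v_i(\pi_\C)+\sum_{j\in\C\setminus\{i\}}w_j(\pi_\C)$. Here agent $i$'s share is scored through the \emph{verified} type $v_i$, which by assumption equals $t_i$ on $\img(\pi)\supseteq\pi_\C(i)$; so $w_i$ never appears \emph{explicitly}. The only residual dependence on $w_i$ is the choice of the optimal allocation $\pi_\C$ at step~3, which is taken w.r.t.~$\onC{\w}$. My plan is to neutralize this dependence as follows. For the vector $(t_i,\w_{-i})$ the corresponding allocation $\pi'_\C$ is optimal w.r.t.~$\onC{(t_i,\w_{-i})}$, and since $v_i=t_i$ on $\img(\pi)$ this gives the clean identity $\Delta^1_{\C,i}(\pi,(t_i,\w_{-i}))=\val(\pi'_\C,\onC{(t_i,\w_{-i})})=\opt(\tuple{\C,\img(\pi),\onC{\omega}},\onC{(t_i,\w_{-i})})$, a quantity depending only on $t_i$ and $\w_{-i}$. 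It then remains to show that $\Delta^1_{\C,i}(\pi,\w)=\val(\pi_\C,\onC{(t_i,\w_{-i})})$ equals that \emph{same} optimum, i.e.~that the $\w$-optimal $\pi_\C$, when re-scored with the verified $v_i=t_i$, attains the full $(t_i,\w_{-i})$-optimum over $\img(\pi)$.

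I expect this last point to be the main obstacle. Because step~3 genuinely optimizes against $w_i$, a feasibility argument yields only the inequality $\val(\pi_\C,\onC{(t_i,\w_{-i})})\le\opt(\tuple{\C,\img(\pi),\onC{\omega}},\onC{(t_i,\w_{-i})})$, and promoting it to equality is exactly what must be proved; moreover, for an \emph{arbitrary} input allocation $\pi$ one cannot simply invoke Corollary~\ref{cor:uno} at the declared types, since $\pi$ need not be optimal w.r.t.~$(t_i,\w_{-i})$. The route I would pursue is to exploit that verification discloses $v_j=t_j$ for \emph{every} agent on all of $\img(\pi)$, so that the step~3 optimization can be recast over $\img(\pi)$ using verified data, and then to use the monotonicity machinery of Theorem~\ref{thm:monotonicity} and Corollary~\ref{cor:uno} to certify both that restricting to $\img(\pi)$ loses nothing and that the $w_i$-dependent selection of $\pi_\C$ does not affect the verified-scored value; carrying this reconciliation through rigorously is where the real work lies.
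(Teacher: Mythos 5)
Your reduction of the claim to the invariance of $\xi_i$, the observation that $v_i(\pi)$ involves no component of $\w$, and the dismissal of $\Delta^2_{\C,i}$ (whose defining allocation $\pi_{\C\setminus\{i\}}$ is computed from $\w_{-i}$ alone) are all correct, and you have isolated the crux exactly: the only channel through which $w_i$ can enter $p_i^\xi(\pi,\w)$ is the step-3 selection of $\pi_\C$ for coalitions $\C\ni i$. But your proposal stops there, and the equality you defer --- that the $\w$-optimal $\pi_\C$, re-scored with $v_i=t_i$, attains $\opt(\tuple{\C,\img(\pi),\onC{\omega}},\onC{(t_i,\w_{-i})})$ --- is not merely the hard part: it is false in general, so no completion of your plan (via Theorem~\ref{thm:monotonicity}, Corollary~\ref{cor:uno}, or anything else) can exist. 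Take $\A=\{1,2\}$, $G=\{g_1,g_2\}$, $\omega\equiv 1$, true types $t_1(g_1)=1$, $t_1(g_2)=10$, $t_2\equiv 0$, declarations $w_1(g_1)=10$, $w_1(g_2)=1$, $w_2=t_2$, and let $\pi$ assign $g_1$ to agent $1$ and $g_2$ to agent $2$; note that $\pi$ is even optimal w.r.t.~$\w$, and $\img(\pi)=G$, so verification reveals $v_1=t_1$ on both goods. Under input $(\pi,\w)$, every step-3 optimum for $\C=\{1\}$ and for $\C=\{1,2\}$ gives $g_1$ to agent $1$ (its declared value $10$ beats $1$), so $\Delta^1_{\{1\},1}=\Delta^1_{\{1,2\},1}=v_1(g_1)=1$ while all $\Delta^2$ terms are $0$; hence $\xi_1(\pi,\w)=\frac{1}{2}\cdot 1+\frac{1}{2}\cdot 1=1$ (coalitions without agent $1$ contribute nothing) and $p_1^\xi(\pi,\w)=1-v_1(\pi)=0$. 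Under input $(\pi,(t_1,\w_{-1}))$, every step-3 optimum gives $g_2$ to agent $1$ instead, so $\Delta^1_{\{1\},1}=\Delta^1_{\{1,2\},1}=v_1(g_2)=10$, $\xi_1=10$, and $p_1^\xi(\pi,(t_1,\w_{-1}))=10-1=9\neq 0$. So the statement, read literally against Figure~\ref{fig:M1} as written, is not provable at all.

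This is not a defect of your attempt alone: the paper's own proof is the single observation that only goods in $\img(\pi)$ are considered and that ``the payment for agent $i$ depends only on her/his verified type, rather than on the declared one.'' That remark accounts for the scoring at steps 6, 7 and 9, but silently ignores the dependence of the step-3 allocations on $w_i$ --- precisely the dependence you flagged and that the example above exploits; your blind analysis is the more careful of the two. The theorem becomes true (and your plan then closes trivially) only if the rule is amended so that step 3 itself is immune to $w_i$: for each agent $i$ and each $\C\ni i$, compute the allocation entering $\Delta^1_{\C,i}$ as an optimal one for $\tuple{\C,\img(\pi),\onC{\omega}}$ w.r.t.~$\onC{(v_i,\w_{-i})}$, i.e., set $\Delta^1_{\C,i}(\pi,\w):=\opt(\tuple{\C,\img(\pi),\onC{\omega}},\onC{(v_i,\w_{-i})})$, leaving $\Delta^2_{\C,i}$ as is. Then $w_i$ never appears in $p_i^\xi$, so no punishment (and implementability) hold by inspection; nothing changes at the truthful equilibrium, where $(v_i,\t_{-i})$ and $\t$ coincide on $\img(\pi)$, so Theorem~\ref{thm:shapley1} and everything downstream is untouched; and truthfulness survives, since property (A) in the proof of Theorem~\ref{thm:truth} still follows from Corollary~\ref{cor:uno} together with monotonicity of $\opt$ in the set of available goods, and property (B) is unaffected.
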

\begin{proof}
Recall that, in the algorithm in Figure~\ref{fig:M1}, only goods in $\img(\pi)$ are considered, and observe that the payment for agent $i$
depends only on her/his verified type, rather than on the declared one.
\end{proof}

\subsection{Further Properties of Truthful Strategies}\label{sec:fairness}

Let us now analyze some relevant properties that hold on whenever agents choose their dominant strategy of truthfully reporting their private
types. The first property is a useful characterization for the utility of the agents.

\begin{theorem}\label{thm:shapley1}
For each optimal allocation $\pi$ for $\tuple{\A,G,\omega}$ w.r.t.~$\t$, and for each agent $i\in \A$, it holds that: {\small
$$u_{i}(\pi,\t)=\sum_{\C \in
\mymath{C}}\frac{(|\A|-|\C|)!(|\C|-1)!}{|\A|!}\left(\frac{ }{ }\opt(\tuple{\C,G,\onC{\omega}},\onC{\t})-\opt(\tuple{\C\setminus\{i\},G,\onC{\omega}},\onCi{\t})\ \right).$$}
\end{theorem}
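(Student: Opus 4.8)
The plan is to combine Lemma~\ref{lemma:utility} with Corollary~\ref{cor:uno}, exploiting the fact that at the truthful equilibrium the verified types coincide with the true types on the allocated goods. First I would invoke Lemma~\ref{lemma:utility} to rewrite $u_{i}(\pi,\t)=\xi_i(\pi,\t)$, so that it suffices to expand $\xi_i(\pi,\t)$ according to its definition at step~8 of Figure~\ref{fig:M1} and to show that, for every $\C\in\mymath{C}$, the bracketed difference $\Delta^1_{\C,i}(\pi,\t)-\Delta^2_{\C,i}(\pi,\t)$ equals $\opt(\tuple{\C,G,\onC{\omega}},\onC{\t})-\opt(\tuple{\C\setminus\{i\},G,\onC{\omega}},\onCi{\t})$. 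Since the Shapley-type weights match verbatim, the theorem follows term by term.

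For the first term, recall from step~6 that $\Delta^1_{\C,i}(\pi,\t)=v_i(\pi_\C)+\sum_{j\in\C\setminus\{i\}}t_j(\pi_\C)$, where $\pi_\C$ is an optimal allocation for $\tuple{\C,\img(\pi),\onC{\omega}}$ w.r.t.~$\onC{\t}$. The crucial point is that $\pi_\C$ allocates only goods drawn from $\img(\pi)$, and on every such good the verifier is exact, i.e.~$v_i(g)=t_i(g)$. Hence I may replace $v_i(\pi_\C)$ by $t_i(\pi_\C)$, obtaining $\Delta^1_{\C,i}(\pi,\t)=\sum_{j\in\C}t_j(\pi_\C)=\val(\pi_\C,\onC{\t})=\opt(\tuple{\C,\img(\pi),\onC{\omega}},\onC{\t})$, the last equality holding by optimality of $\pi_\C$. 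An entirely analogous computation for the second term, based on step~7 and the optimality of $\pi_{\C\setminus\{i\}}$ for $\tuple{\C\setminus\{i\},\img(\pi),\onCi{\omega}}$ w.r.t.~$\onCi{\t}$, yields $\Delta^2_{\C,i}(\pi,\t)=\opt(\tuple{\C\setminus\{i\},\img(\pi),\onCi{\omega}},\onCi{\t})$; here $i$'s own valuation plays no role, so no appeal to the verifier is even needed.

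It then remains to lift these optimal values from the restricted good-set $\img(\pi)$ to the full set $G$, which is exactly the role of Corollary~\ref{cor:uno}. Since $\pi$ is an optimal allocation for $\tuple{\A,G,\omega}$ w.r.t.~$\t$, the corollary gives $\opt(\tuple{\C,\img(\pi),\onC{\omega}},\onC{\t})=\opt(\tuple{\C,G,\onC{\omega}},\onC{\t})$ and, applied with $\C$ replaced by $\C\setminus\{i\}$, $\opt(\tuple{\C\setminus\{i\},\img(\pi),\onCi{\omega}},\onCi{\t})=\opt(\tuple{\C\setminus\{i\},G,\onCi{\omega}},\onCi{\t})$. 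Substituting these two identities into the expression for $\xi_i(\pi,\t)$ produces precisely the claimed formula.

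The argument is essentially bookkeeping, so there is no serious obstacle; the one point demanding care is the justification that $v_i(\pi_\C)=t_i(\pi_\C)$, which must be made explicit. It rests on two facts: that the verifier is exact on allocated goods (so $v_i$ agrees with $t_i$ on all of $\img(\pi)$), and that $\pi_\C$ by construction allocates only goods in $\img(\pi)$. Once these are established, the two applications of Corollary~\ref{cor:uno} do all the remaining work, and the result follows.
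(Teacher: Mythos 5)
Your proposal is correct and follows essentially the same route as the paper's own proof: invoke Lemma~\ref{lemma:utility}, identify $\Delta^1_{\C,i}(\pi,\t)$ and $\Delta^2_{\C,i}(\pi,\t)$ with the optimal values over the restricted good-set $\img(\pi)$ (using that verified types agree with true types there), and then lift to the full set $G$ via Corollary~\ref{cor:uno} applied to both $\C$ and $\C\setminus\{i\}$. Your explicit justification of $v_i(\pi_\C)=t_i(\pi_\C)$ is a point the paper leaves implicit, but it is the same argument.
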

\begin{proof}
By Lemma~\ref{lemma:utility}, we know that $u_{i}(\pi,\t)=\xi_i(\pi,\t)$. Then, for each set $\C\in \mymath{C}$ of agents, and for each agent
$i\in \A$, consider the expressions $\Delta^1_{\C,i}(\pi,\t)$ and $\Delta^2_{\C,i}(\pi,\t)$ defined at step~6 and step~7, respectively, of the
mechanism in Figure~\ref{fig:M1}. Note that $\Delta^1_{\C,i}(\pi,\t)=t_i(\pi_{\C})+\sum_{j\in \C\setminus\{i\}}t_j(\pi_{\C})=\val(\pi_C)$ and
$\Delta^2_{\C,i}(\pi,\t)=\sum_{j\in \C\setminus\{i\}}t_j(\pi_{\C\setminus\{i\}})=\val(\pi_{\C\setminus\{i\}})$, where $\pi_\C$ and
$\pi_{\C\setminus\{i\}}$ are optimal allocations for $\tuple{\C,\img(\pi),\onC{\omega}}$ w.r.t.~$\onC{\t}$ and for
$\tuple{\C\setminus\{i\},\img(\pi),\onCi{\omega}}$ w.r.t.~$\onCi{\t}$, respectively.
Thus, $\Delta^1_{\C,i}(\pi,\t)=\opt(\tuple{\C,\img(\pi),\onC{\omega}},\onC{\t})$ and
$\Delta^2_{\C,i}(\pi,\t)=\opt(\tuple{\C\setminus\{i\},\img(\pi),\onC{\omega}},\onCi{\t})$. It follows that:

{\small
\begin{equation}\label{eq:img}
u_{i}(\pi,\t)=\sum_{\C\in \mymath{C}}\frac{(|\A|-|\C|)!(|\C|-1)!}{|\A|!}\left(\opt(\tuple{\C,\img(\pi),\omega},\t)-\opt(\tuple{\C\setminus\{i\},\img(\pi),\omega},\t)\right).
\end{equation}}

Recall now by Corollary~\ref{cor:uno} that, for each optimal allocation $\pi$ for $\tuple{\A,G,\pi}$ w.r.t.~$\w$ and for each set $\C\in
\mymath{C}$ of agents, $\opt(\tuple{\C,\img(\pi),\onC{\omega}},\onC{\w})=\opt(\tuple{\C,G,\onC{\omega}},\onC{\w})$. Therefore,
$\Delta^1_{\C,i}(\pi,\t)=\opt(\tuple{\C,G,\onC{\omega}},\onC{\t})$ and
$\Delta^2_{\C,i}(\pi,\t)=\opt(\tuple{\C\setminus\{i\},G,\onC{\omega}},\onCi{\t})$. By using these equalities, the result follows from
Equation~\ref{eq:img}.~\hfill~
\end{proof}

Note that in the above expression, agents' utilities are completely independent of the particular optimal allocation $\pi$. Therefore, in every
optimal allocation, every agent gets precisely the same utility.

\begin{corollary}\label{cor:independence}
Let $\pi$ and $\pi'$ be two optimal allocations for $\tuple{\A,G,\omega}$ w.r.t.~$\t$. Then, $u_{i}(\pi,\t)= u_{i}(\pi',\t)$ holds, for each $i\in \A$.
\end{corollary}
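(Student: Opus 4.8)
The plan is to derive the result directly from the characterization established in Theorem~\ref{thm:shapley1}. The key observation is that the formula given there for $u_{i}(\pi,\t)$ has a right-hand side that makes no reference whatsoever to the particular optimal allocation $\pi$: it is built entirely out of the quantities $\opt(\tuple{\C,G,\onC{\omega}},\onC{\t})$ and $\opt(\tuple{\C\setminus\{i\},G,\onC{\omega}},\onCi{\t})$, which depend only on the full allocation scenario $\tuple{\A,G,\omega}$, the true-type vector $\t$, the agent $i$, and the coalition $\C$ ranging over $\mymath{C}$.

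Concretely, first I would apply Theorem~\ref{thm:shapley1} to the optimal allocation $\pi$, obtaining the stated weighted sum as the value of $u_{i}(\pi,\t)$. Then I would apply the very same theorem to $\pi'$, which is legitimate precisely because $\pi'$ is also an optimal allocation for $\tuple{\A,G,\omega}$ w.r.t.~$\t$, and obtain an identical expression for $u_{i}(\pi',\t)$. Since the two right-hand sides coincide term by term---each summand $\opt(\tuple{\C,G,\onC{\omega}},\onC{\t})-\opt(\tuple{\C\setminus\{i\},G,\onC{\omega}},\onCi{\t})$ is the same regardless of whether it was obtained starting from $\pi$ or from $\pi'$---we conclude $u_{i}(\pi,\t)=u_{i}(\pi',\t)$ for every agent $i\in\A$.

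There is essentially no remaining obstacle at this stage, since the substantive work has already been carried out inside Theorem~\ref{thm:shapley1}. The crucial ingredient there is the appeal to Corollary~\ref{cor:uno}, which rewrites each optimal value over the restricted good set $\img(\pi)$ as the corresponding optimal value over the whole good set $G$. It is exactly this replacement of $\img(\pi)$ by $G$ that strips the formula of any dependence on the specific allocation chosen: without it, the terms of the form $\opt(\tuple{\C,\img(\pi),\onC{\omega}},\onC{\t})$ appearing in Equation~\ref{eq:img} would still carry $\pi$ through the good set $\img(\pi)$, and the argument would collapse. Thus the present corollary is best viewed as merely recording the allocation-independence that Theorem~\ref{thm:shapley1} already secures.
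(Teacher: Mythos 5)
Your proof is correct and is exactly the paper's own argument: the paper derives Corollary~\ref{cor:independence} by observing that the expression for $u_{i}(\pi,\t)$ in Theorem~\ref{thm:shapley1} involves only $\opt(\tuple{\C,G,\onC{\omega}},\onC{\t})$ and $\opt(\tuple{\C\setminus\{i\},G,\onC{\omega}},\onCi{\t})$, and hence makes no reference to the particular optimal allocation chosen. Your additional remark pinpointing Corollary~\ref{cor:uno} as the step that removes the dependence on $\img(\pi)$ is also faithful to how the paper's Theorem~\ref{thm:shapley1} is proved.
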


\begin{example}\label{ex:final}
Consider the allocation scenario $S_R=\tuple{\R,G,\mathbf{3}}$, where $G=\{p_1,...,p_8\}$, associated with the research structure $R$ in Example~\ref{ex:uno}.
 Assume that researchers declare their true types $\t$ in the ANVUR evaluation, and consider the optimal allocation
$\submitted^*$ shown in Figure~\ref{fig:esempio}(II). Then, we have: {
$$
\begin{array}{ll}
    u_{r_1}(\submitted^*,\t)= & \frac{1}{2}(\opt(\tuple{\{r_1,r_2\},G,\mathbf{3}},\t)-\opt(\tuple{\{r_2\},G,\mathbf{3}},\t))+\\
      & \frac{1}{2}(\opt(\tuple{\{r_1\},G,\mathbf{3}},\t)-\opt(\tuple{\{\},G,\mathbf{3}},\t))+\\
      & \frac{1}{2}(\opt(\tuple{\{r_2\},G,\mathbf{3}},\t)-\opt(\tuple{\{r_2\},G,\mathbf{3}},\t)=\\
      & \frac{1}{2}(51-26)+\frac{1}{2}(26-0)+\frac{1}{2}(26-26)=\frac{51}{2}.
\end{array}
$$}

In particular, note that $\opt(\tuple{\{r_1\},G,\mathtt{3}},\t))=26$, as we can allocate $p_1$, $p_4$, and $p_5$ to $r_1$, if
(s)he where the only researcher in the structure.

Similarly, we get $u_{r_2}(\submitted^*,\t)=\frac{51}{2}$. That is, the two researchers will share precisely one half of the total score of
their structure, based on our payment scheme.
In fact, by looking at the allocation $\submitted^*$ in Figure~\ref{fig:esempio}(II), one might na\"ively suppose that $r_2$ contributed more
than $r_1$. However, this is only due to the specific allocation considered, and not to the actual values of the products of the two authors.
For instance, $p_5$ is allocated to $r_2$ but it was produced by $r_1$, as well. Indeed, the fairness of the utility values resulting from our
payment rule suddenly appears when considering the alternative allocation $\hat{\submitted}^*$ in Figure~\ref{fig:esempio}(II), which is
symmetric w.r.t.~$\submitted^*$ and where it seems that $r_1$ contributes more than $r_2$: As a matter of fact, the two researchers are
completely interchangeable over optimal allocations, and this is correctly reflected by our payment scheme.
In particular, from Corollary~\ref{cor:independence}, the researchers are indifferent w.r.t.~the specific optimal allocation being selected,
and hence in this case they equally divide all the available score between themselves.\hfill $\lhd$
\end{example}

We conclude by pointing out a further important property of the mechanism.

\begin{corollary}[\textbf{individual-rationality}]\label{thm:efir}
Let $A$ be any optimal allocation algorithm. Then, the mechanism with verification $(A,\p^\xi)$ is individually-rational.
\end{corollary}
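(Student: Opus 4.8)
The plan is to reduce individual rationality, i.e.~$u_{i}(A(\t),\t)\geq 0$ for every $i\in\A$, to the non-negativity of a Shapley-type sum. Since $A$ is an optimal allocation algorithm, the allocation $\pi=A(\t)$ lies in $\Pi^*_\t$, so Theorem~\ref{thm:shapley1} applies and yields
$$u_{i}(\pi,\t)=\sum_{\C \in \mymath{C}}\frac{(|\A|-|\C|)!(|\C|-1)!}{|\A|!}\left(\opt(\tuple{\C,G,\onC{\omega}},\onC{\t})-\opt(\tuple{\C\setminus\{i\},G,\onC{\omega}},\onCi{\t})\right).$$
Each coefficient $\frac{(|\A|-|\C|)!(|\C|-1)!}{|\A|!}$ is a ratio of factorials and hence non-negative (it is precisely the Shapley weight), so it suffices to show that every marginal term $\opt(\tuple{\C,G,\onC{\omega}},\onC{\t})-\opt(\tuple{\C\setminus\{i\},G,\onC{\omega}},\onCi{\t})$ is non-negative.

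First I would establish monotonicity of the value function $\C\mapsto\opt(\tuple{\C,G,\onC{\omega}},\onC{\t})$ with respect to coalition inclusion. Fix $\C$ with $i\in\C$ and let $\lambda$ be an optimal allocation for $\tuple{\C\setminus\{i\},G,\onCi{\omega}}$ w.r.t.~$\onCi{\t}$. Extend $\lambda$ to an allocation $\lambda^+$ for $\tuple{\C,G,\onC{\omega}}$ by setting $\lambda^+(i)=\emptyset$ and $\lambda^+(j)=\lambda(j)$ for every $j\in\C\setminus\{i\}$. This is a legal allocation, since $|\lambda^+(i)|=0\leq\omega(i)$ and the disjointness constraints are inherited from $\lambda$; moreover, because types are additive over goods and $t_i(\emptyset)=0$, we have $\val(\lambda^+,\onC{\t})=\val(\lambda,\onCi{\t})=\opt(\tuple{\C\setminus\{i\},G,\onCi{\omega}},\onCi{\t})$. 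Consequently $\opt(\tuple{\C,G,\onC{\omega}},\onC{\t})\geq\val(\lambda^+,\onC{\t})=\opt(\tuple{\C\setminus\{i\},G,\onCi{\omega}},\onCi{\t})$, so each marginal term is non-negative.

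Putting these observations together, $u_{i}(\pi,\t)$ is a non-negative linear combination of non-negative quantities, whence $u_{i}(A(\t),\t)\geq 0$ for every $i\in\A$, which is exactly individual rationality. The only delicate point is the monotonicity step, and specifically the remark that an agent can always be handed the empty bundle at value $0$: this is what keeps the argument valid even when some true types $t_j$ are negative (as in the VQR encoding, where non-authored goods are scored $-1$), since making an extra agent available who simply receives nothing can never decrease the optimal value.
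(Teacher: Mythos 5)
Your proof is correct and takes essentially the same approach as the paper: it invokes Theorem~\ref{thm:shapley1} to express $u_{i}(A(\t),\t)$ as a sum of nonnegative Shapley weights times the marginal terms $\opt(\tuple{\C,G,\onC{\omega}},\onC{\t})-\opt(\tuple{\C\setminus\{i\},G,\onCi{\omega}},\onCi{\t})$, and then concludes from the nonnegativity of those terms. The only difference is that the paper merely asserts this nonnegativity, whereas you justify it explicitly via the empty-bundle extension of an optimal allocation for $\C\setminus\{i\}$ to one for $\C$ (a detail worth having, since it confirms the claim even when some types are negative).
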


\begin{proof}
Notice that $\opt(\tuple{\C,G,\onC{\omega}},\onC{\t})-\opt(\tuple{\C\setminus\{i\},G,\onCi{\omega}},\onCi{\t})\geq 0$, holds
for each $\C\subseteq \A$ and agent $i\in \A$. Then, by Theorem~\ref{thm:shapley1}, $u_{i}(\pi,\t)\geq 0$ holds, for each agent $i\in
\A$.~\hfill~
\end{proof}

\section{A Coalitional Game Theory Viewpoint}\label{sec:coalitional}

A \emph{coalitional game} can be modeled as a pair $\game=\tuple{N,\varphi}$, where $N=\{1,...,n\}$ is a finite set of agents, and $\varphi$
is a function associating with each \emph{coalition} $C\subseteq N$ a real-value $\varphi(C)\in \mathbb{R}$, with $\varphi(\{\})=0$, which is
meant to encode the worth that agents in $C$ obtain by collaborating with each other.
The function $\varphi$ is \emph{supermodular} (resp., \emph{submodular}) if $\varphi(R\cup T)+\varphi(R\cap T)\geq \varphi(R)+\varphi(T)$
(resp., $\varphi(R\cup T)+\varphi(R\cap T)\leq \varphi(R)+\varphi(T)$) holds, for each pair of coalitions $R,T\subseteq N$.

A fundamental problem for coalitional games is to single out the most desirable outcomes, usually called \emph{solution concepts}, in terms of
appropriate notions of worth distributions, i.e., of vectors of payoffs ${\bf x}=(x_1,...,x_{n})\in\mathbb{R}^{n}$ such that $\sum_{i\in N}
x_i=\varphi(N)$. This question was studied in economics and game theory with the aim of providing arguments and counterarguments about why such
proposals are reasonable mathematical renderings of the intuitive concepts of fairness and stability. For further background on coalitional
games, the reader is referred to, e.g.,~\cite{Osborne1994}.

Here, we consider the \emph{Shapley value} of $\game=\tuple{N,\varphi}$, which is a well-known solution concept such that:

\vspace{-2mm}
$$\sv_i(\game)=\sum_{C \subseteq N}\frac{(|N|-|C|)!(|C|-1)!}{|N|!}(\varphi(C)-\varphi(C\setminus\{i\}))\mbox{, for each }i\in N.$$

Indeed, we shall show that the mechanism defined in Section~\ref{sec:mechanism} has a nice interpretation in terms of the Shapley value of some
suitable-defined coalitional games. The correspondence will be exploited to prove further properties of our mechanism.

\subsection{The Shapley Value of Allocation Games}

We consider two coalitional games defined on top of an allocation problem.

\begin{defn}\label{def:games}{
Given the tuple $S=\tuple{\A,G,\omega}$ and a vector $\w$ of agent types, we define $\game_{S,\w}^{\tmarg}=\tuple{\A,\marg_{S,\w}}$ and
$\game_{S,\w}^{\tbest}=\tuple{\A,\best_{S,\w}}$ as the coalitional games such, that for each set $\C\subseteq \A$ of agents,
\begin{myitemize}
\item[$\bullet$] $\marg_{\tuple{\A,G,\omega},\w}(\C)=\opt(S,\w)-\opt(\tuple{\A\setminus\C,G,\omega},\w)$; and,

\item[$\bullet$] $\best_{S,\w}(\C)=\opt(\tuple{\C,G,\omega},\w)$. \hfill $\Box$
\end{myitemize}}
\end{defn}

Note that $\marg_{S,\w}(\C)$ is the \emph{marginal contribution} of $\C$ to $\opt(S,\w)$. Instead, $\best_{S,\w}(\C)$ is the \emph{best
contribution} of $\C$ (w.r.t.~$\w$), computed assuming that agents in $\C$ were the only agents in the allocation problem.
% Moulin ha considerato best (stand alone game)
In particular, the game $\game_{S,\w}^{\tbest}$ has already been considered by~\citeA{Moulin1992}, precisely in the setting of fair division
for allocation problems. There, it is shown that the cost function associated with $\game_{S,\w}^{\tbest}$ is \emph{submodular}.

\begin{proposition}[\citeA{Moulin1992}]\label{prop:submodular}
The function $\best_{S,\w}$ is submodular.
\end{proposition}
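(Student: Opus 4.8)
The plan is to reduce to a plain bipartite matching problem and then argue by an exchange over the superposition of two optimal allocations. First I would invoke Fact~\ref{lemma:norm} to pass to the one-good version: since $\best_{S,\w}(\C)=\opt(\tuple{\C,G,\omega},\w)=\opt(\tuple{\C,G,\omega}^{\bf 1},\w^{\bf 1})$, and since the cloning map $\C\mapsto\bigcup_{i\in\C}\mathit{clones}(i)$ carries $R\cup T$ and $R\cap T$ to the union and the intersection of the corresponding clone-sets (the sets $\mathit{clones}(i)$ being pairwise disjoint), it suffices to prove submodularity of the matching-value function on the clone agents, where each agent may receive at most one good. I henceforth identify coalitions with sets of such one-good agents, so that the target inequality reads $\best_{S,\w}(R)+\best_{S,\w}(T)\geq\best_{S,\w}(R\cup T)+\best_{S,\w}(R\cap T)$.

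Next I would fix optimal matchings $M_{\cup}$ for $R\cup T$ and $M_{\cap}$ for $R\cap T$, form the multiset union $M_{\cup}\uplus M_{\cap}$ (retaining both copies of any edge common to the two), and decompose it into its connected components: alternating paths, alternating cycles, and doubled common edges. From these components I would build a matching $N_R$ using only agents of $R$ and a matching $N_T$ using only agents of $T$, assigning every edge-copy to exactly one of $N_R,N_T$ so that $\val(N_R,\w^{\bf 1})+\val(N_T,\w^{\bf 1})=\val(M_{\cup},\w^{\bf 1})+\val(M_{\cap},\w^{\bf 1})$. Concretely, on each path and cycle I would apply a proper $2$-edge-colouring (adjacent edges getting different colours, $R$ or $T$); this automatically makes each colour class a matching and keeps each good used at most once per class, while a doubled common edge at an agent of $R\cap T$ is split with one copy sent to $N_R$ and the other to $N_T$.

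The hard part will be to verify that the colouring can honour the coalition constraints without clashing: an edge meeting an agent of $R\setminus T$ must be coloured $R$, one meeting an agent of $T\setminus R$ must be coloured $T$, whereas agents of $R\cap T$ and all goods impose no constraint. Here I would exploit the bipartite parity of alternating paths. Every interior agent of a component is matched by both $M_{\cup}$ and $M_{\cap}$, hence lies in $R\cap T$ and is unconstrained, so only \emph{endpoint} agents can be constrained. Moreover, an endpoint agent incident to an $M_{\cap}$-edge is matched by $M_{\cap}$ and therefore lies in $R\cap T$; and on a path with two agent-endpoints (necessarily an even number of edges) the two endpoint edges lie in different matchings, forcing one of those endpoints into $R\cap T$. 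Thus each component carries at most one constrained agent-endpoint, and the single remaining degree of freedom in the proper $2$-colouring can always be set to satisfy it.

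Finally, $N_R$ is a feasible allocation for $\tuple{R,G,\omega}^{\bf 1}$ and $N_T$ for $\tuple{T,G,\omega}^{\bf 1}$, so $\best_{S,\w}(R)\geq\val(N_R,\w^{\bf 1})$ and $\best_{S,\w}(T)\geq\val(N_T,\w^{\bf 1})$. Summing and using weight preservation yields $\best_{S,\w}(R)+\best_{S,\w}(T)\geq\val(N_R,\w^{\bf 1})+\val(N_T,\w^{\bf 1})=\val(M_{\cup},\w^{\bf 1})+\val(M_{\cap},\w^{\bf 1})=\best_{S,\w}(R\cup T)+\best_{S,\w}(R\cap T)$, which is exactly submodularity. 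I expect the crux of the write-up to be the parity argument of the previous paragraph, as everything else is bookkeeping about the path/cycle decomposition already exercised in the proof of Theorem~\ref{thm:monotonicity}.
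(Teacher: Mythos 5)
Your argument is correct, but note that the paper itself contains no proof of this statement: Proposition~\ref{prop:submodular} is stated purely as a citation to \citeA{Moulin1992}, who introduced the game $\game_{S,\w}^{\tbest}$ in the fair-division setting and established submodularity there. So what you have produced is not an alternative to an in-paper argument but a self-contained replacement for an external reference, and it holds up. The reduction via Fact~\ref{lemma:norm} is sound because clone sets of distinct agents are pairwise disjoint, so cloning commutes with $\cup$ and $\cap$ of coalitions; the superposition $M_{\cup}\uplus M_{\cap}$ has maximum degree two at every agent and every good, hence splits into paths, cycles, and doubled edges; any agent of degree two is matched by $M_{\cap}$ and so lies in $R\cap T$, which confines constraints to path endpoints; and your parity claim is exactly right: a path with two agent endpoints has an even number of edges, so its two endpoint edges lie in different matchings, the endpoint incident to the $M_{\cap}$-edge is automatically in $R\cap T$, and thus each component carries at most one constraint, which one of the two alternating colourings can always accommodate. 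Two points deserve explicit mention in a full write-up: (i) reassigning an edge never changes its weight, since the edge keeps its agent endpoint and its weight is $w^{\bf 1}_{a}(g)$ for that fixed pair, which is what gives the exact weight-preservation identity; and (ii) a good may legitimately end up used by both $N_R$ and $N_T$, because $\best_{S,\w}(R)$ and $\best_{S,\w}(T)$ are optima of two independent problems, so no cross-disjointness is needed. Compared with the paper, your route buys self-containedness and reuses machinery the paper already develops (the one-good reduction and a path/cycle decomposition in the spirit of the update graph in the proof of Theorem~\ref{thm:monotonicity}); the citation to Moulin buys brevity and anchors the result in the fair-division literature.
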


Then, the (dual) analogous for $\marg_{S,\w}$ can be shown easily.

\begin{theorem}\label{thm:supermodular}
The function $\marg_{S,\w}$ is supermodular.
\end{theorem}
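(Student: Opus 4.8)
The plan is to derive the supermodularity of $\marg_{S,\w}$ as a formal dual of the submodularity of $\best_{S,\w}$ established in Proposition~\ref{prop:submodular}, exploiting the fact that the two games are linked through set complementation. First I would observe that, by Definition~\ref{def:games}, for every coalition $\C\subseteq\A$ the marginal-contribution worth can be rewritten as
$$\marg_{S,\w}(\C)=\opt(S,\w)-\best_{S,\w}(\A\setminus\C),$$
since $\opt(\tuple{\A\setminus\C,G,\omega},\w)=\best_{S,\w}(\A\setminus\C)$ holds by the very definition of $\best_{S,\w}$. Here $\opt(S,\w)$ is a fixed constant that does not depend on $\C$.

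Next I would write out the supermodularity condition to be proven, namely that for every pair of coalitions $R,T\subseteq\A$,
$$\marg_{S,\w}(R\cup T)+\marg_{S,\w}(R\cap T)\geq \marg_{S,\w}(R)+\marg_{S,\w}(T),$$
and substitute the above rewriting into both sides. The four occurrences of the constant $\opt(S,\w)$ contribute the same amount $2\,\opt(S,\w)$ to each side and hence cancel, so that (after multiplying through by $-1$ and reversing the inequality) the goal reduces to
$$\best_{S,\w}(\A\setminus(R\cup T))+\best_{S,\w}(\A\setminus(R\cap T))\leq \best_{S,\w}(\A\setminus R)+\best_{S,\w}(\A\setminus T).$$

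The final step is purely set-theoretic bookkeeping via De Morgan's laws: setting $R'=\A\setminus R$ and $T'=\A\setminus T$, one has $\A\setminus(R\cup T)=R'\cap T'$ and $\A\setminus(R\cap T)=R'\cup T'$, so the displayed inequality becomes exactly
$$\best_{S,\w}(R'\cup T')+\best_{S,\w}(R'\cap T')\leq \best_{S,\w}(R')+\best_{S,\w}(T'),$$
which is precisely the submodularity of $\best_{S,\w}$ guaranteed by Proposition~\ref{prop:submodular} applied to the complementary coalitions $R'$ and $T'$. I do not expect any genuine obstacle here: the whole argument is the standard observation that complementing the argument of a submodular function yields a supermodular one, and the only point requiring care is tracking the sign reversal together with the De Morgan correspondence, so that unions and intersections end up matched correctly on the two sides.
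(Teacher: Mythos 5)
Your proof is correct and takes essentially the same route as the paper's own argument: both rest on the identity $\marg_{S,\w}(\C)=\opt(S,\w)-\best_{S,\w}(\A\setminus\C)$, cancellation of the constant $\opt(S,\w)$, and De Morgan's laws to transfer the submodularity of $\best_{S,\w}$ (Proposition~\ref{prop:submodular}) to supermodularity of $\marg_{S,\w}$. The only cosmetic difference is direction: the paper starts from the submodularity inequality for $\best_{S,\w}$ and rewrites it in terms of $\marg_{S,\w}$ on complements, while you start from the supermodularity goal and reduce it to that same inequality.
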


\begin{proof}
Let $S=\tuple{\A,G,\omega}$ be the given structure, and $\w$ be a vector of types.
The result just follows by noticing that
$\marg_{S,\w}(\C)=\opt(\tuple{\A,G,\omega},\w)-\opt(\tuple{\A\setminus\C,G,\omega},\w)=\opt(\tuple{\A,G,\omega},\w)-\best_{S,\w}(\A\setminus\C)$,
for each set of agents $\C\subseteq \A$. That is, $\best_{S,\w}(\C)=\opt(\tuple{\A,G,\omega},\w)-\marg_{S,\w}(\A\setminus\C)$. Thus, if
$\best_{S,\w}(R\cup T)+\best_{S,\w}(R\cap T)\leq \best_{S,\w}(R)+\best_{S,\w}(T)$ holds $\forall R,T\subseteq N$, we have that
$\marg_{S,\w}(\A\setminus (R\cup T))+\marg_{S,\w}(\A\setminus (R\cap T))\geq \marg_{S,\w}(\A\setminus R)+\marg_{S,\w}(\A\setminus T)$ holds as
well, $\forall R,T\subseteq N$.
Eventually, by letting $R'=\A\setminus R$ and $T'=\A\setminus T$, we get $\marg_{S,\w}(R'\cap T')+\marg_{S,\w}(R'\cup T'))\geq
\marg_{S,\w}(R')+\marg_{S,\w}(T')$, for each $\forall R',T'\subseteq N$. That is, $\marg_{S,\w}$ is supermodular.~\hfill~
\end{proof}

As a second relevant property, we next observe that the payment rules in Section~\ref{sec:mechanism} coincide, at the equilibrium $\t$ where
agents truthfully report their types, with the Shapley value of the game $\game_{S,\t}^{\tbest}$ associated with $S$. The result follows by
comparing the utility function as in Theorem~\ref{thm:shapley1} with the expression for the Shapley value of the coalitional game
$\game_{S,\t}^{\tbest}$.
Moreover, we show that the same result can be established for the ``dual'' game $\game_{S,\t}^{\tmarg}$, so that the Shapley values of the two
games are identical---for similar correspondences between Shapley values of different games, see also the works by~\citeA{Francois2003} and
\citeA{Kalai1983}.

\begin{theorem}\label{thm:shapley3}
For each optimal allocation $\pi$ for $S=\tuple{\A,G,\omega}$ w.r.t.~$\t$, and for each agent $i\in \A$, it holds that
$u_{i}(\pi,\t)=\xi_i(\pi,\t)=\sv_i(\game_{S,\t}^{\tbest})=\sv_i(\game_{S,\t}^{\tmarg})$.
\end{theorem}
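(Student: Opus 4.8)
The plan is to establish the two claimed equalities separately. First I would show $u_i(\pi,\t)=\sv_i(\game_{S,\t}^{\tbest})$. By Theorem~\ref{thm:shapley1}, for every optimal allocation $\pi$ and every agent $i$, the utility has the closed form
$$u_{i}(\pi,\t)=\sum_{\C \in \mymath{C}}\frac{(|\A|-|\C|)!(|\C|-1)!}{|\A|!}\left(\opt(\tuple{\C,G,\onC{\omega}},\onC{\t})-\opt(\tuple{\C\setminus\{i\},G,\onC{\omega}},\onCi{\t})\right).$$
By Definition~\ref{def:games}, $\best_{S,\t}(\C)=\opt(\tuple{\C,G,\omega},\t)$, so the two differences inside the sum are exactly $\best_{S,\t}(\C)-\best_{S,\t}(\C\setminus\{i\})$. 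Matching this against the definition of the Shapley value $\sv_i(\game)=\sum_{C\subseteq N}\frac{(|N|-|C|)!(|C|-1)!}{|N|!}(\varphi(C)-\varphi(C\setminus\{i\}))$ with $N=\A$ and $\varphi=\best_{S,\t}$, the coefficients coincide term by term, and so $u_i(\pi,\t)=\sv_i(\game_{S,\t}^{\tbest})$. This is essentially a syntactic identification, and the remark following Theorem~\ref{thm:shapley1} (utilities are independent of the particular optimal $\pi$) justifies that the left-hand side is well defined.

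The second equality, $\sv_i(\game_{S,\t}^{\tbest})=\sv_i(\game_{S,\t}^{\tmarg})$, is where the real work lies. Here I would appeal to the relation already derived in the proof of Theorem~\ref{thm:supermodular}, namely $\marg_{S,\t}(\C)=\opt(S,\t)-\best_{S,\t}(\A\setminus\C)$ for every $\C\subseteq\A$. That is, $\marg_{S,\t}$ is obtained from $\best_{S,\t}$ by the standard \emph{dual} transform $\varphi^{*}(\C)=\varphi(N)-\varphi(N\setminus\C)$ (noting $\opt(S,\t)=\best_{S,\t}(\A)=\varphi(N)$). The key fact to invoke is that the Shapley value is invariant under dualization: a game and its dual have the same Shapley value. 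I expect this to be the main obstacle, in the sense that it is the only nontrivial step; everything else is bookkeeping.

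To discharge it I would substitute $\varphi^{*}(\C)=\varphi(\A)-\varphi(\A\setminus\C)$ into the Shapley formula for $\sv_i(\game_{S,\t}^{\tmarg})$ and simplify the marginal contribution
$$\varphi^{*}(\C)-\varphi^{*}(\C\setminus\{i\})=\big(\varphi(\A)-\varphi(\A\setminus\C)\big)-\big(\varphi(\A)-\varphi(\A\setminus(\C\setminus\{i\}))\big)=\varphi\big((\A\setminus\C)\cup\{i\}\big)-\varphi(\A\setminus\C),$$
valid because $i\in\C$ (terms with $i\notin\C$ may be handled by the usual convention, or one checks they cancel as the coefficient pairing is symmetric). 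Writing $\C'=\A\setminus\C$, the map $\C\mapsto\C'$ is a bijection of the subsets containing $i$ onto the subsets not containing $i$, and the binomial-type coefficient $\frac{(|\A|-|\C|)!(|\C|-1)!}{|\A|!}$ transforms consistently under $|\C|\mapsto|\A|-|\C'|$, so that after reindexing the sum for $\sv_i(\game_{S,\t}^{\tmarg})$ becomes exactly $\sum_{\C'}\frac{(|\A|-|\C'|)!(|\C'|-1)!}{|\A|!}\big(\varphi(\C')-\varphi(\C'\setminus\{i\})\big)=\sv_i(\game_{S,\t}^{\tbest})$. Chaining the two equalities with Lemma~\ref{lemma:utility}, which gives $u_i(\pi,\t)=\xi_i(\pi,\t)$, yields the full chain $u_i(\pi,\t)=\xi_i(\pi,\t)=\sv_i(\game_{S,\t}^{\tbest})=\sv_i(\game_{S,\t}^{\tmarg})$.
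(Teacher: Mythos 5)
Your proposal is correct and follows essentially the same route as the paper: the first equality is obtained by matching Theorem~\ref{thm:shapley1} against the Shapley formula for $\game_{S,\t}^{\tbest}$, and the second by the complementation argument, which is precisely the paper's claims (1) and (2) establishing $T'_\C=T_{\bar\C}$ under the bijection $\C\mapsto(\A\setminus\C)\cup\{i\}$ (your ``duality'' framing is just a name for this same computation). One cosmetic remark: in your final display the sum should be reindexed over $\bar\C=\C'\cup\{i\}$ (sets containing $i$) rather than over $\C'=\A\setminus\C$, since for $\C'$ not containing $i$ the written marginal $\varphi(\C')-\varphi(\C'\setminus\{i\})$ degenerates to zero; the coefficient invariance you invoke holds exactly for that corrected bijection.
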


\begin{proof}
By comparing the utility function as in Theorem~\ref{thm:shapley1} with the expression for the Shapley value of the coalitional game
$\game_{S,\t}^{\tbest}$ associating with each coalition $\C$ of agents the worth $\opt(\tuple{\C,G,\omega},\t)$, we immediately get that, for
each optimal allocation $\pi$ for $S$ w.r.t.~$\t$, and for each agent $i\in \A$, it holds that
$u_{i}(\pi,\t)=\xi_i(\pi,\t)=\sv_i(\game_{S,\t}^{\tbest})$.

In order to conclude the proof, we show that for each agent $i\in \A$, $\sv_i(\game_{S,\t}^{\tmarg})=\sv_i(\game_{S,\t}^{\tbest})$ holds. To
this end, first note that these Shapley values can be written as follows:
\begin{myitemize}
  \item $\sv_i(\game_{S,\t}^{\tmarg})=\sum_{\C \subseteq \A, i\in \C}\frac{(|\A|-|\C|)!(|\C|-1)!}{|\A|!} T'_\C$, and

  \item $\sv_i(\game_{S,\t}^{\tbest})=\sum_{\C \subseteq \A, i\in \C}\frac{(|\A|-|\C|)!(|\C|-1)!}{|\A|!} T_\C$,
\end{myitemize}
\noindent where $T'_\C=\marg_{S,\t}(\C)-\marg_{S,\t}(\C\setminus\{i\})$ and
      $T_\C=\opt(\tuple{\C,G,\omega},\t)-\opt(\tuple{\C\setminus\{i\},G,\omega},\t).$

Then, we claim that:
\begin{myitemize}
\item[(1)] for each set $\C\subseteq \A$ of agents with $i\in \C$, the set $\bar \C=(\A\setminus\C)\cup\{i\}$ is such that
    $T'_\C=T_{\bar \C}$, and

\item[(2)] for each set $\bar \C\subseteq \A$ of agents with $i\in \bar \C$, the set $\C=(\A\setminus \bar \C)\cup\{i\}$ is such that
    $T'_\C=T_{\bar \C}$.
\end{myitemize}

(1) Let $\C\subseteq \A$ such that $i\in \C$, and observe that
$T_\C'=\marg_{S,\t}(\C)-\marg_{S,\t}(\C\setminus\{i\})=(\opt(S,\t)-\opt(\tuple{\A\setminus\C,G,\omega},\t))-
(\opt(S,\t)-\opt(\tuple{\A\setminus(\C\setminus\{i\}),G,\omega},\t))=\opt(\tuple{\A\setminus(\C\setminus\{i\}),G,\omega},\t)-\opt(\tuple{\A\setminus\C,G,\omega},\t)=\opt(\tuple{(\A\setminus\C)\cup\{i\}),G,\omega},\t)-\opt(\tuple{\A\setminus\C,G,\omega},\t)$.
Thus, let $\bar \C=(\A\setminus\C)\cup\{i\}$, and note that $T_\C'=T_{\bar \C}$.

(2) Let $\bar \C\subseteq \A$ such that $i\in \bar \C$, and observe that
$T_{\bar \C}=\opt(\tuple{\bar \C,G,\omega},\t)-\opt(\tuple{\bar \C\setminus\{i\},G,\omega},\t)=(\opt(S,\t)-\opt(\tuple{\bar
\C\setminus\{i\},G,\omega},\t))-(\opt(S,\t)-\opt(\tuple{\bar \C,G,\omega},\t))=\marg_{S,\t}((\A\setminus \bar
\C)\cup\{i\})-\marg_{S,\t}(\A\setminus \bar \C)$. Thus, let $\C=(\A\setminus \bar \C)\cup\{i\}$ and note that $T_\C'=T_{\bar \C}$.

\smallskip

As (1) and (2) hold, and given the two expressions for $\sv_i(\game_{S,\t}^{\tmarg})$ and $\sv_i(\game_{S,\t}^{\tbest})$, we conclude that the
two values coincide.~\hfill~
\end{proof}

\subsection{Fairness and Budget-Balancedness}

Now that we have established a precise correspondence between our mechanism and the Shapley value of its associated allocation games, we can
show further desirable properties of~$\p^\xi$.
In fact, we exploit the following well-known properties~\cite<see, e.g., >{Osborne1994,Young1985} of the Shapley value of any game
$\game=\tuple{N,\varphi}$:

  \begin{myitemize}
  \item[(I)] $\sum_{i\in N} \sv_i(\game)=\varphi(N)$;

  \vspace{1mm}\item[(II)] If $\varphi$ is \emph{supermodular} (resp., \emph{submodular}), then $\sum_{i\in C} \sv_i(\game)\geq \varphi(C)$
      (resp., $\sum_{i\in C} \sv_i(\game)\leq \varphi(C)$).

  \vspace{1mm}\item[(III)] If $\game'=\tuple{N,\varphi'}$ is a game such that $\varphi'(C)\geq \varphi(C)$, for each $C\subseteq N$, then
  $\sv_i(\game')\geq \sv_i(\game)$, for each agent $i\in \A$.
\end{myitemize}

Our first result is to show that $\best_{S,\t}(\C)$ and $\marg_{S,\t}(\C)$ provide an upper and a lower bound, respectively, to the sum of the
utility functions over any set $\C$ of agents. This is particularly useful whenever we have to reason in terms of fairness for groups of
agents, rather than just in terms of the utility of singletons.

For instance, in the motivating scenario of Section~\ref{sec:motivation}, a crucial question concerns how the structure funding after the ANVUR
evaluation (e.g., the research funds for a University) should be shared among its sub-structures (e.g., the Departments). The result below
shows that, with our mechanism, any sub-structure will never get less than its marginal contribution, neither more than the maximum
contribution it can achieve if its members were alone in the structure. In particular, any closed group of researchers (e.g., any department
without collaborations with other departments, or any research group without further coauthors in the same structure) will share precisely the
total value attributed by ANVUR to its research products. Such a score is desirable for agents, and it is perceived as a fair distribution (see
the work by~\citeA{Moulin1992}, for more on the fairness of the Shapley value).

\begin{theorem}\label{thm:bounds}
Let $\pi$ be an optimal allocation for $S=\tuple{\A,G,\omega}$ w.r.t.~$\t$. Then, for each set $\C\subseteq \A$ of agents,
$\best_{S,\t}(\C)\geq \sum_{i\in \C}u_{i}(\pi,\t)\geq \marg_{S,\t}(\C)$.
\end{theorem}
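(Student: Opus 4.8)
The plan is to derive both inequalities directly from the dual Shapley-value representation of the utilities established in Theorem~\ref{thm:shapley3}, combined with the (sub/super)modularity of the two associated games and the standard bounding property (II) of the Shapley value. The crucial observation is that Theorem~\ref{thm:shapley3} expresses the very same quantity $u_{i}(\pi,\t)$ in two different ways: as the Shapley value $\sv_i(\game_{S,\t}^{\tbest})$ of the \emph{submodular} game (Proposition~\ref{prop:submodular}), and as the Shapley value $\sv_i(\game_{S,\t}^{\tmarg})$ of the \emph{supermodular} game (Theorem~\ref{thm:supermodular}). Each representation will yield exactly one of the two desired bounds, so the only real decision is to pair each side with the matching game.

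For the upper bound, I would first invoke Theorem~\ref{thm:shapley3} to write $\sum_{i\in \C}u_{i}(\pi,\t)=\sum_{i\in \C}\sv_i(\game_{S,\t}^{\tbest})$. Since $\best_{S,\t}$ is submodular by Proposition~\ref{prop:submodular}, the submodular case of property (II) of the Shapley value gives $\sum_{i\in \C}\sv_i(\game_{S,\t}^{\tbest})\leq \best_{S,\t}(\C)$, which is precisely the left inequality $\best_{S,\t}(\C)\geq \sum_{i\in \C}u_{i}(\pi,\t)$.

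For the lower bound, I would symmetrically use the other representation furnished by Theorem~\ref{thm:shapley3}, namely $\sum_{i\in \C}u_{i}(\pi,\t)=\sum_{i\in \C}\sv_i(\game_{S,\t}^{\tmarg})$. Now $\marg_{S,\t}$ is supermodular by Theorem~\ref{thm:supermodular}, so the supermodular case of property (II) yields $\sum_{i\in \C}\sv_i(\game_{S,\t}^{\tmarg})\geq \marg_{S,\t}(\C)$, establishing the right inequality. Chaining the two then produces the claimed sandwich $\best_{S,\t}(\C)\geq \sum_{i\in \C}u_{i}(\pi,\t)\geq \marg_{S,\t}(\C)$.

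There is essentially no hard computation here; the entire content was front-loaded into the two nontrivial preceding results (the equality of the two Shapley values in Theorem~\ref{thm:shapley3}, and the modularity of the two games). The main obstacle is therefore conceptual rather than technical: one must recognize that the two-sided bound is really two separate applications of the same Shapley-value inequality to two different but equal-valued games, and be careful to apply each modularity property to the correct game---submodular $\best_{S,\t}$ for the upper bound, supermodular $\marg_{S,\t}$ for the lower bound. The second half of Theorem~\ref{thm:shapley3}, asserting that the two Shapley values coincide, is exactly what licenses switching representation for each side.
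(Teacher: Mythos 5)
Your proof is correct and follows essentially the same route as the paper's: both invoke Theorem~\ref{thm:shapley3} to identify $u_i(\pi,\t)$ with the Shapley values of $\game_{S,\t}^{\tbest}$ and $\game_{S,\t}^{\tmarg}$, then apply property (II) to the submodular game $\best_{S,\t}$ (Proposition~\ref{prop:submodular}) for the upper bound and to the supermodular game $\marg_{S,\t}$ (Theorem~\ref{thm:supermodular}) for the lower bound. The only difference is presentational: you spell out the two applications of property (II) separately, whereas the paper states them in one stroke.
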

\begin{proof}
By Theorem~\ref{thm:shapley3}, we know that $u_{i}(\pi,\t)=\sv_i(\game_{S,\t}^{\tbest})=\sv_i(\game_{S,\t}^{\tmarg})$, for each agent $i\in \A$
and optimal allocation $\pi$. Then, we can simply recall that the function $\marg_{S,\t}$ (resp., $\best_{S,\t}$) associated with the game
$\game_{S,\t}^{\tmarg}$ (resp., $\game_{S,\t}^{\tbest}$) is supermodular (resp., submodular) by Theorem~\ref{thm:supermodular} (resp.,
Proposition~\ref{prop:submodular}). Hence, the result follows as $\sum_{i\in \C}u_{i}(\pi,\t)=\sum_{i\in
\C}\sv_i(\game_{S,\t}^{\tbest})=\sum_{i\in \C}\sv_i(\game_{S,\t}^{\tmarg})$ and by property (II).~\hfill~
\end{proof}

Our second result pertains the budget-balance property of the mechanisms. Again, the correspondence with the Shapley value is crucial to establish the
result.

\begin{theorem}\label{thm:budget}
Let $\pi$ be an optimal allocation for $S=\tuple{\A,G,\omega}$ w.r.t.~$\t$. Then, it holds that $\sum_{i\in \A}p^\xi_i(\pi,\t)=0$.
\end{theorem}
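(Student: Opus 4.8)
The plan is to expand the sum of payments using the definition $p_i^\xi(\pi,\t)=\xi_i(\pi,\t)-v_i(\pi)$ given at step~9 of Figure~\ref{fig:M1}, so that $\sum_{i\in\A}p_i^\xi(\pi,\t)=\sum_{i\in\A}\xi_i(\pi,\t)-\sum_{i\in\A}v_i(\pi)$, and then to show that the two sums on the right-hand side both equal $\opt(S,\t)$ and therefore cancel.

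First I would handle the term $\sum_{i\in\A}v_i(\pi)$. Since the rule is evaluated at the truthful profile $\t$, the verified type agrees with the true type on $\img(\pi)$, so $v_i(\pi)=t_i(\pi)$ for every agent $i\in\A$, and hence $\sum_{i\in\A}v_i(\pi)=\val(\pi,\t)$. Because $\pi$ is an \emph{optimal} allocation for $S$ w.r.t.~$\t$, this value is exactly $\opt(S,\t)$. Next, for the term $\sum_{i\in\A}\xi_i(\pi,\t)$, I would invoke Theorem~\ref{thm:shapley3}, which identifies $\xi_i(\pi,\t)$ with the Shapley value $\sv_i(\game_{S,\t}^{\tbest})$ of the best-contribution game. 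The efficiency property~(I) of the Shapley value then gives $\sum_{i\in\A}\sv_i(\game_{S,\t}^{\tbest})=\best_{S,\t}(\A)$, and by Definition~\ref{def:games} we have $\best_{S,\t}(\A)=\opt(\tuple{\A,G,\omega},\t)=\opt(S,\t)$.

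Combining the two computations yields $\sum_{i\in\A}p_i^\xi(\pi,\t)=\opt(S,\t)-\opt(S,\t)=0$, as claimed. The argument is essentially a cancellation, so I do not expect a genuine obstacle; the only point demanding care is the justification of $\sum_{i\in\A}v_i(\pi)=\opt(S,\t)$, which rests on the optimality of $\pi$ together with the equality of verified and true types on allocated goods. As an alternative that avoids invoking property~(I) directly, one could instead apply Theorem~\ref{thm:bounds} with $\C=\A$: since $\best_{S,\t}(\A)=\opt(S,\t)$ and $\marg_{S,\t}(\A)=\opt(S,\t)-\opt(\tuple{\emptyset,G,\omega},\t)=\opt(S,\t)$, the two bounds coincide and squeeze $\sum_{i\in\A}u_i(\pi,\t)=\sum_{i\in\A}\xi_i(\pi,\t)$ to $\opt(S,\t)$, giving the same conclusion.
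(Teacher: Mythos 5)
Your proof is correct and follows essentially the same route as the paper: identify $\sum_{i\in\A}\xi_i(\pi,\t)$ with $\sum_{i\in\A}\sv_i(\game_{S,\t}^{\tbest})=\best_{S,\t}(\A)=\opt(S,\t)$ via Theorem~\ref{thm:shapley3} and the efficiency property~(I) of the Shapley value, and cancel it against $\sum_{i\in\A}v_i(\pi)=\val(\pi,\t)=\opt(S,\t)$, which holds by optimality of $\pi$ and the agreement of verified and true types on $\img(\pi)$. The squeeze via Theorem~\ref{thm:bounds} is a valid (if slightly roundabout) alternative, but it is not needed.
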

\begin{proof}
By Theorem~\ref{thm:shapley3}, we know that $u_{i}(\pi,\t)=\sv_i(\game_{S,\t}^{\tbest})$, for each agent $i\in \A$ and optimal allocation
$\pi$, where $\sv_i(\game_{S,\t}^{\tbest})$ is the Shapley value of $\game_{S,\t}^{\tbest}$. By property (I) of the Shapley value, we know that
$\sum_{i\in \A}\sv_i(\game_{S,\t}^{\tbest})=\best_{S,\t}(\A)$. Thus, $\sum_{i\in \A}u_{i}(\pi,\t)=\sum_{i\in
\A}\sv_i(\game_{S,\t}^{\tbest})=\opt(\tuple{\A,G,\omega},\t)$.
It follows that $\opt(\tuple{\A,G,\omega},\t)=\sum_{i\in \A}v_i(\pi)-\sum_{i\in \A}p^\xi_i(\pi,\t)$, by definition of the utility. Hence,
$\sum_{i\in \A} p^\xi_i(\pi,\t)=\opt(\tuple{\A,G,\omega},\t)-\val(\pi,\t)=0$, as $\pi$ is indeed an optimal allocation w.r.t.~$\t$ (and, hence,
w.r.t. verified types).~\hfill~
\end{proof}

\begin{corollary}[\textbf{budget-balance}]
Let $A$ be any optimal allocation algorithm. Then, the mechanism with verification $(A,\p^\xi)$ is budget-balanced.
\end{corollary}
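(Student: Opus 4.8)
The plan is to obtain the corollary directly from Theorem~\ref{thm:budget}, which has already established the pointwise identity $\sum_{i\in\A}p^\xi_i(\pi,\t)=0$ for \emph{every} optimal allocation $\pi$ for $S=\tuple{\A,G,\omega}$ w.r.t.~$\t$. Since that theorem fixes an arbitrary optimal $\pi$, the only remaining task is to check that the particular allocation produced by the mechanism, namely $A(\t)$, satisfies its hypothesis.

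First I would invoke the definition of an optimal allocation algorithm given in Section~\ref{sec:framework}: $A$ is optimal precisely when $A(\w)\in\Pi^*_\w$ for each type vector $\w\in\D$. Instantiating this with $\w=\t$ shows that $A(\t)$ is an optimal allocation for $\tuple{\A,G,\omega}$ w.r.t.~$\t$, so Theorem~\ref{thm:budget} applies with the choice $\pi=A(\t)$. This immediately yields $\sum_{i\in\A}p^\xi_i(A(\t),\t)=0$, which is exactly the (strong) budget-balance condition as formalized in Section~\ref{sec:properties}.

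To complete the reasoning I would recall that budget-balance is a property to be verified at the truthful equilibrium, and that the mechanism $(A,\p^\xi)$ has already been shown to be truthful in Theorem~\ref{thm:truth}; hence evaluating the payments at the profile $\t$ is the legitimate point at which to check the property, and the displayed identity settles the claim. I do not expect any genuine obstacle here: the substance of the argument—the correspondence $u_i(\pi,\t)=\sv_i(\game_{S,\t}^{\tbest})$ of Theorem~\ref{thm:shapley3} together with property (I) of the Shapley value, giving $\sum_{i\in\A}u_i(\pi,\t)=\opt(\tuple{\A,G,\omega},\t)=\val(\pi,\t)$—was carried out inside Theorem~\ref{thm:budget}, so the corollary is a one-line specialization of it to the allocation returned by $A$.
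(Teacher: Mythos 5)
Your proposal is correct and follows exactly the route the paper intends: the corollary is stated as an immediate specialization of Theorem~\ref{thm:budget}, obtained by noting that an optimal allocation algorithm satisfies $A(\t)\in\Pi^*_\t$, so the theorem applies with $\pi=A(\t)$. The paper gives no separate proof for precisely this reason, and your one-line instantiation (plus the recollection of where the underlying Shapley-value argument lives) matches it.
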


Finally, we complete the picture of our analysis by proving the strong fairness property of the proposed payment rule $\p^\xi$: In words, the
best outcome for every agent is always determined by a (global) optimal allocation. Moreover, from Corollary~\ref{cor:independence}, any agent
is indifferent about the specific optimal allocation being considered. That is, {\em any} chosen optimal allocation leads to the best results
for all agents.

\begin{lemma}\label{lemma:strong}
Let $\pi$ and $\pi'$ be two allocations for $S=\tuple{\A,G,\omega}$ such that $\pi$ is optimal, and hence $\val(\pi,\t)\geq \val(\pi',\t)$. Then, $u_{i}(\pi,\t)\geq
u_{i}(\pi',\t)$ holds, for each $i\in \A$.
Moreover, if $\pi'$ is not optimal, there exists some agent $i\in \A$ such that $u_{i}(\pi,\t) > u_{i}(\pi',\t)$.
\end{lemma}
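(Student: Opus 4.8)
The plan is to read both utilities through the Shapley-value characterisation and then compare the two underlying games coalition by coalition. By Lemma~\ref{lemma:utility} together with the identity~(\ref{eq:img}) derived inside the proof of Theorem~\ref{thm:shapley1}---which comes purely from the definitions of $\Delta^1$ and $\Delta^2$ at truthful types and hence holds for \emph{every} allocation, not only optimal ones---the utility of any allocation $\rho$ satisfies $u_i(\rho,\t)=\sv_i(\game^{\img(\rho)})$, where $\game^{\img(\rho)}$ denotes the (restricted) best-contribution game whose worth on a coalition $\C$ is $\opt(\tuple{\C,\img(\rho),\onC{\omega}},\onC{\t})$. Applying this to both $\pi$ and $\pi'$ turns the first claim into the single comparison $\sv_i(\game^{\img(\pi)})\geq\sv_i(\game^{\img(\pi')})$.

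First I would establish coalition-wise domination of these two games. Since $\pi$ is optimal while $\pi'$ is arbitrary with $\val(\pi,\t)\geq\val(\pi',\t)$, Corollary~\ref{cor:due} gives $\opt(\tuple{\C,\img(\pi),\onC{\omega}},\onC{\t})\geq\opt(\tuple{\C,\img(\pi'),\onC{\omega}},\onC{\t})$ for every coalition $\C\subseteq\A$; that is, $\game^{\img(\pi)}$ dominates $\game^{\img(\pi')}$ on every coalition. The monotonicity property~(III) of the Shapley value then yields $\sv_i(\game^{\img(\pi)})\geq\sv_i(\game^{\img(\pi')})$ for each agent $i$, i.e.\ $u_i(\pi,\t)\geq u_i(\pi',\t)$, which settles the first part.

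For the strict statement I would pass to the grand-coalition totals. By property~(I), $\sum_{i\in\A}u_i(\rho,\t)=\opt(\tuple{\A,\img(\rho),\onC{\omega}},\onC{\t})$ for any $\rho$, and for the optimal $\pi$ this equals $\val(\pi,\t)=\opt(S,\t)$, exactly as in the proof of Theorem~\ref{thm:budget}. Because every per-agent gap $u_i(\pi,\t)-u_i(\pi',\t)$ is non-negative by the first part, and these gaps sum to $\opt(S,\t)-\opt(\tuple{\A,\img(\pi'),\onC{\omega}},\onC{\t})$, at least one gap is strictly positive \emph{if and only if} this total is strictly positive. The whole task thus reduces to proving $\opt(\tuple{\A,\img(\pi'),\onC{\omega}},\onC{\t})<\opt(S,\t)$.

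The main obstacle is precisely this last strict drop: one must argue that the goods actually touched by a sub-optimal $\pi'$ cannot, on their own, support a globally optimal allocation. I would attempt it by an augmentation argument in the one-good version, reusing the update-graph machinery from the proof of Theorem~\ref{thm:monotonicity}: given a global optimum and $\pi'$, I would trace the alternating paths and show that any value gain over $\val(\pi',\t)$ must route through a good outside $\img(\pi')$. The delicate case, where I expect the real work to concentrate, is when $\pi'$ merely mis-allocates an already optimal \emph{bundle} of goods; there $\img(\pi')$ still attains $\opt(S,\t)$, the two grand totals coincide, and the utility profiles agree. Ruling this boundary case in or out---i.e.\ pinning down the sense in which a non-optimal $\pi'$ must genuinely omit a good needed for optimality---is the crux of the second part.
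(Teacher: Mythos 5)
Your first part is correct and is exactly the paper's argument: by Lemma~\ref{lemma:utility} and the reasoning behind Equation~\ref{eq:img}, $u_i(\rho,\t)$ equals the Shapley value of the image-restricted game $\game^{\img(\rho)}$ for \emph{any} allocation $\rho$; Corollary~\ref{cor:due} gives coalition-wise domination of $\game^{\img(\pi)}$ over $\game^{\img(\pi')}$, and property (III) closes the claim. Your reduction of the second part, via property (I), to the strict drop $\opt(\tuple{\A,\img(\pi'),\onC{\omega}},\onC{\t})<\opt(S,\t)$ is also precisely the route the paper takes.

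The gap in your proposal is that you never establish this strict drop---and in fact you cannot, because it fails exactly in the boundary case you flag. The paper's own proof slides over this point: it infers $v^{\pi}(\A)>v^{\pi'}(\A)$ directly from $\val(\pi,\t)>\val(\pi',\t)$, but $v^{\pi'}(\A)$ is the \emph{optimum} over $\img(\pi')$, not the value achieved by $\pi'$, so this inference is a non sequitur. Concretely: take two agents with $\omega\equiv 1$ and goods $\{a,b\}$, with $t_1(a)=t_2(b)=2$ and $t_1(b)=t_2(a)=0$; let $\pi$ assign $a$ to agent $1$ and $b$ to agent $2$ (optimal, value $4$), and let $\pi'$ swap them (value $0$, not optimal). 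Then $\img(\pi)=\img(\pi')=\{a,b\}$, and since $u_i(\rho,\t)=\xi_i(\rho,\t)$ depends on $\rho$ only through $\img(\rho)$ (the payment rule restricts to $\img(\rho)$, where verified types coincide with true ones), we get $u_i(\pi,\t)=u_i(\pi',\t)=2$ for both agents: no agent is strictly worse off under $\pi'$. So the second claim of the lemma is false as stated; what the paper's argument actually proves is the weaker statement that some agent loses strictly whenever $\img(\pi')$ does not support a globally optimal allocation. Your hesitation was the right instinct---the ``crux'' you isolated is not a missing step but a counterexample. Note that the downstream results (e.g., Theorem~\ref{theo:fairness} and its corollary) use only the first, weak-inequality part, so they are unaffected.
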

\begin{proof}
For any allocation $\bar\pi$, consider the coalitional game $\game^{\bar\pi}=\tuple{\A,v^{\bar\pi}}$ such that
$v^{\bar\pi}(\C)=\opt(\tuple{\C,\img({\bar\pi}),\omega},\t)$, for each $\C\subseteq \A$. By looking at the expression of the Shapley value for
$\game^{\bar\pi}$, it is easy to check that $u_{i}({\bar\pi},\t)=\sv_i(\game^{\bar\pi})$ (just use the same reasoning leading to
Equation~\ref{eq:img} in the proof of Theorem~\ref{thm:shapley1}).
% Qui sopra è ora un suggerimento, perché il teorema è per optimal strategy e la
%   corrispondenza ci serve anche per l'altra allocation.
Assume now that $\pi'$ is an allocation with $\val(\pi,\t)\geq \val(\pi',\t)$, and consider the value
$v^{\pi'}(\C)=\opt(\tuple{\C,\img(\pi'),\omega},\t)$, for each $\C\subseteq \A$. By Corollary~\ref{cor:due}, we have that $v^\pi(\C)\geq
v^{\pi'}(\C)$, for each $\C\subseteq \A$. Then, we derive that $u_{i}(\pi,\t)=\sv_i(\game^\pi)\geq \sv_i(\game^{\pi'})=u_{i}(\pi',\t)$ for
every $i\in \A$, because of property (III) of the Shapley value.

Now assume that $\pi'$ is not optimal, and thus $\val(\pi,\t) > \val(\pi',\t)$. Therefore, for the grand-coalition $\A$, we have $v^\pi(\A) >
v^{\pi'}(\A)$. Because of property (I) of the Shapley value, only (and all) the total value $v^{\pi'}(\A)$ is distributed to agents. It follows
that there exists some agent $i\in \A$ such that $u_{i}(\pi,\t)=\sv_i(\game^\pi)> \sv_i(\game^{\pi'})=u_{i}(\pi',\t)$.~\hfill~
\end{proof}

Because, truthful declarations lead to optimal allocations, the desired fairness property is immediately entailed by the previous lemma.

\begin{theorem}[\textbf{Fairness}]\label{theo:fairness}
Let $A$ be any optimal allocation algorithm. Then, for any agent $i\in \A$ and any allocation $\pi$, $u_{i}(A(\t),\t)\geq u_{i}(\pi,\t)$.
\end{theorem}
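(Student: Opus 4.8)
The plan is to derive the statement as an immediate consequence of Lemma~\ref{lemma:strong}, whose substantive content---built on Corollary~\ref{cor:due} together with property (III) of the Shapley value---already does all the heavy lifting. The fairness inequality is thus not proved afresh here; rather, it is recovered by specializing the lemma to the allocation actually returned by the algorithm at the truthful profile.

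First I would unwind the hypothesis that $A$ is an \emph{optimal} allocation algorithm. By definition this means $A(\w)\in\Pi^*_\w$ for every $\w\in\D$, and in particular $A(\t)\in\Pi^*_\t$. Hence $A(\t)$ is an optimal allocation for $S=\tuple{\A,G,\omega}$ w.r.t.~$\t$, so that $\val(A(\t),\t)\geq\val(\pi,\t)$ holds for every allocation $\pi$ of $S$.

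Next I would apply Lemma~\ref{lemma:strong}, instantiating the optimal allocation there to $A(\t)$ and the second (arbitrary) allocation to the given $\pi$. Since $A(\t)$ is optimal and $\val(A(\t),\t)\geq\val(\pi,\t)$, the first conclusion of the lemma yields $u_{i}(A(\t),\t)\geq u_{i}(\pi,\t)$ for each agent $i\in\A$, which is exactly the asserted inequality.

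The only point worth checking is the pattern-matching between the theorem and the lemma, including the harmless notational clash in reusing the symbol $\pi$; there is no genuine obstacle, because the Shapley-value monotonicity argument that makes the inequality true has already been carried out inside Lemma~\ref{lemma:strong}. The proof therefore reduces to the single observation that an optimal allocation algorithm returns a (globally) optimal allocation at the profile $\t$, so that the best-response comparison demanded by the fairness property is precisely the one governed by the lemma.
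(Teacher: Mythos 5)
Your proposal is correct and matches the paper's own argument exactly: the paper derives Theorem~\ref{theo:fairness} as an immediate consequence of Lemma~\ref{lemma:strong}, noting that an optimal allocation algorithm returns an allocation in $\Pi^*_\t$ at the truthful profile, which is precisely your instantiation. Nothing is missing.
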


\begin{corollary}[\textbf{Pareto-efficiency and envy-freeness}]
Let $A$ be any optimal allocation algorithm. Then, the mechanism with verification $(A,\p^\xi)$ is Pareto efficient and envy-free.
\end{corollary}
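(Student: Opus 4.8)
The plan is to derive both properties directly from the Fairness theorem (Theorem~\ref{theo:fairness}), which asserts that $u_i(A(\t),\t) \geq u_i(\pi,\t)$ for \emph{every} agent $i \in \A$ and \emph{every} allocation $\pi$. Since this inequality is universally quantified over all allocations, both envy-freeness and Pareto-efficiency should fall out as special cases or immediate consequences, with essentially no further combinatorial work. This matches the remark made in Section~\ref{sec:properties}, where it is noted that fairness is a much stronger property that immediately entails the other two.

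First I would handle envy-freeness. Recall its definition from Section~\ref{sec:properties}: for each pair $i,j \in \A$ and each allocation $\pi$ with $\pi(i) = A(\t)(j)$, one needs $u_i(A(\t),\t) \geq u_i(\pi,\t)$. But Theorem~\ref{theo:fairness} already guarantees this very inequality for every allocation $\pi$, hence a fortiori for the restricted class of allocations in which $i$ receives exactly the bundle that $j$ obtains under $A(\t)$. Thus envy-freeness is a direct instantiation of fairness (and is vacuously true when no such $\pi$ exists).

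Next I would address Pareto-efficiency by contradiction. Suppose some allocation $\pi$ Pareto-dominates $A(\t)$, i.e., $u_i(A(\t),\t) \leq u_i(\pi,\t)$ for all $i \in \A$, and $u_j(A(\t),\t) < u_j(\pi,\t)$ for some agent $j$. The strict inequality for agent $j$ directly contradicts the fairness bound $u_j(A(\t),\t) \geq u_j(\pi,\t)$. Hence no Pareto-dominating $\pi$ can exist, and $A(\t)$ is Pareto-efficient.

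Since both arguments are one-line reductions to an already-established universally-quantified inequality, there is no genuine obstacle here. The only point requiring care is to check that the definitions of envy-freeness and Pareto-dominance are indeed phrased as comparisons of the form $u_i(A(\t),\t)$ versus $u_i(\pi,\t)$, so that the fairness bound of Theorem~\ref{theo:fairness} applies verbatim; inspecting the definitions in Section~\ref{sec:properties} confirms this is the case.
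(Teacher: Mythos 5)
Your proof is correct and follows exactly the route the paper intends: the corollary is stated without proof precisely because, as remarked in Section~\ref{sec:properties} and after Theorem~\ref{theo:fairness}, both properties are immediate instantiations of the fairness inequality $u_{i}(A(\t),\t)\geq u_{i}(\pi,\t)$ quantified over all allocations $\pi$. Your two reductions (restricting $\pi$ to swap allocations for envy-freeness, and contradicting the strict inequality of a Pareto-dominating allocation) are precisely the intended argument.
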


Note that the above fairness condition guarantees much more than classical Pareto efficiency and envy-freeness, because it entails that the
mechanism leads to a unique evaluation, independently of the chosen optimal allocation. In particular, the Pareto set is a singleton.

\section{Complexity Issues}\label{sec:complexity}

In this section, we shall reconsider our mechanism with verification from a computational perspective. Note first that computing an optimal
allocation on the basis of the reported types is an easy task, which can be carried out via adaptations of classical matching algorithms.
Indeed, in the light of Fact~\ref{lemma:norm}, computing an optimal allocation for $\tuple{\A,G,\omega,\w}$ reduces to computing an optimal
allocation for $\tuple{\A^{\bf 1},G,\bf{1}}$ w.r.t.~$\w^{\bf 1}$, which is a scenario where each agent can be allocated one good at most. This
is equivalent to find a matching of maximum weight over a complete bipartite graph over the set of disjoint nodes $\A^{\bf 1}$  and $G$, and
where edge weights are encoded via the function $\w^{\bf 1}$. This task is well-known to be feasible in polynomial time~\cite<e.g.,
>{Schrijver2003}.

\subsection{Hardness Result}

Despite optimal allocations can be computed in polynomial time, our mechanism is not computationally-efficient, since payments are unlikely to
be computable in polynomial time. Indeed, we next show that this computation problem is complete for the complexity class $\rm \#P$~\cite<see
>{Papadimitriou1993}.

For the sake of completeness, we recall here that a \emph{counting Turing machine} is a standard nondeterministic Turing machine with an
auxiliary output device that prints in binary notation the number of accepting computations induced by the input. It has (worst-case) time
complexity $f(n)$ if the longest accepting computation induced by the set of all inputs of size $n$ takes $f(n)$ steps. Then, {\rm \#P} is the
class of all functions that can be computed by counting Turing machines of polynomial time complexity. A prototypical {\rm \#P}-\emph{complete}
problem is to count the number of truth variable assignments that satisfy a Boolean formula. Of course, {\rm NP}$\subseteq${\rm \#P}, and a
polynomial-time algorithm for solving a {\rm \#P}-complete problem would imply {\rm P} = {\rm NP}.

\begin{theorem}
Computing the Shapley value of coalitional games associated with allocation problems (as in Definition~\ref{def:games}) is {\rm
\#P}-{complete}.
\end{theorem}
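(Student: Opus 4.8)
The plan is to establish the two halves of \#P-completeness separately, treating membership as routine and concentrating effort on hardness. Since Theorem~\ref{thm:shapley3} shows that $\sv_i(\game_{S,\t}^{\tbest})$ and $\sv_i(\game_{S,\t}^{\tmarg})$ coincide, it suffices to argue about a single one of the two games, say $\game^{\tbest}$.

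For membership, I would exhibit a counting Turing machine whose number of accepting computations equals $|\A|!\cdot\sv_i(\game_{S,\t}^{\tbest})$, a quantity from which $\sv_i$ is recovered by dividing by the easily computable factor $|\A|!$. The machine nondeterministically guesses a permutation $\sigma$ of $\A$, computes in polynomial time the marginal contribution $m_\sigma=\opt(\tuple{P_\sigma(i)\cup\{i\},G,\omega},\t)-\opt(\tuple{P_\sigma(i),G,\omega},\t)$ of $i$ over its set $P_\sigma(i)$ of predecessors in $\sigma$ --- each $\opt$ being a maximum-weight matching value, computable in polynomial time as noted at the start of this section --- and then branches into exactly $m_\sigma$ accepting paths. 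Monotonicity of $\opt$ guarantees that $m_\sigma\ge 0$ is a well-defined nonnegative integer for integer types, so the total accepting count is $\sum_\sigma m_\sigma=|\A|!\cdot\sv_i$, placing the problem in \#P.

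For hardness, I would reduce from the \#P-complete problem of counting the subsets of a set of positive integers $a_1,\dots,a_n$ whose sum equals a target $b$. Using the one-good construction of Section~\ref{sec:framework} together with Fact~\ref{lemma:norm}, I would build an allocation scenario in which agent $i$ is cloned into $a_i$ unit-capacity copies and exactly $b$ unit-value goods are made available to every clone; then $\best_{S,\w}(\C)=\opt(\tuple{\C,G,\omega},\w)=\min\bigl(\sum_{j\in\C}a_j,\,b\bigr)$, a submodular capped-sum function realized as a maximum-weight matching value. The marginal contribution of agent $i$ to a coalition is nonzero precisely when joining $i$ pushes the partial sum across the threshold $b$, so $\sv_i$ becomes a size-weighted tally of the subsets that are ``pivotal'' at $b$, directly in the spirit of the weighted-voting hardness of~\citeA{Deng1994}.

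The main obstacle is that the Shapley coefficients $\frac{(|\A|-|\C|)!(|\C|-1)!}{|\A|!}$ weight each coalition by its size, so a single value of $\sv_i$ does not immediately reveal the plain count of subsets with sum $b$. To overcome this I would augment the instance with a controlled number of null agents of zero value (equivalently, shift coalition sizes) and compute $\sv_i$ on polynomially many such variants; the resulting values form a Vandermonde-type linear system in the unknown layer counts $N_{k}=\#\{\C:|\C|=k,\ \sum_{j\in\C}a_j=b\}$, which can be solved in polynomial time to extract $\sum_k N_k$, the desired subset-sum count. Care must be taken to keep all weights and targets polynomially bounded, so that the Shapley values remain numbers of polynomial size, and to verify that the system is non-degenerate; granting this, a polynomial-time oracle for the Shapley value would count subset sums, establishing \#P-hardness and hence, together with membership, \#P-completeness.
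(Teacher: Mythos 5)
Your membership argument is fine and essentially matches the paper's (which simply cites the result of Deng and Papadimitriou that the Shapley value of any game with a polynomial-time value function is {\rm \#P}-computable), but your hardness reduction has a genuine and, as far as I can see, unfixable gap. In your construction the target $b$ is realized as the number of unit-value goods, and each weight $a_i$ as a number of unit-capacity clones (or a capacity bound); either way, the allocation instance you build has size at least $b$, and with cloning at least $\sum_i a_i$, since goods and agents must be listed explicitly. So the reduction runs in polynomial time only if $b$ and the $a_i$ are polynomially bounded---a restriction you yourself note must be enforced. But counting subsets of $\{a_1,\dots,a_n\}$ summing to $b$ is {\rm \#P}-hard only when the integers are encoded in binary and may be exponentially large; with polynomially bounded weights the count is computable in polynomial time by the standard dynamic program over partial sums. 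Hence the problem you actually reduce from lies in {\rm FP}, and the reduction establishes nothing. This is precisely why the paper reduces instead from counting perfect matchings in bipartite graphs (Valiant), a problem that stays {\rm \#P}-hard even though every number in the constructed instance is tiny: the values used there are in $\{0,1,2\}$ and the capacities are $1$ or $2$, with all the hardness coming from the graph structure rather than from arithmetic.

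A second, smaller error: in the capped-sum game $\best_{S,\w}(\C)=\min\left(\sum_{j\in\C}a_j,\,b\right)$, the marginal contribution of $i$ over a predecessor sum $s$ is $\min(s+a_i,b)-\min(s,b)$, which is nonzero whenever $s<b$ (it equals $a_i$ when $s+a_i\le b$, and $b-s$ when $s<b<s+a_i$), not ``precisely when joining $i$ pushes the sum across the threshold.'' Consequently $\sv_i$ is not a size-weighted tally of pivotal coalitions, and your Vandermonde system in the layer counts $N_k$ is underdetermined: adding null agents only reweights coalitions by size, so the recoverable unknowns are sums of marginal contributions aggregated over each size, which mix coalitions with different partial sums. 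For comparison, the paper also needs an interpolation step to separate contributions by coalition size, but it implements this inside a single instance---making $k=9$ clones of each edge-agent so that matchings of cardinality $h$ contribute with weight $k^h$, then applying Valiant's digit-extraction argument---and it applies it to a game in which each marginal contribution is genuinely $0$ or $1$.
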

\begin{proof}
The problem belongs to {\rm \#P}, because computing the Shapley value is known to be feasible in {\rm \#P} for any class of coalitional games
with polynomial-time value/cost functions~\cite<c.f. >{Deng1994}. To show that it is {\rm \#P}-{hard}, we exhibit a reduction from the
following problem:
Let ${\tt G}=(A\cup B,E)$ be a bipartite graph with $|A|=|B|=n$, $E\subseteq A\times B$, and $|E|=m\geq n$. Recall that a matching is a set
$E'\subseteq E$ of edges such that for each pair of distinct edges $(a,b)$ and $(a',b')$ in $E'$, $a\neq a'$ and $b\neq b'$ hold. The matching
$E'$ is \emph{perfect} if $|E'|=n$. The problem of counting the number of perfect matchings in such bipartite graphs is {\rm
\#P}-complete~\cite{Valiant1979}.

Given a graph ${\tt G}=(A\cup B,E)$ as above and a constant $k\geq 1$ (which we shall fix below), we build in polynomial-time a tuple $S({\tt
G})=\tuple{\A,G,\omega}$ and a type vector $\t$ such that:

\begin{myitemize}
\item[\hspace{-2mm}(1)] $\A=\{\alpha\}\cup\bigcup_{(a,b)\in E}\{(a,b)^1,...,(a,b)^k\}$, i.e., agents are one-to-one associated with $k$
    distinct clones of each edge $(a,b)\in E$, plus a distinguished node $\alpha$. Note that $|\A|>n$, because in the considered bipartite graphs $m\geq n$ holds;
\item[\hspace{-2mm}(2)] $G=\{g_\alpha\}\cup A\cup B$, i.e., goods correspond to nodes, plus a distinguished good $g_{\alpha}$;
\item[\hspace{-2mm}(3)] $w$ is the function such that $\omega(\alpha)=1$, and $\omega((a,b)^i)=2$, for each $(a,b)\in A$ and
    $i\in\{1,...,k\}$;
\item[\hspace{-2mm}(4)] Types are as follows. For each $(a,b)^i\in \A$, $t_{(a,b)^i}(a)=2$, $t_{(a,b)^i}(b)=2$,
    $t_{(a,b)^i}(g_{\alpha})=1$, $t_{(a,b)^i}(x)=0$, $\forall x\in (A\cup B)\setminus\{a,b\}$.
    Moreover, $t_{\alpha}(g_{\alpha})=1$, $t_{\alpha}(x')=0$,  $\forall x'\in (A \cup B)$.
\end{myitemize}

Let us now fix some notations. For any set $E'\subseteq E$ of edges, let $\mathtt{match}(E')$ denote the size of the largest set $E''\subseteq
E'$ of edges that is a matching. For any set $\C\subseteq \A\setminus\{\alpha\}$ of agents, let $\mathtt{A}(\C)=\{ a \mid (a,b)^i \in \C\}$ and
$\mathtt{B}(\C)=\{ b \mid (a,b)^i \in \C\}$.
Finally, we say that $\C\subseteq \A\setminus\{\alpha\}$ is \emph{tight} if it does not contain two agents of the form $(a,b)^i$ and $(a,b)^j$,
with $i\neq j$, i.e., associated with the same edge of $\tt G$.

Observe that, for each set $\C\subseteq \A\setminus\{\alpha\}$ of agents,
\begin{eqnarray}\label{eqn:rew}
\opt(\tuple{\C\cup \{\alpha\},G,\omega},\t)-\opt(\tuple{\C,G,\omega},\t)=\left\{
\begin{array}{ll}
1 & \mbox{ if $\C$ is tight, and $|\C|=\mathtt{A}(\C)=\mathtt{B}(\C)$}         \\
0 & \mbox{ otherwise}
\end{array}
\right.
\end{eqnarray}
Indeed, if $\pi_\C'$ is an optimal allocation for $\tuple{\C\cup\{\alpha\},G,{\omega}}$ w.r.t.~$\t$, then we always have that
$\val(\pi_\C',\t)=2\times |\mathtt{A}(\C)|+2\times |\mathtt{B}(\C)| + 1$.
Instead, if $\pi_\C$ is an optimal allocation for $\tuple{\C,G,{\omega}}$ w.r.t.~$\t$, then we have
$$
\val(\pi_\C)=\left\{
\begin{array}{ll}
2\times |\mathtt{A}(\C)|+2\times |\mathtt{B}(\C)| & \mbox{ if $\C$ is tight, and $|\C|=\mathtt{A}(\C)=\mathtt{B}(\C)$}         \\
2\times |\mathtt{A}(\C)|+2\times |\mathtt{B}(\C)|+1 & \mbox{ otherwise}
\end{array}
\right.
$$

By exploiting Equation~\ref{eqn:rew}, we can now express the Shapley value of the game $\game^{\tbest}_{S(\tt G),\t}$ for agent $\alpha$ in a
convenient way. Let $X_h$ denote the number of sets $\C\subseteq \A\setminus\{\alpha\}$ of agents which are tight and such that $|\C|=|{\tt
A}(\C)|=|\mathtt{B}(\C)|=h$, and let $X_0=1$. Then,
\begin{eqnarray}
\sv_\alpha(\game^{\tbest}_{S(\tt G),\t})=\sum_{h=0}^{|\A|-1}\frac{(|\A|-h-1)!(h)!}{|\A|!} X_h.
\end{eqnarray}
In particular, let us now focus on the coefficient $X_h$. Denote by $Y_h$ the number of matchings in $\tt G$ whose cardinality is $h$. By
construction of $S({\tt G})$ it is immediate to check that for each matching of cardinality $h$ in $\tt G$, there are precisely $k^h$ sets of
agents $\C\subseteq \A\setminus\{\alpha\}$ that are tight and such that $|\C|=|{\tt A}(\C)|=|\mathtt{B}(\C)|=h$. Thus, we can rewrite the above
expression:

\vspace{-6mm}
\begin{eqnarray}
\sv_\alpha(\game^{\tbest}_{S(\tt G),\t})=\sum_{h=0}^{|\A|-1}\left(Z_h \times Y_h\right)\times k^h, \mbox{ with $Z_h=\frac{(|\A|-h-1)!(h)!}{|\A|!}$}.
\end{eqnarray}

For an expression as the one above, given the value of $\sv_\alpha(\game^{\tbest}_{S(\tt G),\t})$, it is known that under certain circumstances
we can reconstruct in polynomial time the value of each single term of the form $Z_h \times Y_h$ (see Fact 6 in the work
by~\citeA{Valiant1979a}): We need the existence of an integer constant ${\bf A}>2$ such that, for each $h\in\{0,...,|\A|-1\}$, $Z_h \times
Y_h\leq {\bf A}$, and $k\geq {\bf A}^2$.
In our case, it can be noticed that, for each $h\in\{0,...,|\A|-1\}$, $Z_h \times Y_h\leq 1$ holds, as $Y_h\leq |\A|!/({(h)!(|\A|-h)!})$. Thus,
for $k=9$, we have that, given the value of $\sv_\alpha(\game^{\tbest}_{S(\tt G),\t})$, we can compute in polynomial time all such terms.
In particular, we can compute in polynomial time the term associated to $h=|A|=|B|=n$, where recall that  $|\A|>n$. This term has
the form $Z_n \times Y_n$, with $Y_n$ being the number of perfect matchings in $\tt G$. Thus, by putting it all together and since $Z_n$ can be
computed in polynomial time (as the size of the numbers $n$ and $|\A|$ are logarithmic w.r.t.~the size of $\tt G$), the number of perfect
matchings in bipartite graphs can be counted in polynomial time too, which concludes the proof.~\hfill~
\end{proof}

By Lemma~\ref{lemma:utility} and Theorem~\ref{thm:shapley3}, the following is immediate.

\begin{corollary}
Computing the payments as given by the rule $\p^\xi$ is $\rm \#P$-complete.
\end{corollary}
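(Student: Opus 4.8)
The plan is to derive both membership and hardness directly from the characterisation of $\xi_i$ already established, so that computing payments is sandwiched against computing the Shapley value of the games in Definition~\ref{def:games}. For membership in $\rm \#P$, I would first recall that, by step~9 of Figure~\ref{fig:M1}, $p_i^\xi(\pi,\w)=\xi_i(\pi,\w)-v_i(\pi)$, where $\xi_i(\pi,\w)$ is the weighted sum over all subsets $\C\subseteq\A$ of the differences $\Delta^1_{\C,i}(\pi,\w)-\Delta^2_{\C,i}(\pi,\w)$. Each such difference is a difference of two optimal-allocation values, computable in polynomial time by maximum-weight matching, as noted at the start of this section. Hence $\xi_i$ has exactly the shape of a Shapley value of a coalitional game with a polynomial-time value function, and the same argument used to establish membership in the preceding theorem places it in $\rm \#P$. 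Since $v_i(\pi)$ is trivially polynomial-time computable, being just the verified value of the goods that $\pi$ allocates to $i$, subtracting it keeps the overall computation in $\rm \#P$.

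For hardness, I would reduce from the problem of computing the Shapley value of the games of Definition~\ref{def:games}, shown $\rm \#P$-hard in the preceding theorem. Given an instance $S=\tuple{\A,G,\omega}$ with true types $\t$ and a target agent $i$, I first compute in polynomial time some optimal allocation $\pi$ for $S$ w.r.t.~$\t$, again by matching. Because $\pi$ is optimal w.r.t.~$\t$, Theorem~\ref{thm:shapley3} applies and gives $\xi_i(\pi,\t)=\sv_i(\game_{S,\t}^{\tbest})$; combining this with Lemma~\ref{lemma:utility} and step~9 yields $\sv_i(\game_{S,\t}^{\tbest})=p_i^\xi(\pi,\t)+v_i(\pi)$. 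Thus a polynomial-time algorithm for the payments would let us recover the Shapley value in polynomial time, contradicting the hardness established above.

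The result is therefore essentially immediate, and the only points requiring care are bookkeeping ones rather than a genuine obstacle. First, Theorem~\ref{thm:shapley3} is stated only for allocations $\pi$ that are optimal w.r.t.~$\t$, so the reduction must feed the payment rule such an allocation rather than an arbitrary one. Second, the additive term $-v_i(\pi)$ at step~9 must be handled correctly: it is innocuous precisely because $v_i(\pi)$ is polynomial-time computable, so it can be added back to reconstruct $\sv_i(\game_{S,\t}^{\tbest})$ from $p_i^\xi(\pi,\t)$ without affecting the reduction. No difficulty arises beyond these observations, which is why the corollary follows at once from Lemma~\ref{lemma:utility} and Theorem~\ref{thm:shapley3}.
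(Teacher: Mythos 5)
Your proposal is correct and follows essentially the same route as the paper, which derives the corollary as an immediate consequence of Lemma~\ref{lemma:utility} and Theorem~\ref{thm:shapley3} applied to the \#P-completeness of computing the Shapley value of the games in Definition~\ref{def:games}. You have merely spelled out the bookkeeping (polynomial-time computability of an optimal allocation and of the additive term $v_i(\pi)$) that the paper leaves implicit.
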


\subsection{A Fully Polynomial-Time Randomized Approximation Scheme}

An approach to circumvent the intractability of the Shapley value is based on approximation: For a game $\game=\tuple{N,\varphi}$, a vector
$\hat \sv$ is an \emph{$\varepsilon$-approximation of the Shapley value} if $|\hat \sv_i-\sv_i(\game)|\leq \varepsilon \times \sv_i(\game)$
holds, for each $i\in N$.

Recently, a sampling method conceived by \citeA{Bachrach2010} for the special class of \emph{simple} coalitional games has been extended to
deal with arbitrary games that are {supermodular} and \emph{monotone}\footnote{Monotonicity of $\game$ means that $\varphi(R)\geq \varphi(T)$,
$\forall T\subseteq R\subseteq N$.}~\cite{Liben-Nowell2011},  under the assumption that the value $\varphi(R)$ can be computed by an oracle
having unitary cost, for each $R\subseteq N$. The result is that, for any $\varepsilon>0$ and $\delta>0$, it is possible to compute in time
${\tt poly}(N,1/\varepsilon,\log(1/\delta))$ a vector $\hat \sv$ that is an $\varepsilon$-approximation of the Shapley value with probability
of failure at most $\delta$. A method with this properties is called a fully polynomial-time randomized approximation scheme.

% invertiti i segni in p_i; come nel precedente algoritmo
\begin{figure}[t]
\centering  \fbox{
\parbox{0.99\textwidth}
{
\begin{tabular}{ll}
\textbf{Input}:     &\ An allocation $\pi$ for $\tuple{\A,G,\omega}$, a vector $\w\in \D$, and an integer $m>0$;  \\
\textbf{Assumption}: &\ A verifier $\v$ is available. Let $\v(\pi)=(v_1,...,v_n)$;
\end{tabular}\\
\vspace{-0mm}\hrule\vspace{1mm}
\begin{tabular}{ll}
   %fs Ho aggiunto $\A$ per la prova di strategy-proofness.
   1.\ \ \ & Generate a set $\hmymath{C}$ of $m$ subsets of $\A$, and add to them
   the grand-coalition $\A$; \\%, with $m\geq 4|\A|(|\A|-1)/\varepsilon^2$;\\
   2.\ \ \ & For each set $\C\in \hmymath{C}$,\\
   3.\ \ \ & $|$\ \ \ \ Compute an optimal allocation $\pi_\C$ for $\tuple{\C,\img(\pi),\onC{\omega}}$ w.r.t.~$\w$; \\
   4.\ \ \ & \hspace{-0.2mm}$\lfloor$\ \ \ \ Compute an optimal allocation $\pi_{\C\setminus\{i\}}$ for $\tuple{\C\setminus\{i\},\img(\pi),\onCi{\omega}}$ w.r.t.~$\w$; \\
   5.\ \ \ & For each agent $i\in \A$,\\
   6.\ \ \ & $\lfloor$\ \ \ \ Compute $\xi_i(\pi,\w)$ as in Figure~\ref{fig:M1} (steps 4---8), with $\mymath{C}:=\hmymath{C}$; \\
   7.\ \ \ & Repeat $\Theta(\log(1/\delta))$ times steps 1, 2, and 5, and\\
   8.\ \ \ & Let $\hat \xi(\pi,\w)$ be the component-wise median vector of these vectors $\xi(\pi,\w)$;\\
%   9.\ \ \ & Rescale $\hat \xi(\pi,\w,\tilde \t)$ so that $\sum_{i\in \A}\hat \xi(\pi,\w,\tilde \t)=\sum_{i\in \A} w_i(\pi)$;\\
   9.\ \ \ & Define $\hat p_i^\xi(\pi,\w):=\hat \xi_i(\pi,\w)-v_i(\pi)$;    \\
\end{tabular}
} }\vspace{-1mm}
  \caption{Payment rule $\hat \p^\xi$.} \label{fig:M2}
\end{figure}

Next, we propose a payment rule $\hat \p^\xi$ that is founded on the sampling strategy described in the work by~\citeA{Liben-Nowell2011}. The
payment rule, reported in Figure~\ref{fig:M2}, samples $m$ subsets of $\A$ storing them in $\hmymath{C}$, and then computes the value
$\xi(\pi,\w)$ as in Figure~\ref{fig:M1}, but with $\hmymath{C}$ playing the role of the power-set $\mymath{C}$. Eventually, the process is repeated
$\Theta(\log(1/\delta))$ times, and the component-wise median vector of all such payments is computed. Finally, at step~9, the usual
compensation and bonus approach is implemented.

Interestingly, though the new rule $\hat \p^\xi$ is based on randomization, the following properties still hold (always, not just as expected
outcomes).

\begin{theorem}\label{thm:truth-approx}
Let $A$ be any optimal allocation algorithm. Then, the mechanism with verification $(A,\hat \p^\xi)$ is truthful and individually-rational.
\end{theorem}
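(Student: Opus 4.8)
The plan is to show that both properties are inherited from the exact rule $\p^\xi$ by a coalition-by-coalition argument, since replacing the full power-set $\mymath{C}$ by the sampled collection $\hmymath{C}$ merely restricts the index set of a nonnegatively-weighted sum and cannot break the per-coalition inequalities that drove the proofs of Theorem~\ref{thm:truth} and Corollary~\ref{thm:efir}. First I would record the analogue of Lemma~\ref{lemma:utility}: by step~9 of Figure~\ref{fig:M2} we have $\hat p_i^\xi(\pi,\w)=\hat\xi_i(\pi,\w)-v_i(\pi)$, whence $u_{i}(\pi,\w)=v_i(\pi)+\hat p_i^\xi(\pi,\w)=\hat\xi_i(\pi,\w)$; so it suffices to reason about the median estimate $\hat\xi_i$.

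For truthfulness, the key observation is that the sampled collections produced across the $\Theta(\log(1/\delta))$ repetitions are drawn independently of each agent's reported type, so I can couple the truth-telling scenario and any deviation under one common realization of all the random samples. Fixing that realization and any single repetition, the estimate $\xi_i$ is exactly the sum from step~8 of Figure~\ref{fig:M1} but ranging over $\hmymath{C}$ rather than $\mymath{C}$; crucially, properties~(A) and~(B) established in the proof of Theorem~\ref{thm:truth} hold for each individual coalition $\C$ and do not refer to the rest of the power-set. Since every weight $\frac{(|\A|-|\C|)!(|\C|-1)!}{|\A|!}$ is nonnegative, summing the per-coalition inequality over $\C\in\hmymath{C}$ yields $\xi_i(A(\mydot{t_i}{\d_{-i}}),(\mydot{t_i}{\d_{-i}}))\geq \xi_i(A(\d),\d)$ for that repetition. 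As this holds simultaneously for every repetition under the common realization, and the component-wise median is a monotone order statistic (if $a_r\geq b_r$ for all $r$, then $\mathrm{median}_r\,a_r\geq \mathrm{median}_r\,b_r$), the inequality survives the median step, giving $\hat\xi_i(A(\mydot{t_i}{\d_{-i}}),(\mydot{t_i}{\d_{-i}}))\geq \hat\xi_i(A(\d),\d)$. By the utility identity above this is exactly the dominance of truth-telling, and it holds for every realization, hence always---not merely in expectation.

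For individual rationality I would work at the equilibrium $\t$. Repeating the computation in the proof of Theorem~\ref{thm:shapley1}, for each sampled coalition $\C$ the summand equals $\opt(\tuple{\C,\img(\pi),\onC{\omega}},\onC{\t})-\opt(\tuple{\C\setminus\{i\},\img(\pi),\onCi{\omega}},\onCi{\t})$, which is nonnegative because $\opt$ is monotone in the set of agents (enlarging a coalition only enlarges the set of feasible allocations). With nonnegative weights, each per-repetition estimate $\xi_i(A(\t),\t)$ is a sum of nonnegative terms, hence nonnegative, and the median of nonnegative numbers is nonnegative; thus $u_{i}(A(\t),\t)=\hat\xi_i(A(\t),\t)\geq 0$ for every agent $i$ and every realization.

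The main obstacle I anticipate is not any single inequality but the need to argue that truthfulness holds for \emph{every} outcome of the randomization, rather than only in expectation. This is handled by the coupling idea together with the monotonicity of the median: because the sampled coalitions are chosen independently of declarations, the same realization can be used to evaluate both the honest report and any deviation, after which the per-coalition monotonicity of Theorem~\ref{thm:truth} and the order-statistic monotonicity of the median propagate the inequality through both the weighted sum and the median aggregation. Note that none of these steps exploits the grand-coalition $\A$ that Figure~\ref{fig:M2} adds to each sample; that addition is needed only for the budget-balance guarantees and is immaterial to truthfulness and individual rationality.
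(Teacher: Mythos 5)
Your proof is correct and takes essentially the same route as the paper: it inherits the per-coalition properties (A) and (B) from the proof of Theorem~\ref{thm:truth} (and the per-coalition nonnegativity underlying Corollary~\ref{thm:efir}), notes that the weights are nonnegative, and concludes that nothing changes when the power-set $\mymath{C}$ is replaced by any sampled collection $\hmymath{C}$. The only difference is that you spell out two steps the paper leaves implicit---fixing a common realization of the random samples across declarations, and the monotonicity of the componentwise median---which is a tightening of the same argument rather than a different one.
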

\begin{proof}
The result follows by inspecting the proofs for rule $\p^\xi$ in Section~\ref{sec:mechanism}. Indeed, it can be immediately checked that those
proofs do not depend on the specific subset of coalitions $\mymath{C}$, and thus they smoothly apply if any set of coalitions $\hmymath{C}$ is
used as in Figure~\ref{fig:M2}, instead of all possible subsets of $\A$.
Note in particular that, despite the payment rule is based on randomization, the resulting mechanism is always truthful: %, i.e., not just in \emph{expectation}.
just look at the proof of Theorem~\ref{thm:truth}, and notice that properties (A) and (B) are precisely those guaranteeing truthfulness, and
that these properties hold for each given coalition $\C$. Therefore, they still hold for any subset of coalitions randomly chosen by the
mechanism.~\hfill~
\end{proof}

For continuing with a deeper analysis of the payment rule $\hat \p^\xi$, we need to point out a relationship between utility values and
approximations of the Shapley value.

\begin{lemma}\label{lemma:shapley3-approx}
Let $\A=\{1,...,{|\A|}\}$, and let $m=\Theta(|\A|^2/\varepsilon^2)$. Then, for each optimal allocation $\pi$ for $S$ w.r.t.~$\t$, the vector
$(u_{1,\hat \p^\xi}(\pi,\t),...,u_{{|\A|},\hat \p^\xi}(\pi,\t))$ is in expectation the Shapley value of $\game_{S,\t}^{\tmarg}$ (and
$\game_{S,\t}^{\tbest}$), of which it is
an $\varepsilon$-approximation, with probability $1-\delta$.
\end{lemma}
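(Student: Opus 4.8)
The plan is to reduce the statement to the known guarantees of the Shapley-value FPRAS of~\citeA{Liben-Nowell2011}, after recasting the random output of Figure~\ref{fig:M2} in terms of the allocation games of Definition~\ref{def:games}. First I would observe, exactly as in Lemma~\ref{lemma:utility}, that the bonus/compensation definition at step~9 gives $u_{i,\hat \p^\xi}(\pi,\t)=v_i(\pi)+\hat p_i^\xi(\pi,\t)=\hat \xi_i(\pi,\t)$, so it suffices to analyze the random quantity $\hat \xi_i(\pi,\t)$. Next I would recall from the proof of Theorem~\ref{thm:shapley1} that at the truthful profile one has $\Delta^1_{\C,i}(\pi,\t)=\opt(\tuple{\C,G,\onC{\omega}},\onC{\t})=\best_{S,\t}(\C)$ and $\Delta^2_{\C,i}(\pi,\t)=\best_{S,\t}(\C\setminus\{i\})$, so that for each fixed coalition $\C$ the summand $\Delta^1_{\C,i}-\Delta^2_{\C,i}$ is precisely the marginal contribution of $i$ in the game $\game_{S,\t}^{\tbest}$. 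Hence each inner run of steps~1--6 is an instantiation of the coalition/permutation-sampling estimator of $\sv_i(\game_{S,\t}^{\tbest})$, and the full procedure (with the median step) is the FPRAS of~\citeA{Liben-Nowell2011} applied to this game.

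For the expectation part, I would note that the sampling underlying the FPRAS draws coalitions according to the Shapley weights $\frac{(|\A|-|\C|)!(|\C|-1)!}{|\A|!}$ of step~8 of Figure~\ref{fig:M1} (equivalently, via a uniform random permutation with $\C$ the prefix up to and including $i$), so that a single sampled marginal contribution is an unbiased estimator of $\sv_i(\game_{S,\t}^{\tbest})$; therefore the average over the $m$ sampled coalitions is unbiased as well. By Theorem~\ref{thm:shapley3} this common value equals $\sv_i(\game_{S,\t}^{\tmarg})$, which yields the claim that the produced vector is in expectation the Shapley value of either game.

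For the approximation part, I would invoke the FPRAS guarantee, whose hypotheses are that the game be \emph{supermodular} and \emph{monotone}. These hold for $\game_{S,\t}^{\tmarg}$: supermodularity is Theorem~\ref{thm:supermodular}, and monotonicity follows because enlarging $\C$ shrinks $\A\setminus\C$ and hence can only decrease $\opt(\tuple{\A\setminus\C,G,\omega},\t)$, so $\marg_{S,\t}$ is nondecreasing. To transfer this to the estimator actually computed---which samples marginal contributions of the \emph{submodular} game $\game_{S,\t}^{\tbest}$ (Proposition~\ref{prop:submodular})---I would reuse the weight-preserving bijection $\C\mapsto\bar\C=(\A\setminus\C)\cup\{i\}$ from the proof of Theorem~\ref{thm:shapley3}: it sends each coalition containing $i$ to another such coalition with identical Shapley weight and identical marginal-contribution value, so the $\best$-sampling and the $\marg$-sampling induce the same distribution of summands and hence the same concentration behaviour. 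With $m=\Theta(|\A|^2/\varepsilon^2)$ samples per run the inner average is, simultaneously in all coordinates, within relative error $\varepsilon$ of $\sv_i$ with constant probability (using individual rationality, Corollary~\ref{thm:efir}, to guarantee $\sv_i\geq 0$, and the supermodular bounds to control the positive values), and taking the component-wise median over $\Theta(\log(1/\delta))$ independent runs amplifies this to probability at least $1-\delta$.

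The main obstacle I expect is precisely this reconciliation: the algorithm literally samples marginal contributions in the submodular best-contribution game, whereas the off-the-shelf FPRAS and its variance analysis are stated for supermodular monotone games, so the bijection of Theorem~\ref{thm:shapley3} is the essential device that lets the supermodular analysis of $\game_{S,\t}^{\tmarg}$ be carried over. A secondary delicate point is the phrase ``in expectation'': unbiasedness is genuinely a property of the inner average over the $m$ samples, and one must argue that reading the expectation claim at the level of the sampling estimator (with the median serving only the high-probability approximation) is consistent, rather than expecting the median-of-means itself to be exactly unbiased.
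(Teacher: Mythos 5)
Your proposal is correct and takes essentially the same route as the paper's own proof: recast the sampled estimator as (weighted) marginal contributions in the supermodular and monotone game $\game_{S,\t}^{\tmarg}$, invoke Theorem~4 of \citeA{Liben-Nowell2011} to get the $\varepsilon$-approximation with constant failure probability, and use the median step only to amplify this to $1-\delta$. If anything, you are more careful than the paper, whose proof simply asserts that $\xi_i(\pi,\t)$ ``can be rewritten'' as a sum of $\marg_{S,\t}$-marginal contributions and that $\game_{S,\t}^{\tmarg}$ is ``clearly monotone'': the weight-preserving bijection $\C\mapsto(\A\setminus\C)\cup\{i\}$ that you spell out (and your remark that unbiasedness pertains to the inner averages, not to the median-of-means) is exactly what is needed to make those elided steps rigorous.
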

\begin{proof}
By exploiting the same line of reasoning as in the proofs in Section~\ref{sec:mechanism} for $\p^\xi$, we can see that $\xi_i(\pi,\t)$ (at
step~6 of the algorithm in Figure~\ref{fig:M2}) can be rewritten as follows:

\vspace{-3mm}
$$\xi_i(\pi,\t)=\sum_{\C \in \hmymath{C}}\frac{(|\A|-|\C|)!(|\C|-1)!}{|\A|!}\left(\marg_{S,\t}(\C)-\marg_{S,\t}(\C\setminus\{i\})\right).$$

Now, recall that the game $\game_{S,\t}^{\tmarg}=\tuple{\A,\marg_{S,\t}}$ is supermodular by Theorem~\ref{thm:supermodular}. Moreover,
$\game_{S,\t}^{\tmarg}$ is clearly monotone. Thus, by Theorem~4 in the work by~\citeA{Liben-Nowell2011}, we derive the result in the statement,
but for the fixed value of $\delta=1/4$. Steps 7 and 8 just serve to amplify the probability~\cite<c.f.>{Liben-Nowell2011}, and to get a fully
polynomial-time randomized approximation scheme.~\hfill~
\end{proof}

As the expected utility profile coincides with the Shapley value, it is easy to see that $\hat \p^\xi$ enjoys in expectation all the properties
of $\p^\xi$ (e.g., Pareto-efficiency and envy-freeness). We thus focus in this section on those properties that can be shown to hold always,
i.e., not just in expectation.
With this respect, note that in the approach by~\citeA{Liben-Nowell2011}, a final normalization step is carried out to preserve the budget
balance. Unfortunately, this way truthfulness might be lost, hence we did not include such a normalization procedure in the above payment rule.
As a consequence, the mechanism $\hat \p^\xi$ does not guarantee budget-balancedness and Pareto-efficiency. However, we can still have
approximate counterparts for Theorem~\ref{thm:bounds} and Theorem~\ref{thm:budget}.

\begin{theorem}\label{thm:bounds-bis}
Let $\pi$ be an optimal allocation for $S$ w.r.t.~$\t$. Let $m=\Theta(|\A|^2/\varepsilon^2)$. Then, with probability $1-\delta$,
\begin{myitemize}
\item[$\bullet$] $(1+\varepsilon)\times\best_{S,\t}(\C)\geq \sum_{i\in \C}u_{i,\hat \p^\xi}(\pi,\t)\geq (1-\varepsilon)\times\marg_{S,\t}(\C)$,
    for
    each
    $\C\subseteq \A$;

\item[$\bullet$] $\varepsilon \times \val(\pi,\t) \geq \sum_{i\in \A}\hat p^\xi_i(\pi,\t)\geq
 -\varepsilon\times \val(\pi,\t)$.
\end{myitemize}
\end{theorem}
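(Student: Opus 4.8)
The plan is to derive both bounds directly from the multiplicative $\varepsilon$-approximation guaranteed by Lemma~\ref{lemma:shapley3-approx}, combined with the structural facts already established for the exact rule $\p^\xi$. The crucial observation is that Lemma~\ref{lemma:shapley3-approx} provides, with probability $1-\delta$, a single success event on which $u_{i,\hat \p^\xi}(\pi,\t)$ is an $\varepsilon$-approximation of $\sv_i:=\sv_i(\game_{S,\t}^{\tbest})=\sv_i(\game_{S,\t}^{\tmarg})$ simultaneously for every agent $i\in\A$. I would condition on this event throughout, so that all coalition-wise inequalities hold at once; this is exactly why the statement quantifies over all $\C$ under one probability $1-\delta$, and no union bound over the exponentially many coalitions is needed.

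First I would record, on the success event, the two-sided bound $(1-\varepsilon)\sv_i\leq u_{i,\hat \p^\xi}(\pi,\t)\leq(1+\varepsilon)\sv_i$ for each $i$, which follows from $|u_{i,\hat \p^\xi}(\pi,\t)-\sv_i|\leq\varepsilon\,\sv_i$ together with $\sv_i\geq 0$. The nonnegativity is precisely the individual-rationality property of the exact mechanism (Corollary~\ref{thm:efir}), since $\sv_i=u_i(\pi,\t)$ by Theorem~\ref{thm:shapley3}. Summing over any $\C\subseteq\A$ then yields $(1-\varepsilon)\sum_{i\in\C}\sv_i\leq\sum_{i\in\C}u_{i,\hat \p^\xi}(\pi,\t)\leq(1+\varepsilon)\sum_{i\in\C}\sv_i$. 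To close the first bullet I would invoke property (II) of the Shapley value: submodularity of $\best_{S,\t}$ (Proposition~\ref{prop:submodular}) gives $\sum_{i\in\C}\sv_i\leq\best_{S,\t}(\C)$ for the upper bound, while supermodularity of $\marg_{S,\t}$ (Theorem~\ref{thm:supermodular}) gives $\sum_{i\in\C}\sv_i\geq\marg_{S,\t}(\C)$ for the lower bound. Since $\varepsilon<1$ keeps both multipliers positive, these combine to the claimed inequalities.

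For the second bullet I would specialize the first to the grand coalition $\C=\A$. Here $\best_{S,\t}(\A)=\opt(S,\t)=\val(\pi,\t)$ by optimality of $\pi$, and likewise $\marg_{S,\t}(\A)=\opt(S,\t)-\opt(\tuple{\emptyset,G,\omega},\t)=\val(\pi,\t)$, so both outer terms collapse to $\val(\pi,\t)$, giving $(1-\varepsilon)\val(\pi,\t)\leq\sum_{i\in\A}u_{i,\hat \p^\xi}(\pi,\t)\leq(1+\varepsilon)\val(\pi,\t)$. Finally, using the definition of the payment at step~9 of Figure~\ref{fig:M2}, namely $\hat p_i^\xi(\pi,\t)=u_{i,\hat \p^\xi}(\pi,\t)-v_i(\pi)$, together with $\sum_{i\in\A}v_i(\pi)=\val(\pi,\t)$ (verified types coincide with true types on $\img(\pi)$), I would subtract $\val(\pi,\t)$ throughout to obtain $-\varepsilon\,\val(\pi,\t)\leq\sum_{i\in\A}\hat p_i^\xi(\pi,\t)\leq\varepsilon\,\val(\pi,\t)$, which is the stated approximate budget-balance.

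The calculations are routine; the only delicate points are ensuring that the multiplicative error never reverses an inequality---which is why I rely on $\sv_i\geq 0$ and $\varepsilon<1$---and that the single probability-$(1-\delta)$ event supports all coalition bounds at once. I expect no genuine obstacle beyond this bookkeeping, since the heavy lifting (the approximation guarantee and the super/submodularity of the two games) is already supplied by Lemma~\ref{lemma:shapley3-approx} and Theorems~\ref{thm:supermodular} and~\ref{thm:bounds}.
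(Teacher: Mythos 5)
Your proof is correct and follows essentially the same route as the paper's: it combines the $\varepsilon$-approximation guarantee of Lemma~\ref{lemma:shapley3-approx} with the exact structural bounds (the content of Theorem~\ref{thm:bounds}, i.e., sub/supermodularity plus the Shapley correspondence, and grand-coalition efficiency as in Theorem~\ref{thm:budget}), differing only in that you unfold those theorems into their ingredients rather than citing them. If anything, anchoring the $(1\pm\varepsilon)$ factors to the nonnegative Shapley values is slightly cleaner than the paper's substitution step, which bounds the exact utilities by $(1\pm\varepsilon)$ times the approximate ones and thus implicitly inverts the approximation relation.
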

\begin{proof}
Here, just observe that, in the light of Lemma~\ref{lemma:shapley3-approx} and Theorem~\ref{thm:shapley3}, for each set $\C\subseteq \A$, we
have $(1-\varepsilon)\times \sum_{i\in \C}u_{i,\hat \p^\xi}(\pi,\t)\leq \sum_{i\in \C}u_{i,\p^\xi}(\pi,\t)\leq (1+\varepsilon)\times \sum_{i\in
\C}u_{i,\hat \p^\xi}(\pi,\t)$. The result then follows by substituting such bounds in Theorem~\ref{thm:bounds} and Theorem~\ref{thm:budget},
respectively, with simple algebraic manipulations.
\end{proof}

Finally, we propose a further randomized mechanism that is able to guarantee both economic efficiency and budget-balancedness. The price to be
paid is however that truthfulness holds in {expectation} only.
% modifica per chiarire chi sia il pagamento
The mechanism is based on a payment rule $\bar \p^\xi$.

\begin{theorem}
Let $A$ be any optimal allocation algorithm. Then, the (randomized) mechanism with verification $(A,\bar \p^\xi)$ is truthful in expectation.
Moreover, (at the truthful equilibrium) it is efficient, individually-rational and budget-balanced.
\end{theorem}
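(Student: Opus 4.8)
The plan is first to pin down the rule $\bar\p^\xi$ and then to read off the four properties. I would define $\bar\p^\xi$ as the \emph{single-permutation} (random-order) counterpart of $\p^\xi$: draw one permutation $\sigma$ of $\A$ uniformly at random, and for each agent $i$ let $\C^\sigma_i\subseteq\A$ be the set of agents occurring up to and including $i$ in $\sigma$, so that $\C^\sigma_i\setminus\{i\}$ is exactly the set of predecessors of $i$. Reusing the quantities $\Delta^1_{\C,i}(\pi,\w)$ and $\Delta^2_{\C,i}(\pi,\w)$ of Figure~\ref{fig:M1} (computed, as there, with the \emph{verified} type $v_i$ inside the $\Delta^1$ term), set $\bar\xi_i(\pi,\w)=\Delta^1_{\C^\sigma_i,i}(\pi,\w)-\Delta^2_{\C^\sigma_i,i}(\pi,\w)$ and $\bar p^\xi_i(\pi,\w)=\bar\xi_i(\pi,\w)-v_i(\pi)$, so that $u_{i,\bar\p^\xi}(\pi,\w)=\bar\xi_i(\pi,\w)$. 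The point of sampling a single permutation rather than independent coalitions is that the marginal contributions along $\sigma$ telescope, which buys exact budget-balance for \emph{every} realization, while the uniform order still reproduces the Shapley weights in expectation.

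For \textbf{truthfulness in expectation} I would show that the expected bonus equals the deterministic bonus of $\p^\xi$. For a uniform $\sigma$, the probability that $\C^\sigma_i=\C$ for a fixed $\C$ with $i\in\C$ is $\frac{(|\C|-1)!(|\A|-|\C|)!}{|\A|!}$, i.e.\ exactly the Shapley coefficient, whereas for $\C\not\ni i$ one has $\Delta^1_{\C,i}=\Delta^2_{\C,i}$ and the contribution vanishes. Hence, by linearity of expectation, $\mathbb{E}_\sigma[\bar\xi_i(\pi,\w)]=\xi_i(\pi,\w)$ for every $\pi$ and $\w$, so $\mathbb{E}_\sigma[u_{i,\bar\p^\xi}(\pi,\w)]=u_{i,\p^\xi}(\pi,\w)$. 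Since the draw of $\sigma$ is independent of the agents' reports, truthfulness in expectation reduces directly to the deterministic truthfulness already established in Theorem~\ref{thm:truth}: for every $i$ and every $\d$, $\mathbb{E}_\sigma[u_{i,\bar\p^\xi}(A(\mydot{t_i}{\d_{-i}}),(\mydot{t_i}{\d_{-i}}))]=\xi_i(A(\mydot{t_i}{\d_{-i}}),(\mydot{t_i}{\d_{-i}}))\geq \xi_i(A(\d),\d)=\mathbb{E}_\sigma[u_{i,\bar\p^\xi}(A(\d),\d)]$.

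\textbf{Efficiency} is immediate, as the allocation phase is deterministic and $A$ is an optimal allocation algorithm, so $A(\t)\in\Pi^*_\t$. For \textbf{individual rationality} at the equilibrium $\t$, verified and true types coincide and $\bar\xi_i(\pi,\t)=\opt(\tuple{\C^\sigma_i,\img(\pi),\omega},\t)-\opt(\tuple{\C^\sigma_i\setminus\{i\},\img(\pi),\omega},\t)\geq 0$, since enlarging a coalition never decreases its optimal value; thus $u_{i,\bar\p^\xi}(\pi,\t)\geq 0$ holds for every realization of $\sigma$, and not merely in expectation.

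Finally, for \textbf{budget-balance} at $\t$, order the agents as $i_1,\dots,i_n$ along $\sigma$, with prefixes $\emptyset=\C_0\subset\C_1\subset\cdots\subset\C_n=\A$. As in the proof of Theorem~\ref{thm:shapley1}, at the equilibrium $\Delta^1_{\C_k,i_k}(\pi,\t)=\opt(\tuple{\C_k,\img(\pi),\omega},\t)$ and $\Delta^2_{\C_k,i_k}(\pi,\t)=\opt(\tuple{\C_{k-1},\img(\pi),\omega},\t)$, whence $\sum_k\bar\xi_{i_k}(\pi,\t)$ telescopes to $\opt(\tuple{\A,\img(\pi),\omega},\t)=\opt(\tuple{\A,G,\omega},\t)=\val(\pi,\t)$, using Corollary~\ref{cor:uno} and the optimality of $\pi$. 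Since $\sum_i v_i(\pi)=\val(\pi,\t)$, we obtain $\sum_i\bar p^\xi_i(\pi,\t)=\val(\pi,\t)-\val(\pi,\t)=0$ for every $\sigma$. The main obstacle is exactly the design of $\bar\p^\xi$ so that the two demands coexist: an unbiased estimator of the Shapley value (needed to invoke Theorem~\ref{thm:truth} in expectation) that is simultaneously exactly budget-balanced on each sample. The random-order/telescoping estimator threads this needle, whereas the independent-coalition sampling underlying $\hat\p^\xi$ would require the normalization step that destroys truthfulness.
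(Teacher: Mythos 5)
Your proposal is correct, but it proves the statement with a genuinely different construction than the paper's. The paper defines $\bar \p^\xi$ by keeping the independent-coalition, median-amplified estimator of Figure~\ref{fig:M2} and then rescaling each bonus by a normalization factor $R=\opt(\tuple{\A,\img(\pi),\omega},\v(\pi))/\bigl(\sum_{i\in \A} \hat \xi_i(\pi,\v(\pi))\bigr)$ computed from the \emph{verified} types; exact budget balance then holds on every run by construction of $R$, while truthfulness survives only in expectation, argued via $\mathbb{E}[R]=1$ and Theorem~\ref{thm:truth-approx}. You instead discard that sampling scheme entirely and use a single uniform random permutation, paying each agent her/his marginal contribution to the set of predecessors; exact budget balance comes from telescoping (plus Corollary~\ref{cor:uno} and optimality of $\pi$) rather than from normalization, and the random-order unbiasedness gives the reduction to Theorem~\ref{thm:truth} through Lemma~\ref{lemma:utility}. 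Both routes are sound, and yours is in one respect stronger than what the statement asks: since properties (A) and (B) in the proof of Theorem~\ref{thm:truth} hold coalition-by-coalition, and your coalition $\C^\sigma_i$ depends only on $\sigma$ and not on the reports, your rule is actually dominant-strategy truthful on \emph{every} realization of $\sigma$ (not merely in expectation), with individual rationality and equilibrium budget balance also holding per realization; it is also cheaper, needing only $|\A|+1$ optimal-allocation computations. What the paper's construction buys in exchange is concentration: under its $\bar\p^\xi$, realized utilities stay $\varepsilon$-close to the Shapley value with probability $1-\delta$ (Lemma~\ref{lemma:shapley3-approx}), so the fairness interpretation of Theorem~\ref{thm:shapley3} approximately survives on each run, whereas under your single-permutation rule the realized division can be very lopsided (the agent first in $\sigma$ gets her/his stand-alone optimum, the last one only a marginal contribution to the grand coalition), and fairness holds only in expectation. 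One side remark: your closing claim that the normalization step ``destroys truthfulness'' is slightly too strong---the paper's point is precisely that normalization costs only per-realization truthfulness while preserving it in expectation, which is exactly what the theorem requires.
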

\begin{proof}
The payment rule $\bar \p^\xi$ follows the steps in Figure~\ref{fig:M2}, with minor modifications at step 8 and step 9: First, at step 8,
whenever we compute the median value $\hat \xi_i(\pi,\w)$ for agent $i$, we also compute the corresponding value $\hat \xi_i(\pi,\v(\pi))$
(evaluated on the revealed types rather than on the reported ones). Then, we define a normalization factor
$R=\opt(\tuple{\A,\img(\pi),\omega},\v(\pi))/(\sum_{i\in \A} \hat \xi_i(\pi,\v(\pi)))$, so that, at step 9, $\bar p_i^\xi(\pi,\w)$ is
eventually returned as $v_i(\pi)-\hat \xi_i(\pi,\w)\times R$.

Concerning truthfulness, we can just note that the expected value of $R$ is 1. Indeed, by Lemma~\ref{lemma:shapley3-approx}, the expected value
of $\hat \xi_i(\pi,\v(\pi))$ is $\xi_i(\pi,\v(\pi))$; hence, the sum of all these values coincides with
$\opt(\tuple{\A,\img(\pi),\omega},\v(\pi))$ by the efficiency of the Shapley value (as in the proof of Theorem~\ref{thm:budget}). Thus, the
expected utility of an agent $i$ under the payment rule $\bar \p^\xi$ coincides with the (actual, i.e., not in expectation) utility of $i$
under the rule $\hat \p^\xi$. Hence, truthfulness in expectation follows by Theorem~\ref{thm:truth-approx}.
Now, we can just check that, at the truthful equilibrium, the maximum social welfare is achieved (equilibrium efficiency) and  $\sum_{i\in
\A}\bar p^\xi_i(\pi,\t)=\opt(\tuple{\A,\img(\pi),\omega},\v(\pi))-\sum_{i\in\A}\hat \xi_i(\pi,\w)\times R=0$. That is, the mechanism is
budget-balanced, too.
Finally, the mechanism $\bar \p^\xi$ is seen to be individually-rational, by exploiting the same line of reasoning as the one used for the
mechanism based on $\p^\xi$, since the corresponding proof in Section~\ref{sec:mechanism} is not affected by the sampling strategy.~\hfill~
\end{proof}

\section{Related approaches to Mechanisms with Verification}\label{sec:related}

We next review the main approaches in the literature for mechanisms with verification.
%In the literature, two basic forms of verifications have been proposed prior to our work, each one best suited for specific kinds of applications.

In the works by \citeA{Auletta2009,Penna2009,Krysta2010} and \citeA{Ferrante2009}, the individual welfare of an agent $i$, given the outcome
$\pi$ and the vector $\d$ of reported types, is assumed to be of the following form:

\vspace{-2mm}
$$u_{i,{\p}}(\pi,\d)=t_i(\pi)-\left\{
\begin{array}{ll}
0 & \mbox{if $i$ is caught lying}\\
p_i(\pi,\d) & \mbox{otherwise}\\
\end{array}
\right.$$

\noindent where $p_i(\cdot,\cdot)$ is a payment that does not depend on the vector $\t$ of the true types.

In these papers, the only information that is assumed to be available at payment time is whether the reported type $d_i$ of agent $i$ differs
or not from its actual true type $t_i$, so that the knowledge of $t_i$ is basically immaterial. On the other hand,
the specific payment scheme adopted punishes those agents that are caught lying. Therefore, while the verification process provides a smaller
amount of information than the verification process in our approach, the rules used to discourage strategic behaviors are stronger than ours
and based on punishing agents. Moreover, the above works assume that agents' misreporting is restricted only to certain kinds of lies (e.g.,
values lower than the corresponding true ones), so that a form of ``one-sided'' verification suffices.

Recently, the above model of (partial) verification has been extended by \citeA{Caragiannis2012} to a setting where an agent cheating on
her/his type will be identified with some probability that may depend on her/his true type, the reported type, or both. The payment scheme is
exactly the same as the one discussed above and, hence, verification does not exploit the knowledge of the actual true type and a punishment
approach is still used. The main novelty, in addition to the probabilistic verification, is that there is no constraint on the type that an
agent can report while cheating.

Finally, a different kind of verification model goes back to the seminal paper by \citeA{Nisan2001}, and is actually closer to our
``no-punishment'' perspective, because an agent $i$ can in principle be paid by the mechanism even if $i$ has been caught lying. Given $n$
agents, \citeA{Nisan2001} consider a vector $\mathbf{e}=(e_1,...,e_n)$ of ``observed'' agent types, which are completely known after the
verification process. Moreover, the individual utility of any agent $i$ has the form:

\vspace{-4mm}
$$u_{i,{\p}}(\pi,\d)=e_i(\pi)-p_i(\pi,\d),$$
so that the vector $\mathbf{e}$ in such a framework plays the same role as the vector of verified types in our approach.
Note that in some settings it does not make sense to assume that the utility of an agent depends on verified/observed types
\cite<cf.>{Penna2009}. In our motivating scenario, this is not the case, as the funds received by researchers are precisely determined by the
verifier (i.e., ANVUR evaluation), here playing ideally the role of providing an ``objective'' utility to the agents or, putting it in more
pragmatic terms, an utility determined by external constraints---in fact by law.

A first difference between the work by \citeA{Nisan2001} and our approach is that, in the above model, agents' misreporting is again restricted
only to certain kind of lies.
Another more subtle difference is that our verification process can be defined as a {\em good-centric} one, because at payment time everything
is known about each verified good. More precisely, if good $g$ has been verified, then we know everything about its value, that is, its actual
evaluation $t_i(g)$ according to each agent $i$, even if $g$ is not allocated to $i$ (recall that our mechanism is in fact indifferent with
respect to alternative allocations).
Instead, the setting by \citeA{Nisan2001} can be viewed as an {\em agent-centric} one, because the true type of each agent $i$ for the goods
allocated to $i$ are disclosed by the verification process.

It is easy to see that a good-centric verification provides more information, in general. On the other hand, this additional information turns
out to be the crucial feature to overcome classical impossibility results, and to meet all desirable properties at once (without using any
punishing power).
Moreover, it is worthwhile noting that good-centric and agent-centric verifications have the same power on all classes of problems where
$t_i(g)=t_j(g)$, for every good $g\in G$ and each pair of agents $i,j\in \A$ with positive valuations about $g$, that is, whenever the
application at hand is such that the value of a good is an objective property and hence, if the good is verified, it is disclosed for all
agents at once. For instance, as discussed in Section~\ref{sec:framework}, our motivating scenario about the 2012 Italian research assessment
programme is precisely of this form.

\section{Conclusion}\label{sec:conclusion}

In this paper, we have proposed and analyzed mechanisms for fair allocation problems. In classical settings, it is well-known that there is no
truthful mechanism that can be simultaneously efficient, budget-balanced, and fair. Here, motivated by a real-world problem, we have considered
mechanisms with {verification}, where payments to agents can be performed {after} the final outcome is known and verified.  In particular, we
have proposed a model of verification that is able to disclose the true values of allocated goods, in contrast to previous approaches in the
literature where partial and probabilistic verification have been considered. However, the use of this verification power is in fact quite
limited because payment rules have been designed without punishing in any way those agents that are possibly caught lying. The resulting
mechanisms have been analyzed by taking into account both algorithmic and complexity issues.

By looking at the proposed framework from an abstract perspective, one may notice that it is based on two fundamental ingredients: a base
combinatorial problem that determines feasible and optimal allocations, and a game-theoretic notion that describes what is considered fair,
with respect to agents' contributions and expectations. In the application domain addressed in the paper, it was natural to consider the
weighted matching as the basic combinatorial problem and the Shapley value as the game-theoretic solution concept. In fact, an interesting
avenue of further research is to study different instances of such an abstract framework for mechanisms with verification, where other
combinatorial problems (colorings, coverings, etc.) and different solution concepts (Nucleolus, Banzhaf index, etc.) may be more appropriate
and best describe the problem at hands.

\bibliographystyle{theapa}
\bibliography{MD}

\end{document}